\providecommand{\tabularnewline}{\\}
\theoremstyle{plain}
\newtheorem{assumption}{\protect\assumptionname}
\theoremstyle{definition}
\newtheorem*{example*}{\protect\examplename}
\theoremstyle{plain}
\newtheorem{thm}{\protect\theoremname}
\theoremstyle{remark}
\newtheorem{rem}{\protect\remarkname}
\theoremstyle{plain}
\newtheorem{cor}{\protect\corollaryname}
\theoremstyle{remark}
\newtheorem*{acknowledgement*}{\protect\acknowledgementname}
\theoremstyle{plain}
\newtheorem{lem}{\protect\lemmaname}
\providecommand{\tabularnewline}{\\}
\theoremstyle{plain}
\theoremstyle{plain}
\theoremstyle{plain}
\providecommand{\acknowledgementname}{Acknowledgement}
\providecommand{\assumptionname}{Assumption}
\providecommand{\corollaryname}{Corollary}
\providecommand{\examplename}{Example}
\providecommand{\lemmaname}{Lemma}
\providecommand{\remarkname}{Remark}
\providecommand{\theoremname}{Theorem}
\begin{document}

\global\long\def\a{\alpha}%
\global\long\def\b{\beta}%
\global\long\def\g{\gamma}%
\global\long\def\d{\delta}%
\global\long\def\e{\epsilon}%
\global\long\def\l{\lambda}%
\global\long\def\t{\theta}%
\global\long\def\s{\sigma}%
\global\long\def\G{\Gamma}%
\global\long\def\D{\Delta}%
\global\long\def\L{\Lambda}%
\global\long\def\T{\Theta}%
\global\long\def\O{\Omega}%
\global\long\def\R{\mathbb{R}}%
\global\long\def\N{\mathbb{N}}%
\global\long\def\Q{\mathbb{Q}}%
\global\long\def\I{\mathbb{I}}%
\global\long\def\P{\mathbb{P}}%
\global\long\def\E{\mathbb{E}}%
\global\long\def\B{\mathbb{\mathbb{B}}}%
\global\long\def\S{\mathbb{\mathbb{S}}}%
\global\long\def\X{{\bf X}}%
\global\long\def\cX{\mathscr{X}}%
\global\long\def\cY{\mathscr{Y}}%
\global\long\def\cA{\mathscr{A}}%
\global\long\def\cB{\mathscr{B}}%
\global\long\def\cM{\mathscr{M}}%
\global\long\def\cN{\mathcal{N}}%
\global\long\def\cG{\mathcal{G}}%
\global\long\def\cC{\mathcal{C}}%
\global\long\def\sp{\,}%
\global\long\def\es{\emptyset}%
\global\long\def\mc#1{\mathscr{#1}}%
\global\long\def\ind{\mathbf{\mathbbm1}}%
\global\long\def\indep{\perp}%
\global\long\def\any{\forall}%
\global\long\def\ex{\exists}%
\global\long\def\p{\partial}%
\global\long\def\cd{\cdot}%
\global\long\def\Dif{\nabla}%
\global\long\def\imp{\Rightarrow}%
\global\long\def\iff{\Leftrightarrow}%
\global\long\def\up{\uparrow}%
\global\long\def\down{\downarrow}%
\global\long\def\arrow{\rightarrow}%
\global\long\def\rlarrow{\leftrightarrow}%
\global\long\def\lrarrow{\leftrightarrow}%
\global\long\def\gto{\rightarrow}%
\global\long\def\abs#1{\left|#1\right|}%
\global\long\def\norm#1{\left\Vert #1\right\Vert }%
\global\long\def\rest#1{\left.#1\right|}%
\global\long\def\bracket#1#2{\left\langle #1\middle\vert#2\right\rangle }%
\global\long\def\sandvich#1#2#3{\left\langle #1\middle\vert#2\middle\vert#3\right\rangle }%
\global\long\def\turd#1{\frac{#1}{3}}%
\global\long\def\ellipsis{\textellipsis}%
\global\long\def\sand#1{\left\lceil #1\right\vert }%
\global\long\def\wich#1{\left\vert #1\right\rfloor }%
\global\long\def\sandwich#1#2#3{\left\lceil #1\middle\vert#2\middle\vert#3\right\rfloor }%
\global\long\def\abs#1{\left|#1\right|}%
\global\long\def\norm#1{\left\Vert #1\right\Vert }%
\global\long\def\rest#1{\left.#1\right|}%
\global\long\def\inprod#1{\left\langle #1\right\rangle }%
\global\long\def\ol#1{\overline{#1}}%
\global\long\def\ul#1{\underline{#1}}%
\global\long\def\td#1{\tilde{#1}}%
\global\long\def\upto{\nearrow}%
\global\long\def\downto{\searrow}%
\global\long\def\pto{\overset{p}{\longrightarrow}}%
\global\long\def\dto{\overset{d}{\longrightarrow}}%
\global\long\def\asto{\overset{a.s.}{\longrightarrow}}%
\global\long\def\ind{\mathbbm{\bf 1}}%

\title{Using Monotonicity Restrictions to\\
 Identify Models with Partially Latent Covariates}
\author{Minji Bang\thanks{Bang: mbang@sas.upenn.edu}, Wayne Gao\thanks{Gao: waynegao@upenn.edu},
Andrew Postlewaite\thanks{Postlewaite: apostlew@upenn.edu}, and Holger
Sieg\thanks{Sieg (corresponding author): holgers@econ.upenn.edu }\medskip{}
 \\
 Department of Economics, University of Pennsylvania\thanks{The Ronald O. Perelman Center for Political Science and Economics,
133 South 36th Street, Philadelphia, PA 19104, USA.}}
\maketitle
\begin{abstract}
\noindent This paper develops a new method for identifying econometric
models with partially latent covariates. Such data structures arise
in industrial organization and labor economics settings where data
are collected using an input-based sampling strategy, e.g., if the
sampling unit is one of multiple labor input factors. We show that
the latent covariates can be nonparametrically identified, if they
are functions of a common shock satisfying some plausible monotonicity
assumptions. With the latent covariates identified, semiparametric
estimation of the outcome equation proceeds within a standard IV framework
that accounts for the endogeneity of the covariates. We illustrate
the usefulness of our method using a new application that focuses
on the production functions of pharmacies. We find that differences
in technology between chains and independent pharmacies may partially
explain the observed transformation of the industry structure.

\medskip{}

\noindent Keywords: production functions, latent variables, endogeneity,
semiparametric estimation, monotonicity
\end{abstract}
\newpage{}

\section{{\normalsize{}\label{sec:Intro}}Introduction}

This paper develops a new method for identifying econometric models
with partially latent covariates. We show that a broad class of econometric
models that play a large role in industrial organization and labor
economics can be nonparametrically identified if the partially latent
covariate variables satisfy certain monotonicity assumptions. Examples
that fall into this class of models are a variety of different production,
skill formation, and achievement functions.\footnote{Other potential applications in applied microeconomics are discussed
in the conclusions.}

It is often plausible to assume that the different inputs or covariates
are functions of a common unobserved random shock, and we consider
models in which it is natural to impose strict monotonicity in this
common shock.\footnote{Note that this assumption is commonly used, for example, in the production
function literature as discussed by \citet*{olley1996dynamics}. In
particular, this assumption does not require that inputs are ``optimally''
chosen by competitive firms and is consistent with a broad class of
strategic and non-strategic models that may describe the agents' behavior.

} The monotonicity assumption imposes functional dependencies on the
explanatory variables as pointed out in the context of production
function estimation by \citet*{ackerberg2015identification}. The
key insight of this paper is that we can leverage the functional dependence
between inputs to achieve identification within a partially latent
covariate framework. In that sense, we turn the functional dependence
problem on its head to impute the partially latent covariates. Broadly
speaking, our imputation is in the spirit of matching algorithms \citep{Rubin-73}.
In contrast to traditional matching algorithms, we propose to match
on the expected dependent variable to impute missing covariates.

The partially latent data structure that we study in this paper arises
quite naturally in certain matched employer-employee data sets which
contain information collected from individuals as well as information
collected from businesses or establishments. In the past decades economists
made enormous strides in finding and using matched employer-employee
data, which have provided a new empirical basis for the study of workplace
organization, compensation design, mobility, and production. In this
paper we focus on cross-sectional employer-employee data sets which
are commonly used in applied microeconomics and econometrics. Following
\citet*{Abowd-Kramaz-99}, an important dimension that distinguishes
matched employer-employee data sets is the sampling design. Some sampling
designs focus on the firm, while others use the employee as the primary
unit of analysis. In this paper we focus on the later type of sampling
designs that often only sample one employee or a small random set
of employees of the firm.

This lack of sampling of all employees of the firm may not be important
from a perspective of labor economics, which largely focuses on job
mobility and wage determination. However, it is more problematic if
the focus is on productivity measurement within a production function
framework. If the survey does not sample all employees in a firm then
some important inputs are latent from the perspective of the econometrician.
As a consequence, we call this type of sampling design an ``input-based
sampling''\ approach, since the sampling unit is one of the multiple
labor input factors. We provide a formal definition of this data structure
in this paper. The main application that we study considers a production
team in which team members perform different tasks. In our data set
only one member from each team is interviewed to provide the data.
It is plausible that this person knows the team's output, but does
not have complete information about the other team members' input
choices. By randomly sampling the teams we elicit information from
all different types of team members and hence input factors.

Once we have identified the latent inputs, the estimation of the outcome
or production function can proceed using standard semiparametric methods
developed in the econometric literature. One key issue here is that
the common shock creates an endogeneity problem.\footnote{In the context of production function estimation this endogeneity
problem is referred to as the transmission bias problem since inputs
are correlated with unobserved productivity shocks \citep{marschak1944random}.} We show that we can combine our identification results with a variety
of linear, nonlinear, and semiparametric estimation strategies. In
that sense our approach is flexible and allows researchers to make
appropriate functional form assumptions if necessary. We consider
the scenario in which researchers only have access to a single cross-section
of data and rely on instrumental variables for estimation.\footnote{Hence we cannot address this endogeneity problem using panel data
with fixed effects, first advocated by \citet*{hoch1955estimation,hoch1962estimation}
and \citet*{mundlak1961empirical,mundlak1963specification}. We can
also not use more sophisticated timing assumptions within a control
function or IV frameworks as discussed, for example, in \citet*{olley1996dynamics}
and \citet*{blundell1998initial,blundell2000gmm}, \citet*{levinsohn2003estimating},
and \citet*{ackerberg2015identification}. We discuss the extension
of our methods to this scenario in the conclusions.} For example, production function estimation relies on the assumption
that differences in local input prices give rise to differences in
input choices that are uncorrelated with productivity shocks at the
local level.\footnote{Hence, local input prices can serve as valid instruments for endogenous
input choices. See \citet{griliches1998} for a critical discussion
of the assumption that these input prices are exogenous. Similarly,
skill formation and achievement function estimation requires the choice
of suitable instruments for parental inputs. For a more general discussion
of the issues encountered in estimating achievement and skill formation
functions see, among others, \citet*{Todd-Wolpin-03} and \citet*{Cunha-Heckman-Schennach-10}.}

Estimation proceeds in two steps. In finite samples, we first nonparametrically
estimate the latent input functions. Plugging the estimators into
our outcome equation, we can estimate the parameters of this function
using a standard IV estimator based on the observed and imputed inputs.
The second econometric challenge then arises for the need to account
for the sequential nature of the estimator when deriving the correct
rate of convergence and computing asymptotic standard errors. To illustrate
this we consider the standard log-linear, Cobb-Douglas model. We propose
two different estimators and provide both high-level and lower-level
conditions under which these semiparametric two-step estimators are
consistent and asymptotically normal at the usual parametric rate
of convergence. The technical proofs are based on the general econometric
theory on semiparametric two-step estimation as in \citet{newey1994asymptotic},
\citet*{newey1994large}, and \citet*{chen2003estimation}. Finally,
we show that using the conditional expectation of outcomes as the
dependent variable produces efficiency gains relative to the more
traditional estimator that uses the observed output instead.

To evaluate the performance of our estimator we conduct a variety
of Monte Carlo experiments. Our findings suggest that our estimators
are well-behaved in samples that are similar in size to those observed
in our applications discussed below. We also study the behavior of
our estimator when we pool observations across markets as is often
necessary for many practical applications.

The type of sampling design that gives rise to a partial latent data
structure arises in administrative employer-employee data sets that
are designed by statistical agencies. It is also common among profession-based
surveys that focus on one narrowly-defined occupation. These surveys
tend to be national in scope since professions have generally a national
market, the characteristics of which the survey organizers want to
know. In our main application, we use data from the National Pharmacist
Workforce Survey (NPWS) in 2000 which not only collects data for each
pharmacist that is surveyed but also a limited amount of information
at the store level including output.

We implement our new estimator using the NPWS and study differences
in the marginal product of labor inputs in pharmacies. \citet*{goldin2016most} have forcefully
argued that this is one of the most egalitarian and family-friendly
professions in which females face little discrimination in the workforce.
One potential explanation of this fact has been related to the rise
of chains that have replaced independent pharmacies in many local
markets. Here we estimate a team production function that distinguishes
between managerial and non-managerial certified pharmacists. We can,
therefore, test the hypothesis of whether managers have higher marginal products
in chains than in independent pharmacies. We find that we can reject
the null hypothesis that independent pharmacies and chains have the
same technology. Estimates for independent pharmacies are somewhat
noisy but do not suggest that there is a large difference between
managers and regular employees. Estimates for chains suggest that
managers have higher marginal products than regular employees. We thus conclude
that chains seem to improve the effectiveness of managers which may
partially explain why they have become the dominant firm type in this
industry.

This paper relates to the line of literature on production function
estimation by proposing a method to handle the problem of partially
latent inputs. Our identification strategy is based on strict monotonicity
and the consequent invertibility in a scalar unobservable, a feature
also leveraged by \citet*{olley1996dynamics} and \citet*{levinsohn2003estimating}.
They essentially use an auxiliary variable together with an input
to control for the unobserved productivity shock: investment with
capital in \citet*{olley1996dynamics} and intermediate inputs with
capital in \citet*{levinsohn2003estimating}. In comparison, we use
the output with the observed input to pin down the productivity shock.
We emphasize that the feature of functional dependence between input
variables, which was pointed out by \citet*{ackerberg2015identification}
as an underlying problem in \citet*{olley1996dynamics} and \citet*{levinsohn2003estimating},
in fact, forms the basis of our imputation strategy. While most of
these papers focus on value-added production functions, there is also
much interest in estimating gross output production functions. \citet*{Doraszelski-Jaumandreu-13}
propose a solution to the transmission bias problem that also relies
on observed firm-level variation in prices. In particular, they show
that by explicitly imposing the parameter restrictions between the
production function and the demand for a flexible input and by using
this price variation, they can recover the gross output production
function. \citet*{Gandhi-Navarro-Rivers-20} provide an alternative
identification strategy to estimate gross output production functions
that works well in short panels. Beyond these conceptual linkages,
our paper has a different focus from these papers cited above: they
focus more on the dynamic nature of capital inputs, while we focus
on the problem of partially latent inputs. Moreover, the estimation
of production functions is just one of many applications of our general
identification result. This paper shows that our methods may be even
more useful for applications outside of IO where these data structures
are more prevalent as we discuss below.

Also, we should point out that this paper is both conceptually and
technically different from previous work on missing data in linear
regression and, more generally, GMM estimation settings, such as \citet*{rubin1976inference},
\citet*{little1992regression}, \citet*{robins1994estimation}, \citet*{wooldridge2007inverse},
\citet*{graham2011efficiency}, \citet*{chaudhuri2016gmm}, \citet*{abrevaya2017gmm}
and \citet*{mcdonough2017missing}. This line of literature usually
exploits two types of conditions: first, observations with no missing
data occur with positive probability, and second, data are ``missing
at random'' (potentially with conditioning). Neither condition is
satisfied in our setting: every observation contains missing data,
and missing can be correlated with other observables as well as the
unobserved productivity shock. Instead, we rely on monotonicity in
a scalar unobservable shock to identify and impute the latent input.

Similarly, our monotonicity conditions also differentiate our paper
from the econometric literature on data combination as surveyed by
\citet*{ridder2007econometrics}, which mostly involves conditional
independence assumptions. In particular, our paper shares a similar
flavor with the ``moment-matching'' approach in this literature,
such as \citet*{angrist1992effect}, \citet*{ridder2007econometrics},
and \citet*{buchinsky2022estimation}: these papers utilize conditional
independence to achieve ``moment matching'', while we exploit monotonicity
to do so. Hence, our proposed method may also be useful as a complementary
data combination method for scenarios where our monotonicity conditions
are interpretable and justifiable.

Broadly speaking, our imputation is in the spirit of matching algorithms
\citep{Rubin-73}. In contrast to traditional matching algorithms,
we propose to match on the expected dependent variable to impute missing
covariates. Hence, we do not apply the matching approach within the
standard potential outcome framework of program evaluation based on
the potential outcome model developed by \citet*{Fisher-35}.\footnote{For a discussion of the properties of matching estimators in that
context see, among others, \citet*{Rosenbaum-Rubin-83}, \citet*{Heckman-Ichimura-Smith-Todd-98},
and \citet*{Abadie-Imbens-06}.}

The rest of the paper is organized as follows. Section \ref{sec:Identification}
presents our main identification result. Section \ref{sec:Estimation}
discusses the problems associated with estimation. Section \ref{sec:MonteCarlo}
reports the results from a Monte Carlo Study. Section \ref{sec:FirstAppl}
introduces our application focusing on the production functions of
pharmacies. It discusses our data sources and presents our main empirical
findings. Section \ref{sec:Conclusion} provides a discussion of other
potential applications and presents our conclusions.

\section{\label{sec:Identification} Identification of Partially Latent Covariates}

\subsection{\label{subsec:ModelSetup}Model Setup}

Consider the following cross-sectional econometric model:
\begin{equation}
y_{i}=F\left(x_{i1},x_{i2},u_{i}\right)+\e_{i}\label{eq:Gen_model}
\end{equation}
where $i=1,...,N$ indexes a generic observation from a \textit{random
sample}, $y_{i}$ denotes an observable scalar-valued output variable,
and $x_{i}:=\left(x_{i1},x_{i2}\right)$ denotes a two-dimensional
vector of covariates.\footnote{See Corollary \ref{cor:more_inputs} for the extension of our identification
method to settings with covariates of higher dimensions.} Both $u_{i}$ and $\e_{i}$ are scalar-valued unobserved errors,
with $u_{i}$ taken to be a structural error (such as a productivity
shock) that is endogenous with respect to $x_{i}$, while $\e_{i}$
is a ``measurement error'' that is assumed to be exogenous. The
unknown outcome function $F$ may be either parametric or nonparametric.

First, we need to define what we mean by \textit{partially latent
covariates}, a key data structure that we seek to handle in this paper.
\begin{assumption}[Partially Latent Covariates]
\label{assu:OneLatent} For each observation $i$, the econometrician
either observes $x_{i1}$ or $x_{i2}$, but never both.
\end{assumption}
Essentially, one of the two covariates $\left(x_{i1},x_{i2}\right)$
is latent in each observation in the data. In the following, it will
be convenient to write
\[
d_{i}:=\begin{cases}
1, & \text{ if }x_{i1}\text{ is observed and }x_{i2}\text{ is latent},\\
2, & \text{ if }x_{i2}\text{ is observed and }x_{i1}\text{ is latent},
\end{cases}
\]
so that effectively $\left(d_{i},\left(2-d_{i}\right)x_{i1},\left(d_{i}-1\right)x_{i2}\right)$
is observed for $i$.
\begin{example*}[Team Production Functions]
Such data structures, for example, arise when the data is collected
at the individual input level while we are interested in some firm
or team level output variable that also depends on other individual
inputs who are not surveyed in the data. Our main application focuses
on identifying and estimating team production functions.\footnote{We use the term ``team production function'' since we largely focus
on different types of labor inputs and abstract from capital or other
inputs that may be subject to dynamics and adjustment costs.} For simplicity, let us assume a log-linear Cobb-Douglas specification:
\begin{equation}
y_{i}=\a_{0}+\a_{1}x_{i1}+\a_{2}x_{i2}+u_{i}+\e_{i},\label{eq:Model_CD}
\end{equation}
where $y_{i}$ is the logarithm of the team's output, $x_{i1}$ is
the logarithm of hours worked by the first team member (a manager),
and $x_{i2}$ is the logarithm of hours worked by the second team
member (an employee).\footnote{The team production concept is also related to the concept of task
production functions, which are surveyed by \citet*{Acemoglu-Autor-11}.
\citet*{Haanwinckel-18} estimates a task production function in which
each team member specializes in a single task.} The data structure described in Assumption \ref{assu:OneLatent}
arises if the researcher interviews only one member, and not both
members of the team. We also refer to this technique as an ``\textit{input-based
sampling}'' approach. It is plausible that the interviewed team member
knows the team's output, but does not have complete information about
the other team member's input choices. Hence, the surveyed person
provides the output level, $y_{i}$, and her own hours worked, $x_{i1}$
\textit{or} $x_{i2}$, leading to the problem of partially latent
inputs as defined in Assumption \ref{assu:OneLatent}.
\end{example*}
The next assumption imposes a monotonicity condition on the outcome
function.
\begin{assumption}[Monotonicity of the Outcome Function]
 \label{assu:MonoF} $F$ is nondecreasing in all its arguments and
is strictly increasing in at least one of its arguments.
\end{assumption}
This assumption essentially states that the inputs $\left(x_{i1},x_{i2}\right)$
and the productivity shock $u_{i}$ have nonnegative effects on the
output variable $y_{i}$. Moreover, the monotonicity is strict in,
at least, one of the three arguments $x_{i1},x_{i2}$, and $u_{i}$.
The restriction of monotonicity with respect to $\left(x_{i1},x_{i2}\right)$
is substantive: it requires that the inputs cannot negatively affect
the output variable holding everything else fixed. In contrast, the
restriction of monotonicity with respect to $u_{i}$ is largely innocuous
given the interpretation of $u_{i}$ as a (weakly) ``positive shock''.
\begin{example*}[Team Production Functions Continued]
Assumption \ref{assu:MonoF} is satisfied in the linear additive
model in equation \eqref{eq:Model_CD} provided that the model satisfies
the additional parameter restriction that $\a_{1},\a_{2}\geq0$.
\end{example*}
Next, we turn to the assumptions on the unobserved errors $u_{i}$
and $\e_{i}$ in equation \eqref{eq:Gen_model}. First, we assume
that the endogenous inputs $x_{i}$ are strictly monotone functions
of the scalar productivity shock $u_{i}$, potentially after conditioning
on a set of observed covariates $z_{i}$, that may affect the inputs
$x_{i}$.
\begin{assumption}[Strict Monotonicity of the Covariates in the Structural Shock]
 \label{assu:input_mono} There exists a vector of additional observed
covariates $z_{i}$ and two deterministic, real-valued functions $h_{1},h_{2}$,
such that
\begin{align*}
x_{i1}=h_{1}\left(u_{i},z_{i}\right),\quad & x_{i2}=h_{2}\left(u_{i},z_{i}\right),
\end{align*}
with both $h_{1}\left(u_{i},z_{i}\right)$ and $h_{2}\left(u_{i},z_{i}\right)$
strictly increasing in their first argument $u_{i}$ for every realization
of $z_{i}$.
\end{assumption}
We note that the functions $h_{1}$ and $h_{2}$ can be unknown and
nonparametric. The only requirement here is that, after conditioning
on $z_{i}$, the covariates $x_{i1}$ and $x_{i2}$ can be written
as deterministic monotone functions of the error $u_{i}$. Such a
``monotonicity-in-a-scalar-error'' assumption has been widely used
in the econometric literature on identification analysis.\footnote{See \citet{matzkin2007nonparametric} for a general survey, and see
\citet*{ackerberg2015identification} in the specific context of production
function identification, which fits into our working example \eqref{eq:Model_CD}.}
\begin{example*}[Team Production Functions Continued]
In the IO literature, $u_{i}$ is typically interpreted as a ``productivity
shock'' that enters into the choices of inputs $x_{i}$. In contrast,
$\epsilon_{i}$ captures either a measurement error or an ``ex-post
productivity shock'' that does not affect inputs, since it is not
observed to the firms when input choices are made. Assumption \ref{assu:input_mono}
requires that the input choice functions are strictly increasing in
the ``productivity shock'' $u_{i}$, conditional on any additional
observed covariates $z_{i}$ that may influence input choices, as
suggested, for example, by \citet*{olley1996dynamics}\footnote{We only consider the cross-sectional setting here, but the identification
arguments below for the imputation of partially latent inputs can
also be applied to a panel data setting with dynamic input choices,
such as in \citet*{olley1996dynamics}, where current capital input
$k_{it}$ is a function of last-period capital stock $k_{i,t-1}$
and current-period productivity shock $u_{it}$. In that setting,
lagged input $k_{i,t-1}$ will need to be part of the control variables
$z_{i,t}$.} and others. \footnote{This is a standard assumption that underlies most, if not all, existing
approaches of production function estimation in one way or another:
see, for example, \citet*{griliches1998} and \citet*{ackerberg2015identification}
for reviews of the relevant literature.} For concreteness, we take $z_{i}$ to be local wages for managers
and employees.
\end{example*}
Assumption \ref{assu:input_mono} can be further micro-founded in
a variety of settings based on efficiency or equilibrium criteria,
which we elaborate in Subsection \ref{subsec:DiscussA3}, given that
Assumption \ref{assu:input_mono} is the key assumption in this paper.
\begin{assumption}[Exogeneity of the Measurement Error]
 \label{assu:eps_error} $\mathbb{E}\left[\left.\epsilon_{i}\right|x_{i},z_{i},d_{i}\right]=0$.
\end{assumption}
Note that, under Assumption \ref{assu:input_mono}, conditioning on
$\left(x_{i},z_{i},d_{i}\right)$ is equivalent to conditioning on
$\left(u_{i},z_{i},d_{i}\right)$. In the production function estimation
literature without the partial latency problem, $\mathbb{E}\left[\left.\epsilon_{i}\right|u_{i},z_{i}\right]=0$
is a standard assumption imposed on $\e_{i}$. In our current setting,
we are requiring that $\e_{i}$ is furthermore exogenous with respect
to the partial latency indicator variable $d_{i}$.

It is worth noting that this paper is both conceptually and technically
different from previous work on missing data in linear regression
and, more generally, GMM estimation settings, such as \citet*{rubin1976inference},
\citet*{little1992regression}, \citet*{robins1994estimation}, \citet*{wooldridge2007inverse},
\citet*{graham2011efficiency}, \citet*{chaudhuri2016gmm}, \citet*{abrevaya2017gmm}
and \citet*{mcdonough2017missing}. This line of literature usually
exploits two types of assumptions to handle missing values: first,
observations with no missing data occur with positive probability,
and second, data are ``missing at random (MAR)'': the indicator
for missingness is exogenous to or independent of certain observable
covariates or constructed conditioning variables. Neither condition
is satisfied in our setting: here every observation contains ``missing
values'', and the partial latency indicator $d_{i}$ is allowed to
be correlated with other observables as well as the unobserved productivity
shock. Instead, we will be relying on monotonicity conditions to identify
and impute the latent input.

Specifically, Assumption \ref{assu:eps_error} here is simply requiring
that $\epsilon_{i}$ is a ``measurement error'' term that is exogenous
with respect to the observables and consequently the productivity
shock $u_{i}$, but does not impose any restriction on the dependence
structure between the partial latency indicator $d_{i}$ and other
structural components of the model $\left(u_{i},x_{i},z_{i}\right)$.

That said, we do require the following very mild condition on the
variable $d_{i}$.
\begin{assumption}[Nondegenerate Latency Probabilities]
 \label{assu:Prob01} $0<\P\left\{ \rest{d_{i}=1}u_{i},z_{i}\right\} <1$.
\end{assumption}
Assumption \ref{assu:Prob01} guarantees that conditioning on realizations
of $\left(u_{i},z_{i}\right)$ we will observe $x_{i1}$, and $x_{i2}$,
with strict positive probabilities. Again, this assumption is much
weaker than ``missing-at-random'' assumptions, which would usually
require that $\P\left\{ \rest{d_{i}=1}u_{i},z_{i}\right\} $ is constant
in $u_{i}$, $z_{i}$, or some other variables. In contrast, here
we do not impose any restrictions on the dependence of $\P\left\{ \rest{d_{i}=1}u_{i},z_{i}\right\} $
on $\left(u_{i},z_{i}\right)$ beyond non-degeneracy.

As discussed in the introduction, allowing for $d_{i}$ to depend
on $\left(u_{i},z_{i}\right)$, and thus $x_{i}$, is also a feature
that differentiates our monotonicity-based method from the ``moment-matching
method'' in the literature on data combination, such as \citep{angrist1992effect,ridder2007econometrics,buchinsky2022estimation},
which are based on conditional independence of $d_{i}$.

\subsection{\label{subsec:MainID}Main Result}

We are now ready to present our main identification result.
\begin{thm}
\label{thm:ID}Under Assumptions \ref{assu:OneLatent}-\ref{assu:Prob01},
for each observation $i$, the latent input, $x_{i2}$ if $d_{i}=1$
or $x_{i1}$ if $d_{i}=2$, is point identified.
\end{thm}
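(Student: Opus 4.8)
The plan is to turn the functional dependence of both covariates on the common shock into an identifying device: after conditioning on the \emph{single} observed covariate together with $z_{i}$, the scalar shock $u_{i}$ is pinned down, so the conditional mean of $y_{i}$ becomes a deterministic, strictly increasing function of $u_{i}$; the two observable conditional-mean functions (one for each value of $d_{i}$) then share this common index in $u_{i}$ and can be matched to recover the latent input. First I would define the composite function $G(u,z):=F(h_{1}(u,z),h_{2}(u,z),u)$ and show that $G(\cdot,z)$ is strictly increasing for every fixed $z$. This combines Assumption \ref{assu:MonoF} with Assumption \ref{assu:input_mono}: raising $u$ weakly increases each argument of $F$ (the first two through the monotone $h_{1},h_{2}$, the third directly), and increases $F$ \emph{strictly} through whichever argument $F$ is strictly increasing in---directly if that argument is $u$, and via the strict monotonicity of $h_{1}$ (resp.\ $h_{2}$) in $u$ if it is $x_{1}$ (resp.\ $x_{2}$). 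A one-line chain of inequalities covers all three cases.

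Next I would identify the two conditional-mean functions $g_{1}(x_{1},z):=\E[y_{i}\mid x_{i1}=x_{1},z_{i}=z,d_{i}=1]$ and $g_{2}(x_{2},z):=\E[y_{i}\mid x_{i2}=x_{2},z_{i}=z,d_{i}=2]$, both of which are directly estimable from the observed data. The key step is that, under Assumption \ref{assu:input_mono}, conditioning on $(x_{i1},z_{i},d_{i}=1)$ is equivalent to conditioning on $(u_{i},z_{i},d_{i}=1)$ by invertibility of $h_{1}(\cdot,z)$ in $u$; hence the latent covariate $x_{i2}=h_{2}(u_{i},z_{i})$ is itself a deterministic function of the conditioning set, and the measurement error drops out by Assumption \ref{assu:eps_error}. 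This yields $g_{1}(x_{1},z)=G(h_{1}^{-1}(x_{1};z),z)$ and, symmetrically, $g_{2}(x_{2},z)=G(h_{2}^{-1}(x_{2};z),z)$. Because the inverses $h_{1}^{-1}(\cdot;z)$ and $h_{2}^{-1}(\cdot;z)$ are strictly increasing and $G(\cdot,z)$ is strictly increasing, each of $g_{1}(\cdot,z)$ and $g_{2}(\cdot,z)$ is strictly increasing and therefore invertible in its first argument.

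Finally I would perform the imputation. For an observation with $d_{i}=1$, set $v_{i}:=g_{1}(x_{i1},z_{i})$, which by the previous step equals $G(u_{i},z_{i})$ for the realized shock $u_{i}$, and define the imputed value $g_{2}^{-1}(v_{i};z_{i})$. Since $g_{2}(x_{2},z_{i})=G(h_{2}^{-1}(x_{2};z_{i}),z_{i})$ with $G(\cdot,z_{i})$ strictly monotone, solving $g_{2}(x_{2},z_{i})=v_{i}=G(u_{i},z_{i})$ forces $h_{2}^{-1}(x_{2};z_{i})=u_{i}$, i.e.\ $x_{2}=h_{2}(u_{i},z_{i})=x_{i2}$, so $g_{2}^{-1}(v_{i};z_{i})=x_{i2}$ exactly; the case $d_{i}=2$ is symmetric. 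The one remaining requirement is that the inverse be well-defined at $v_{i}$, i.e.\ that the true value $h_{2}(u_{i},z_{i})$ lie in the support of $x_{i2}$ conditional on $(z_{i},d_{i}=2)$ so that $v_{i}$ is in the range of $g_{2}(\cdot,z_{i})$. This is exactly what Assumption \ref{assu:Prob01} secures, since $\P\{d_{i}=2\mid u_{i},z_{i}\}=1-\P\{d_{i}=1\mid u_{i},z_{i}\}>0$ guarantees the needed support overlap at every realized $(u_{i},z_{i})$.

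I expect the crux to be establishing strict monotonicity---and hence invertibility---of the composite index, since Assumption \ref{assu:MonoF} delivers only weak monotonicity of $F$ and strictness in just one (unspecified) argument; propagating that single strict inequality through the composition with the strictly monotone $h_{1},h_{2}$ to obtain a genuinely invertible $G(\cdot,z)$ is what makes the matching step exact rather than merely set-valued. The identification of $g_{1},g_{2}$ and the support-overlap argument are, by contrast, routine once the common-index structure is in place.
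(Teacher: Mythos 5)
Your proposal is correct and takes essentially the same route as the paper's own argument: your $G$, $g_{1}$, $g_{2}$ are exactly the paper's $\overline{F}$, $\gamma_{1}$, $\gamma_{2}$ (equations \eqref{eq:F_bar}, \eqref{eq:gamma1}, \eqref{eq:gamma1_detail}), and your matching step reproduces the imputation formula \eqref{eq:ID_latent}. The only cosmetic differences are that you obtain strict monotonicity of the composite index by an order-theoretic chain of inequalities where the paper computes the derivative in \eqref{eq:gamma1_deri} (your version avoids assuming differentiability), and you spell out the support-overlap role of Assumption \ref{assu:Prob01} slightly more explicitly than the paper's footnote.
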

Given that the identification strategy underlying the Theorem \ref{thm:ID}
is the key novelty of this paper, we prove Theorem \ref{thm:ID} in
the main text below.
\begin{proof}
The starting point of our identification strategy is the reduced form
of our model with the measurement error term:
\begin{align}
y_{i} & =\ol F\left(u_{i},z_{i}\right)+\epsilon_{i}\label{eq:qi_qu_e}
\end{align}
where
\begin{equation}
\ol F\left(u_{i},z_{i}\right):=F\left(h_{1}\left(u_{i},z_{i}\right),h_{2}\left(u_{i},z_{i}\right),u_{i}\right).\label{eq:F_bar}
\end{equation}
Note that $\ol F\left(u_{i},z_{i}\right)$ is strictly increasing
in $u_{i}$ given Assumptions \ref{assu:MonoF} and \ref{assu:input_mono}.

Now, define $\gamma_{1}\left(c\right)$ as the expected output of
firm $i$ conditional on the event that $x_{i1}$ is observed ($d_{i}=1$)
to have a given value of $c_{1}$, i.e.,
\begin{equation}
\gamma_{1}\left(c_{1};z\right):=\mathbb{E}\left[\left.y_{i}\right\vert z_{i}=z,\ d_{i}=1,\ x_{i1}=c_{1}\right].\label{eq:gamma1}
\end{equation}
Note that $\gamma_{1}$ is directly identified from the data given
Assumptions \ref{assu:OneLatent} and \ref{assu:Prob01}.\footnote{Assumption \ref{assu:Prob01} ensures that the conditioning event
occurs with strictly positive probability.}

Taking a closer look at $\gamma_{1}$, we have, by equation \eqref{eq:qi_qu_e},
Assumption \ref{assu:input_mono}, and Assumption \ref{assu:eps_error},
\begin{align}
\gamma_{1}\left(c_{1};z\right) & =\mathbb{E}\left[\left.\overline{F}\left(u_{i},z_{i}\right)+\epsilon_{i}\right\vert z_{i}=z,\,d_{i}=1,\,h_{1}\left(u_{i},z_{i}\right)=c_{1}\right]\nonumber \\
 & =\overline{F}\left(h_{1}^{-1}\left(c_{1};z\right),z\right)+\E\left[\rest{\e_{i}}z_{i}=z,\,d_{i}=1,\,u_{i}=h_{1}^{-1}\left(c_{1};z\right)\right]\nonumber \\
 & =F\left(c_{1},\ h_{2}\left(h_{1}^{-1}\left(c_{1};z\right),z\right),\ h_{1}^{-1}\left(c_{1};z\right)\right).\label{eq:gamma1_detail}
\end{align}
By conditioning on $z_{i}$ and a particular \textit{observed value}
of $x_{i1}=c_{1}$, we are effectively conditioning on the \textit{unobserved}
productivity shock $u_{i}$. Aggregating across observations allows
us to average out the measurement errors and obtain a quantity that
is implicitly a function of the productivity shock $u_{i}=h_{1}^{-1}\left(c_{1};\ z_{i}\right)$.

Next, observe that $\gamma_{1}\left(c_{1};z\right)$ is strictly increasing
in $c_{1}$, since
\begin{align}
\frac{\partial}{\partial c_{1}}\gamma_{1}\left(c_{1};z\right) & =F_{1}+F_{2}\cd\frac{\partial}{\partial u}h_{2}\left(h_{1}^{-1}\left(c_{1}\right),z\right)\frac{1}{\frac{\partial}{\partial u}h_{1}\left(h_{1}^{-1}\left(c_{1}\right),z\right)}+F_{3}\cd\frac{1}{\frac{\partial}{\partial u}h_{1}\left(h_{1}^{-1}\left(c_{1}\right),z\right)}\nonumber \\
 & >0\label{eq:gamma1_deri}
\end{align}
since $\frac{\partial}{\partial u}h_{1},\frac{\partial}{\partial u}h_{2}>0$
by Assumption \ref{assu:input_mono}, and the partial derivatives
$F_{1},F_{2},F_{3}$ of $F$ are all nonnegative with, at least, one
being strictly positive by Assumption 2. This guarantees that the
inverse function $\g_{1}^{-1}\left(\cd;z\right)$ exists.\footnote{The partial derivatives $F_{1},F_{2},F_{3}$ of $F$ are evaluated
at $\left(c_{1},\ h_{2}\left(h_{1}^{-1}\left(c_{1};z\right),z_{i}\right),\ h_{1}^{-1}\left(c_{1};z\right)\right)$.}

Similarly, we can define
\[
\gamma_{2}\left(c_{2};z\right):=\mathbb{E}\left[\left.y_{i}\right|z_{i}=z,\ d_{i}=2,\ x_{i2}=c_{2}\right]
\]
which is strictly increasing in $c_{2}$ and thus invertible with
respect to its first argument.

Now, the key idea behind our identification strategy is to consider
the event that
\begin{equation}
\gamma_{1}\left(c_{1};z\right)=\gamma_{2}\left(c_{2};z\right)\label{eq:gamma_equal}
\end{equation}
for some $c_{1},c_{2}$, and $z$. By \eqref{eq:gamma1_detail}, equation
\eqref{eq:gamma_equal} holds if and only if
\begin{equation}
h_{1}^{-1}\left(c_{1};\ z\right)=h_{2}^{-1}\left(c_{2};\ z\right)=u\label{eq:h_inv_u}
\end{equation}
for some value of the productivity shock $u$, which is furthermore
equivalent to
\begin{equation}
c_{1}=h_{1}\left(u;z\right),\quad c_{2}=h_{2}\left(u;z\right)\label{eq:c_and_u}
\end{equation}
for some $u$.\footnote{To see more clearly why \eqref{eq:c_and_u} is true, consider WLOG
the possibility that $c_{1}=h_{1}\left(u_{1};z\right)$ and $c_{2}=h_{2}\left(u_{2},z\right)$
for some $u_{1}>u_{2}$. Then by \eqref{eq:gamma1_detail} we have
\begin{align*}
\gamma_{1}\left(c_{1};z\right) & =F\left(h_{1}\left(u_{1},z\right),\ h_{2}\left(u_{1},z\right),\ u_{1}\right)\\
 & >F\left(h_{1}\left(u_{2},z\right),\ h_{2}\left(u_{2},z\right),\ u_{2}\right)=\g\left(c_{2};z\right).
\end{align*}
}

Now, consider any observation $i$ with covariates $x_{i1}$ and $x_{i2}$.
Recall that $x_{i1}=h_{1}\left(u_{i};z_{i}\right)$ and $x_{i2}=h_{2}\left(u_{i};z_{i}\right)$
by Assumption \ref{assu:input_mono}. Then, by the equivalence of
\eqref{eq:c_and_u} and \eqref{eq:gamma_equal} established above,
we deduce that
\[
\gamma_{1}\left(x_{i1};z_{i}\right)=\gamma_{2}\left(x_{i2};z_{i}\right).
\]
Hence, if $x_{i1}$ is observed while $x_{i2}$ is latent, then the
latent $x_{i2}$ can be identified via a composition of $\g_{2}^{-1}$
and $\g_{1}$ as
\[
x_{i2}=\gamma_{2}^{-1}\left(\gamma_{1}\left(x_{i1};z_{i}\right);z_{i}\right),
\]
where on the right-hand side $x_{i1}$ is observed and $\gamma_{1},\gamma_{2}$
are nonparametrically identified functions. Similarly, $x_{i1}$ can
be identified when $x_{i2}$ is observed.

In summary, we can identify the partially latent covariate by
\begin{align}
x_{i2} & =\gamma_{2}^{-1}\left(\gamma_{1}\left(x_{i1};z_{i}\right);z_{i}\right),\quad\text{for }d_{i}=1,\nonumber \\
x_{i1} & =\gamma_{1}^{-1}\left(\gamma_{2}\left(x_{i2};z_{i}\right);z_{i}\right),\quad\text{for }d_{i}=2.\label{eq:ID_latent}
\end{align}
\end{proof}
It should be pointed out that \eqref{eq:ID_latent} is an explicit
representation of the ``functional dependence" between the two input
variables as in \citet{ackerberg2015identification}: $x_{i1}$ is
a deterministic function of $x_{i2}$, and vice versa, conditional
on instruments $z_{i}$. While functional dependence was a concern
in the context of \citet{olley1996dynamics}, \citet{levinsohn2003estimating}
and \citet{ackerberg2015identification}, here we are exactly leveraging
the functional dependence between input variables to solve the partially
latency problem.
\begin{rem}[More Than Two Inputs]
\label{rem:more_inputs} We have thus far focused on the case with
two inputs. It is straightforward to see that our model, assumptions,
and the main identification result can be easily generalized to the
case with inputs of an arbitrary finite dimension $D$. This result
is summarized by the following Corollary.
\end{rem}
\begin{cor}
\label{cor:more_inputs} Consider the model $y_{i}:=F\left(x_{i1},...,x_{iD},u_{i}\right)+\e_{i}$
along with Assumptions \ref{assu:MonoF} and \ref{assu:eps_error}
unchanged, and the following modifications of other assumptions:
\begin{itemize}
\item[(i)] Assumption \ref{assu:OneLatent}: for each $i$ at least one out
of $D$ inputs is observed.
\item[(ii)] Assumption \ref{assu:input_mono}: all $D$ inputs are strictly increasing
in $u_{i}$ given $z_{i}$.
\item[(iii)] Assumption \ref{assu:Prob01}: all $D$ inputs are observed with
strictly positive probabilities.
\end{itemize}
Then the latent inputs are identified.
\end{cor}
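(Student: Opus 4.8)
The plan is to apply the two-covariate argument of Theorem~\ref{thm:ID} coordinate by coordinate, exploiting the fact that the structural shock $u_{i}$ remains scalar. The key observation is that a single observed covariate already pins down $u_{i}$ (conditional on $z_{i}$), after which every latent covariate is recovered through its own monotone map $h_{d}$. To accommodate the richer latency structure, I would first introduce a set-valued observation pattern $O_{i}\subseteq\{1,\dots,D\}$ collecting the coordinates observed for observation $i$; this set is nonempty by the modified Assumption~\ref{assu:OneLatent}. For each coordinate $d$, define
\[
\gamma_{d}\left(c;z\right):=\mathbb{E}\left[\left.y_{i}\right|z_{i}=z,\ d\in O_{i},\ x_{id}=c\right],
\]
which is well-defined because the conditioning event has strictly positive probability by the modified Assumption~\ref{assu:Prob01}.

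Second, I would reproduce the computation behind equation~\eqref{eq:gamma1_detail} for each $\gamma_{d}$. Under the modified Assumption~\ref{assu:input_mono}, conditioning on $z_{i}=z$ and $x_{id}=c$ fixes the scalar shock at $u_{i}=h_{d}^{-1}\left(c;z\right)$, so that $\gamma_{d}\left(c;z\right)=\overline{F}\left(h_{d}^{-1}\left(c;z\right),z\right)+\mathbb{E}\left[\epsilon_{i}\mid\cdots\right]$, where the residual term vanishes by Assumption~\ref{assu:eps_error}: conditioning on $\left(x_{i},z_{i},O_{i}\right)$ is equivalent to conditioning on $\left(u_{i},z_{i},O_{i}\right)$ since the full vector $x_{i}$ is a deterministic function of $\left(u_{i},z_{i}\right)$, and the tower property then kills the average of $\epsilon_{i}$ over the observation patterns containing $d$. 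The reduced form $\overline{F}\left(u,z\right)=F\left(h_{1}\left(u,z\right),\dots,h_{D}\left(u,z\right),u\right)$ is strictly increasing in $u$ by Assumption~\ref{assu:MonoF} and the modified Assumption~\ref{assu:input_mono}, exactly as in the two-covariate case, so each $\gamma_{d}\left(\cdot;z\right)$ is strictly increasing and hence invertible.

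The imputation then follows immediately. For any observation $i$, fix any observed coordinate $d\in O_{i}$ and any latent coordinate $d'\notin O_{i}$. Since both $\gamma_{d}\left(x_{id};z_{i}\right)$ and $\gamma_{d'}\left(x_{id'};z_{i}\right)$ equal the common value $\overline{F}\left(u_{i},z_{i}\right)$, inverting $\gamma_{d'}$ yields the direct analogue of equation~\eqref{eq:ID_latent},
\[
x_{id'}=\gamma_{d'}^{-1}\left(\gamma_{d}\left(x_{id};z_{i}\right);z_{i}\right),
\]
whose right-hand side involves only the observed covariate $x_{id}$ and the nonparametrically identified functions $\gamma_{d},\gamma_{d'}^{-1}$. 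Hence every latent covariate is point identified.

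I expect the main obstacle to be bookkeeping around the enriched latency structure rather than any new conceptual difficulty. Once several coordinates may be observed simultaneously, one must check that each conditioning event $\{d\in O_{i}\}$ retains positive probability and that the exogeneity in Assumption~\ref{assu:eps_error} survives conditioning on the whole pattern $O_{i}$; both are guaranteed by the modified assumptions. The one point that genuinely requires care is confirming that all $D$ functions $\gamma_{d}$ share the single output scale $\overline{F}\left(u,z\right)$, since it is precisely this common scale that makes matching across different coordinates legitimate and collapses the $D$-dimensional problem back to the scalar inversion already solved in Theorem~\ref{thm:ID}.
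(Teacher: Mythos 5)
Your proof is correct and takes essentially the same route the paper intends: the paper offers no separate proof of Corollary~\ref{cor:more_inputs}, stating only (Remark~\ref{rem:more_inputs}) that the constructive argument for Theorem~\ref{thm:ID} extends straightforwardly, and your coordinate-by-coordinate construction of the functions $\gamma_{d}$, their common scale $\overline{F}\left(u,z\right)$, and the imputation $x_{id'}=\gamma_{d'}^{-1}\left(\gamma_{d}\left(x_{id};z_{i}\right);z_{i}\right)$ is exactly that extension. Your added bookkeeping with the observation pattern $O_{i}$ and the tower-property check for the exogeneity condition spells out correctly the details the paper leaves implicit.
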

\begin{rem}
If Condition (i) in Corollary \ref{cor:more_inputs} is strengthened
so that \emph{more than one} inputs are simultaneously observed in
a given observation (with positive probability), then we would also
obtain over-identification, and the input-monotonicity restriction
in Assumption \ref{assu:input_mono} becomes empirically refutable.
Alternatively, with two or more inputs simultaneously observed, we
would be able to accommodate higher dimensions of unobserved shocks,
provided that the dimension of the unobserved shock $u_{i}$ is strictly
smaller than the dimension of the covariates $D$. Since such an extension
would be more involved and move farther away from the applications
we consider in this paper, we leave it as a direction for future research.
\end{rem}

\subsection{\label{subsec:DiscussA3}Discussion of Assumption \ref{assu:input_mono}}

The monotonicity of input choices in the unobserved productivity shock
(Assumption \ref{assu:input_mono}) can be further micro-founded in
a variety of settings based on efficiency or equilibrium criteria.

On a general level, one may use the theory of monotone comparative
statics to obtain more primitive conditions for input monotonicity,
which typically involve various forms of supermodularity (or increasing-difference)
conditions: see \citet*{topkis1998supermodularity} and \citet*{vives2000oligopoly}
for general treatments on this topic. Essentially, in settings where
input choices are made by a single decision maker, we need the objective
function to be supermodular in input variables $x$ and the productivity
shock $u$. In settings where the input choices are generated as equilibria
of a strategic game between two decision makers, strategic complementarity
is typically required (so that the game is supermodular) to establish
monotonicity. For games with strategic substitutes, we further need
a condition to ensure that the extent of strategic substitutes is
not overwhelming: see, for example, \citet*{roy2010monotone}.

We now provide some concrete examples that illustrate how Assumption
\ref{assu:input_mono} can be satisfied: as the optimal choice of
a single decision maker, as the Nash equilibrium of a game of strategic
complements, as the Nash equilibrium of a game of strategic substitutes.

\subsubsection*{Optimal Choice of a Single Decision Maker}

Suppose that firms optimally choose inputs to maximize profits under
the Cobb-Douglas production function with a constant output price.
Formally, each firm $i$ solves the problem
\[
\max_{X_{i1},X_{i2}}e^{\alpha_{0}+u_{i}}X_{i1}^{\alpha_{1}}X_{i2}^{\alpha_{2}}-Z_{i1}X_{i1}-Z_{i2}X_{i2},
\]
where $X_{i1},X_{i2}$ are inputs in its original scale (with $x_{i1},x_{i2}$
denoting the logarithm of $X_{i1},X_{i2}$) and $Z_{i1},Z_{i2}$ are
input prices in its original scale (with $z_{i1},z_{i2}$ denoting
the logarithm of $Z_{i1},Z_{i2}$). Then the input choice functions
$h_{1}$ and $h_{2}$ are characterized by the relevant first-order
conditions and have simple closed-form solutions that are linear,
increasing in $u_{i}$, and decreasing in $z_{i}$.\footnote{We note that the problem of partially latent inputs is less relevant
in that case since the ``reduced-form'' regression of the observed
inputs on the exogenous wages $w_{i}$ will indirectly recover the
production function parameters $\alpha$. This corresponds to the
``duality approach'' to production function estimation as discussed
in detail in \citet*{griliches1998}. However, an attractive feature
of our approach is also that we can test whether inputs are optimally
chosen. If we reject the null hypothesis that inputs are optimal,
our estimator is still feasible while duality estimators are not.} In particular, we have:
\[
h_{1}\left(u,z\right)=\frac{\alpha_{0}+\left(1-\alpha_{2}\right)\log\alpha_{1}+\alpha_{2}\log\alpha_{2}-\left(1-\a_{2}\right)z_{1}-\left(1-\alpha_{2}\right)z_{2}+u}{1-\a_{1}-\a_{2}},
\]
satisfying Assumption \ref{assu:input_mono}. See Appendix \ref{subsec:CD_Optimal}
for more details of the derivation.

\subsubsection*{Nash Equilibrium in a Game of Strategic Complements}

Assumption \ref{assu:input_mono} is also satisfied if the inputs
are Nash equilibrium choices of two partners, each of whom solves
the following optimization problem: given $u,z$ and the other partner's
choice $X_{2}$, partner 1 solves
\[
\max_{X_{1}}\pi_{1}\left(X,u;Z\right):=\l_{1}\left(F\left(X,u\right)-Z_{1}X_{1}-Z_{2}X_{2}\right)+Z_{1}X_{1}-\frac{1}{2}c_{1}X_{1}^{2},
\]
where $F\left(X,u\right):=e^{u+\a_{0}}X_{1}^{\a_{1}}X_{2}^{\a_{2}}$
is the Cobb-Douglas production function. The term $F\left(X,u\right)-Z_{1}X_{1}-Z_{2}X_{2}$
is the profit of the firm (as in the single decision maker's problem
described above), and $\l_{1}\in\left(0,1\right)$ is a positive share
of firm profit distributed to partner 1 as ``dividends''. Moreover,
in addition to the ``dividends'', partner 1 receives her wage income
$Z_{1}X_{1}$. Finally, $\frac{1}{2}c_{1}X_{1}^{2}$ captures partner
1's quadratic private cost of input $X_{1}$ with $c_{1}>0$.

Similar, partner 2 solves
\[
\max_{X_{2}}\pi_{2}\left(X,u;Z\right):=\l_{2}\left(F\left(X,u\right)-Z_{1}X_{1}-Z_{2}X_{2}\right)+Z_{2}X_{2}-\frac{1}{2}c_{2}X_{2}^{2},
\]
with $\l_{2}\in\left(0,1\right)$ and $c_{2}>0$.

This is a (supermodular) game of strategic complementarity since
\begin{equation}
\Dif_{X_{1}X_{2}}\pi_{j}=\l_{j}\Dif_{X_{1}X_{2}}F=\l_{j}\a_{1}\a_{2}e^{\a_{0}+u}X_{1}^{\a_{1}-1}X_{2}^{\a_{2}-1}>0.\label{eq:eg_strat_comp}
\end{equation}
Furthermore, the payoff functions feature increasing differences between
the productivity shock $u$ and the input $x_{j}$:
\begin{equation}
\Dif_{uX_{j}}\pi_{j}=\l_{j}\Dif_{uX_{j}}F=\a_{j}e^{\a_{0}+u}X_{j}^{\a_{j}-1}X_{k}^{\a_{k}}>0.\label{eq:eg_inc_diff}
\end{equation}

With these conditions and the theory on monotone comparative statics,
say, \citet*{milgrom1990rationalizability}, we can show that the
unique Nash equilibrium $X^{*}\left(u,Z\right)$ of this game is strictly
increasing in $u$, thus satisfying Assumption \ref{assu:input_mono}.
See Appendix \ref{subsec:CD_StratComp} for the detailed proof.

\subsubsection*{Nash Equilibrium in a Game of Strategic Substitutes}

Consider an example with a married couple and two parental households,
$j=1,2$, whose wealth levels are respectively $Z_{1}$ and $Z_{2}$,
which is based on \citet*{Bergstrom-Blume-Varian-86}. Parents are
altruistic toward their married offspring but not toward that offspring's
spouse. Parental household $j$ has utility
\[
v_{j}\left(X,Z\right)=\log\left(Z_{j}-X_{j}\right)+u\log\left(X_{1}+X_{2}\right)
\]
where $X_{j}$ is the married couple's gift from parental household
$j$ and $u$ is the probability that both parental households think
the children's marriage will endure. This leads to a noncooperative
game between the two parental households since the incentive for either
household to gift the offspring couple diminishes as the other parental
household gives more. Formally, parental household $j$'s marginal
return from $X_{j}$
\[
\frac{\p}{\p X_{j}}v_{j}\left(X,Z\right)=u\cd\frac{1}{X_{j}+X_{k}}-\frac{1}{Z_{j}-X_{j}}
\]
is decreasing in the other parental household's return $X_{k}$, and
thus the best response of household $j$
\[
BR_{j}\left(X_{k}\right)=\frac{u}{1+u}Z_{j}-\frac{1}{1+u}X_{k}
\]
is also decreasing in $X_{k}$. Hence, this is a game of strategic
substitutes.

There is a unique Nash equilibrium of this game between the two parental
households, given by
\[
X_{1}^{\ast}=\frac{(1+u)\;Z_{1}-Z_{2}}{2+u},\quad X_{2}^{\ast}=\frac{(1+u)\;Z_{2}-Z_{1}}{2+u}
\]
for any $u$ and wealth levels $Z_{1},Z_{2}$, provided that the two
households that are not ``too'' different in wealth so that interior
solutions in $X_{1}^{*},X_{2}^{*}$ obtain.\footnote{Formally, for the interiority we also need to require that $u$ is
strictly positive and bounded away from 0 in this stylized example.
One can perturb the example in various ways to ensure interiority
without such a restriction, but at the cost of additional complications.} Both $X_{1}^{*}$ and $X_{2}^{*}$ are strictly increasing in the
shock $u$, and hence the outcome is strictly increasing in $u$.

\section{\label{sec:Estimation} Estimation of the Outcome Function}

Once the partially latent covariates $x_{i1},x_{i2}$ are identified
and imputed, researchers may use them to identify and estimate functions
or parameters defined based on $\left(x_{i1},x_{i2}\right)$. Note
that researchers may use $\left(x_{i1},x_{i2}\right)$ for completely
different purpose from the identification and estimation of the outcome
function $F$. Hence, our proposed method in Section \ref{sec:Identification}
can be thought as a monotonicity-based method for data imputation
or data combination.

That said, how to identify and estimate the outcome function $F$
is a natural question to ask, given that our method for the identification
of partially latent inputs is built upon assumptions on $F$. Hence
we focus on the identification and estimation of $F$ in this section.

With the latent inputs identified in Theorem \ref{thm:ID}, we are
back to equation \eqref{eq:Gen_model}
\[
y_{i}=F\left(x_{i1},x_{i2},u_{i}\right)+\e_{i},
\]
but now we can effectively regard both $x_{i1}$ and $x_{i2}$ as
being known, at least for identification purposes. Researchers may
proceed to identify the production function $F$ under appropriate
application-specific assumptions as in a ``standard'' setting without
the partial latency problem. Hence, the identification of $F$ or
other objects of interest is largely ``separable'' from the partial
latency problem, which is the key problem we are solving in this paper.

That said, we note that the \textit{estimation} of the latent inputs
will affect the \textit{estimation} of (the parameters of) $F$ based
on ``plugged-in'' latent input estimates. This section provides
a discussion on how to identify and estimate $F$, and analyzes the
impact of the ``first-stage'' estimation of latent inputs on the
final estimator of $F$.

While we cannot cover all relevant specifications of $F$, in this
section we will provide both identification and estimation results
for the linear case, which is arguably the workhorse model, or at
least a natural benchmark, in various empirical applications. We also
discuss how our method can be applied under more general settings.

\subsection{\label{subsec:Asymptotics}The Linear Model}

In this subsection we focus on the linear parametric specification
of $F$ as in \eqref{eq:Model_CD}:
\[
y_{i}=\alpha_{0}+\alpha_{1}x_{i1}+\alpha_{2}x_{i2}+u_{i}+\e_{i},
\]
where our goal is to identify and estimate the unknown parameters
$\alpha:=\left(\alpha_{0},\alpha_{1},\alpha_{2}\right)$.

\subsubsection{Identification}

In the presence of the endogeneity problem between $x_{i}:=\left(x_{i1},x_{i2}\right)$
and $u_{i}$, we will need instrumental variables for the identification
of $\a$. For illustrational simplicity, we impose the following standard
IV assumption.
\begin{assumption}[Instrumental Variables]
 \label{assu:IVz} Write $z_{i}:=\left(z_{i1},z_{i2}\right)$, $\overline{z}_{i}:=\left(1,z_{i1},z_{i2}\right)^{^{\prime}}$
and $\ol x_{i}=\left(1,x_{i1},x_{i2}\right)^{^{\prime}}$. Assume
\begin{itemize}
\item[(i)] Relevance: $\Sigma_{zx}:=\mathbb{E}\left[\overline{z}_{i}\overline{x}_{i}^{^{\prime}}\right]$
has full rank.
\item[(ii)] Exogeneity: $\mathbb{E}\left[\left.u_{i}\right|z_{i}\right]=0$.
\end{itemize}
\end{assumption}
\begin{rem}
Here we are using the same ``$z_{i}$'', i.e., the observable determinants
of input choices in Assumption \ref{assu:input_mono}, as the instrumental
variables for identification and estimation of the outcome equation,
because such ``$z_{i}$'' are naturally relevant (as it enters into
the input choice function $h$) and excluded (as it does not enter
into the outcome equation $F$) given our model specification.

However, it should be pointed out that we can also include additional
instrumental variables beyond the ``$z_{i}$'' in the input choice
functions. These additional IVs from outside our model can help with
the identification and estimation of the outcome function as well,
as long as the relevance and exogeneity conditions in Assumption \ref{assu:IVz}
are satisfied.

This point also illustrates the ``separability'' between the problem
of identifying partially latent variables in Section \ref{sec:Identification}
and the problem of identifying the outcome function in Section \ref{sec:Estimation}.
Correspondingly, we view our framework and results in Section \ref{sec:Identification}
as the main contribution of this paper, while the current Section
\ref{sec:Estimation} is just one example of how our main results
in \ref{sec:Identification} can be used.
\end{rem}
\begin{cor}[Identification of Linear Parameters]
Under Assumptions \eqref{assu:OneLatent}-\eqref{assu:IVz}, $\a$
is point identified.
\end{cor}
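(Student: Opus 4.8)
The plan is to reduce this to a textbook just-identified linear IV argument, with Theorem \ref{thm:ID} supplying the single nonstandard ingredient: that the full covariate vector $\ol x_i$ is available as an \emph{identified} object despite partial latency. First I would invoke Theorem \ref{thm:ID}. Because the latent covariate ($x_{i2}$ when $d_i=1$, or $x_{i1}$ when $d_i=2$) is point identified observation-by-observation as an explicit function of the nonparametrically identified $\g_1,\g_2$ via \eqref{eq:ID_latent}, the vector $\ol x_i=(1,x_{i1},x_{i2})'$ is identified jointly with the observed $\ol z_i$ and $y_i$. Consequently both population moments $\Sigma_{zx}=\E[\ol z_i\ol x_i']$ and $\E[\ol z_i y_i]$ are identified from the data.

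Next I would derive the IV moment equality. Writing the linear model as $y_i=\ol x_i'\a+u_i+\e_i$, the claim is $\E[\ol z_i(y_i-\ol x_i'\a)]=\E[\ol z_i(u_i+\e_i)]=0$. For the $u_i$ term, $\ol z_i$ is a measurable function of $z_i$, so by the law of iterated expectations and Assumption \ref{assu:IVz}(ii), $\E[\ol z_i u_i]=\E[\ol z_i\,\E[u_i\mid z_i]]=0$. For the $\e_i$ term, $\ol z_i$ is measurable with respect to the conditioning information $(x_i,z_i,d_i)$, so Assumption \ref{assu:eps_error} gives $\E[\ol z_i\e_i]=\E[\ol z_i\,\E[\e_i\mid x_i,z_i,d_i]]=0$. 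Combining yields the population normal equations $\Sigma_{zx}\a=\E[\ol z_i y_i]$.

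Finally, since $\ol z_i$ and $\ol x_i$ are both three-dimensional, $\Sigma_{zx}$ is a square $3\times 3$ matrix, and Assumption \ref{assu:IVz}(i) makes it full rank, hence invertible. Solving gives the closed form $\a=\Sigma_{zx}^{-1}\,\E[\ol z_i y_i]$, and because the right-hand side consists entirely of identified moments, $\a$ is point identified.

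I do not expect a genuine obstacle here: once the covariates are treated as known, this is the standard just-identified IV computation. The only step requiring care—and the reason this is stated as a corollary rather than dismissed as immediate—is justifying that $\Sigma_{zx}$ is an identified quantity at all, since it involves the latent component of $\ol x_i$; this is precisely where Theorem \ref{thm:ID} does the work, licensing the treatment of $\ol x_i$ as known when forming the second-stage moments. I would also be careful to verify full rank of the square matrix (via Assumption \ref{assu:IVz}(i)) as the exact just-identification condition, rather than assuming invertibility gratuitously, and to note that the two exogeneity conditions enter separately—one controlling $u_i$ and one controlling $\e_i$—each through the measurability of $\ol z_i$ with respect to the relevant conditioning set.
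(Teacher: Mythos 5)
Your proposal is correct and matches the paper's (implicit) argument exactly: the paper presents this corollary as the immediate consequence of Theorem \ref{thm:ID} making $\ol x_i$ effectively observable, after which the just-identified IV computation $\a=\Sigma_{zx}^{-1}\E\left[\ol z_i y_i\right]$ goes through, with Assumption \ref{assu:eps_error} handling $\e_i$ and Assumption \ref{assu:IVz} handling $u_i$ and the rank condition. Your added care in separating the two exogeneity conditions and in noting that identification of $\Sigma_{zx}$ is precisely where Theorem \ref{thm:ID} does the work is exactly the right reading of why this is stated as a corollary.
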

\begin{example*}[Team Production Function Continued]
In the context of our working example, here we are essentially following
a strategy discussed in \citet*{griliches1998} and assume that we
have access to some instrumental variables (such as local wages) that
affect input choices.
\end{example*}

\subsubsection{\label{subsec:EstSimple}Estimation Procedure}

We now turn to the more interesting problem of estimation, propose
semiparametric estimators for $\a$, and characterize their asymptotic
distributions.

We first describe our proposed estimator. Since the identification
of latent inputs via equation \eqref{eq:ID_latent} is constructive,
it suggests a natural estimation procedure:

\medskip{}

\textbf{Step 1} (Nonparametric Regression): obtain an estimator $\hat{\gamma}_{1}$
of $\gamma_{1}$ by nonparametrically regressing $y_{i}$ on $x_{i1}$
and $z_{i}$, among firms with $d_{i}=1$, i.e., those with $x_{i1}$
observed. Similarly, obtain an estimator $\hat{\gamma}_{2}$ of $\gamma_{2}$.

\textbf{Step 2} (Imputation): impute latent inputs by plugging the
nonparametric estimators $\hat{\gamma}_{1},\hat{\gamma}_{2}$ into
equation \eqref{eq:ID_latent}, i.e.,
\begin{align*}
\hat{x}_{i2} & =\hat{\gamma}_{2}^{-1}\left(\hat{\gamma}_{1}\left(x_{i1};z_{i}\right);z_{i}\right),\quad\text{for }d_{i}=1,\\
\hat{x}_{i1} & =\hat{\gamma}_{1}^{-1}\left(\hat{\gamma}_{2}\left(x_{i2};z_{i}\right);z_{i}\right),\quad\text{for }d_{i}=2.
\end{align*}

\textbf{Step 3} (IV Regression): run either of the following two IV
regressions:
\begin{itemize}
\item[(3a)] Estimate equation \eqref{eq:Model_CD} with $z_{i}$ as IVs for $x_{i}$,
i.e.,
\[
\hat{\alpha}:=\left(\frac{1}{n}\sum_{i=1}^{n}\overline{z}{}_{i}\tilde{x}_{i}\right)^{-1}\left(\frac{1}{n}\sum_{i=1}^{n}\overline{z}{}_{i}y_{i}\right)
\]
where $\overline{z}_{i}:=\left(1,z_{i1},z_{i2}\right)^{^{\prime}}$
and
\[
\tilde{x}_{i}:=\begin{cases}
\left(1,x_{i1},\hat{x}_{i2}\right)^{^{\prime}}, & \text{ for }d_{i}=1,\\
\left(1,\hat{x}_{i1},x_{i2}\right)^{^{\prime}}, & \text{ for }d_{i}=2.
\end{cases}
\]
\item[(3b)] Estimate the following equation
\begin{equation}
\overline{y}_{i}=\alpha_{0}+\alpha_{1}x_{i1}+\alpha_{2}x_{i2}+u_{i},\label{eq:CD_exp}
\end{equation}
with the outcome variable
\[
\overline{y}_{i}:=\overline{F}\left(u_{i},z_{i}\right)=\gamma_{1}\left(x_{i1},z_{i}\right)=\gamma_{2}\left(x_{i2},z_{i}\right),
\]
replaced by its plug-in estimator
\[
\tilde{y}_{i}:=\begin{cases}
\hat{\gamma}_{1}\left(x_{i1},z_{i}\right), & \text{ for }d_{i}=1,\\
\hat{\gamma}_{2}\left(x_{i2},z_{i}\right), & \text{ for }d_{i}=2,
\end{cases}
\]
Again using $z_{i}$ as IVs, estimate $\a$ by
\[
\hat{\alpha}^{*}:=\left(\frac{1}{n}\sum_{i=1}^{n}\overline{z}{}_{i}\tilde{x}_{i}\right)^{-1}\left(\frac{1}{n}\sum_{i=1}^{n}\overline{z}_{i}\tilde{y}_{i}\right).
\]
\end{itemize}

\subsubsection{\label{subsec:AsymTheory}Asymptotic Theory}

We now establish the consistency and the asymptotic normality of $\hat{\alpha}$
and $\hat{\a}^{*}$ under the following regularity assumptions.
\begin{assumption}[Finite Error Variances]
\label{assu:ErrVar} $\mathbb{E}\left[\left.u_{i}^{2}\right|z_{i}\right]<\infty$
and $\mathbb{E}\left[\left.\epsilon_{i}^{2}\right|x_{i},z_{i},d_{i}\right]<\infty$.
\end{assumption}
\begin{assumption}[Strong Monotonicity]
 \label{assu:StrongMono} The first derivative of $\gamma_{k}\left(\cdot,z\right)$
is uniformly bounded away from zero, i.e., for any $c,z$,
\[
\frac{\partial}{\partial c}\gamma_{k}\left(c;z\right)>\underline{c}>0.
\]
\end{assumption}
In view of equation \eqref{eq:gamma1_deri}, Assumption \ref{assu:StrongMono}
is satisfied if either $\alpha_{1},\alpha_{2}>0$ or $\frac{\partial}{\partial u}h_{1},\frac{\partial}{\partial u}h_{1}$
are uniformly bounded above by a finite constant. Assumption \ref{assu:StrongMono}
is needed to ensure that $\hat{\gamma}_{k}^{-1}\left(\cdot,z\right)$
is a good estimator of $\gamma_{k}^{-1}\left(\cdot,z\right)$ provided
that the first-stage nonparametric estimator $\hat{\gamma}_{k}$ is
consistent for $\gamma_{k}$.
\begin{assumption}[First-Stage Estimation]
 \label{assu:SP_RegCon}~
\begin{itemize}
\item[(i)] Donsker property: $\gamma_{1},\gamma_{2}\in\Gamma$, which is a Donsker
class of functions with uniformly bounded first and second derivatives,
and $\hat{\gamma}_{1},\hat{\gamma}_{2}\in\Gamma$ with probability
approaching 1.
\item[(ii)] First-stage convergence: $\left\Vert \hat{\gamma}_{k}-\gamma_{k}\right\Vert _{\infty}=o_{p}\left(N^{-\frac{1}{4}}\right)$
for $k=1,2$.
\end{itemize}
\end{assumption}
Assumption \ref{assu:SP_RegCon}(i) is guaranteed if $\gamma_{1},\gamma_{2}$
satisfy certain smoothness condition, e.g. $\gamma_{k}$ possesses
uniformly bounded derivatives up to a sufficiently high order. Assumption
\ref{assu:SP_RegCon}(ii) requires that the first-stage estimator
converges at a rate faster than $N^{-1/4}$, which is satisfied under
various types of nonparametric estimators under certain regularity
conditions. This is required so that the final estimator of the production
function parameters $\alpha$ can converge at the standard parametric
($\sqrt{N}$) rate despite the slower first-step nonparametric estimation
of $\gamma_{1},\gamma_{2}$.

Finally, we state another technical assumption that captures how the
first-stage nonparametric estimation of $\gamma_{1},\gamma_{2}$ influences
the final semiparametric estimators $\hat{\alpha}$ and $\hat{\a}^{*}$
through the functional derivatives of the residual function with respect
to $\gamma_{1},\gamma_{2}$. Assumption \ref{assu:AsymLin} below,
based on \citet*{newey1994asymptotic}, provides an explicit formula
for the asymptotic variances of $\hat{\alpha}$ and $\hat{\a}^{*}$
that does not depend on the particular forms of first-stage nonparametric
estimators.

Formally, write $w_{i}:=\left(y_{i},x_{i},z_{i},d_{i}\right)$, $\gamma:=\left(\gamma_{1},\gamma_{2}\right)$,
and suppress the conditioning variables $z_{i}$ in $\g$ for notational
simplicity. Define the residual functions
\begin{align*}
g\left(w_{i},\tilde{\alpha},\tilde{\gamma}\right) & :=\begin{cases}
\overline{z}_{i}\left(y_{i}-\tilde{\alpha}_{0}-\tilde{\alpha}_{1}x_{i1}-\tilde{\alpha}_{2}\tilde{\gamma}_{2}^{-1}\left(\tilde{\gamma}_{1}\left(x_{i1}\right)\right)\right) & \text{for }d_{i}=1,\\
\overline{z}_{i}\left(y_{i}-\tilde{\alpha}_{0}-\tilde{\alpha}_{2}x_{i2}-\tilde{\alpha}_{1}\tilde{\gamma}_{1}^{-1}\left(\tilde{\gamma}_{2}\left(x_{i2}\right)\right)\right) & \text{for }d_{i}=2.
\end{cases}\\
g^{\ast}\left(w_{i},\tilde{\alpha},\tilde{\gamma}\right) & :=\begin{cases}
\overline{z}_{i}\left(\tilde{\gamma}_{1}\left(x_{i1}\right)-\tilde{\alpha}_{0}-\tilde{\alpha}_{1}x_{i1}-\tilde{\alpha}_{2}\tilde{\gamma}_{2}^{-1}\left(\tilde{\gamma}_{1}\left(x_{i1}\right)\right)\right) & \text{for }d_{i}=1,\\
\overline{z}_{i}\left(\tilde{\gamma}_{2}\left(x_{i2}\right)-\tilde{\alpha}_{0}-\tilde{\alpha}_{2}x_{i2}-\tilde{\alpha}_{1}\tilde{\gamma}_{1}^{-1}\left(\tilde{\gamma}_{2}\left(x_{i2}\right)\right)\right) & \text{for }d_{i}=2
\end{cases}
\end{align*}
for generic $\tilde{\alpha},\tilde{\gamma}$, and
\[
g\left(w_{i},\tilde{\gamma}\right):=g\left(w_{i},\alpha,\tilde{\gamma}\right),\quad g^{\ast}\left(w_{i},\tilde{\gamma}\right):=g^{\ast}\left(w_{i},\alpha,\tilde{\gamma}\right),
\]
at the true $\alpha$. Define the pathwise functional derivative of
$g$ at $\gamma$ along direction $\tau$ by
\[
G\left(w_{i},\tau\right):=\lim_{t\rightarrow0}\frac{1}{t}\left[g\left(w_{i},\gamma+t\tau\right)-g\left(w_{i},\gamma\right)\right],
\]
and similarly define $G^{\ast}\left(z_{i},\tau\right)$ for $g^{\ast}$.
Then, following \citet*{newey1994asymptotic}, the influence function
can be derived analytically\footnote{See the proof of Theorem \ref{thm:AsymDist} for details on the calculation.}
based on $G$ and takes the form of $\varphi\left(w_{i}\right)\overline{z}_{i}\epsilon_{i}$
with
\begin{align*}
\varphi\left(w_{i}\right) & :=-\left(\lambda_{1}\frac{\alpha_{2}}{\gamma_{2}^{^{\prime}}}-\lambda_{2}\frac{\alpha_{1}}{\gamma_{1}^{^{\prime}}}\right)\left(\mathbf{\mathbbm1}\left\{ d_{i}=1\right\} -\mathbf{\mathbbm1}\left\{ d_{i}=2\right\} \right).\\
\varphi^{\ast}\left(w_{i}\right) & :=\left[\lambda_{1}\left(1-\frac{\alpha_{2}}{\gamma_{2}^{^{\prime}}}\right)+\lambda_{2}\frac{\alpha_{1}}{\gamma_{1}^{^{\prime}}}\right]\mathbf{\mathbbm1}\left\{ d_{i}=1\right\} +\left[\lambda_{1}\frac{\alpha_{2}}{\gamma_{2}^{^{\prime}}}+\lambda_{2}\left(1-\frac{\alpha_{1}}{\gamma_{1}^{^{\prime}}}\right)\right]\mathbf{\mathbbm1}\left\{ d_{i}=2\right\}
\end{align*}
where $\gamma_{k}^{^{\prime}}$ denotes $\frac{\partial}{\partial h_{k}}\gamma_{k}\left(x_{ik};z_{i}\right)$,
$\lambda_{1}$ stands for
\[
\lambda_{1}\left(x_{i},z_{i}\right):=\mathbb{E}\left[\left.\mathbf{\mathbbm1}\left\{ d_{i}=1\right\} \right|x_{i},z_{i}\right]
\]
i.e., the conditional probability of observing $x_{i1}$, and $\lambda_{2}:=1-\lambda_{1}$.

The influence function essentially characterizes how the first-stage
estimation influences the asymptotic variance of the final estimator.
Formally, we present the following assumption, commonly known as an
asymptotic linearity condition, which basically requires that the
expected error induced by the first-stage estimation is asymptotically
equivalent to the sample averages of $\varphi\left(w_{i}\right)\overline{z}_{i}\epsilon_{i}$
and $\varphi^{*}\left(w_{i}\right)\overline{z}_{i}\epsilon_{i}$.
In particular, the formula for $\varphi$ and $\varphi^{*}$ given
above will be the same regardless of the specific forms of first-step
estimators used, provided that some suitable regularity conditions
are satisfied.
\begin{assumption}[Asymptotic Linearity]
\label{assu:AsymLin} ~
\begin{itemize}
\item[(i)] Suppose
\[
\int G\left(w,\hat{\gamma}-\gamma\right)d\mathbb{P}\left(w\right)=\frac{1}{N}\sum_{i=1}^{N}\varphi\left(w_{i}\right)\overline{z}_{i}\epsilon_{i}+o_{p}\left(N^{-\frac{1}{2}}\right).
\]
\item[(ii)] Suppose
\[
\int G^{*}\left(w,\hat{\gamma}-\gamma\right)d\mathbb{P}\left(w\right)=\frac{1}{N}\sum_{i=1}^{N}\varphi^{*}\left(w_{i}\right)\overline{z}_{i}\epsilon_{i}+o_{p}\left(N^{-\frac{1}{2}}\right).
\]
\end{itemize}
\end{assumption}
We emphasize that Assumptions \ref{assu:SP_RegCon} and \ref{assu:AsymLin}
are standard assumptions widely imposed in the semiparametric estimation
literature, which can be satisfied by many kernel or sieve first-stage
estimators under a variety of conditions. See \citet*{newey1994asymptotic},
\citet*{newey1994large} and \citet*{chen2003estimation} for references.
In Assumption \ref{assu:KernelConds} below, we also provide an example
of lower-level conditions that replace Assumptions \ref{assu:SP_RegCon}
and \ref{assu:AsymLin} when we use the Nadaraya-Watson kernel estimator
in the first-stage nonparametric regression.

The next theorem establishes the asymptotic normality of $\hat{\alpha}$.
\begin{thm}[Asymptotic Normality]
\label{thm:AsymDist}Suppose Assumptions \ref{assu:OneLatent}-\ref{assu:SP_RegCon}
hold.
\begin{itemize}
\item[(i)] With Assumption \ref{assu:AsymLin}(i),
\[
\sqrt{N}\left(\hat{\alpha}-\alpha\right)\overset{d}{\longrightarrow}\mathcal{N}\left(\mathbf{0},\Sigma\right),
\]
where $\Sigma:=\Sigma_{zx}^{-1}\Omega\Sigma_{xz}^{-1}$ and
\[
\Omega:=\mathbb{E}\left[\overline{z}_{i}\overline{z}_{i}^{^{\prime}}\left(u_{i}^{2}+\left[1+\varphi\left(w_{i}\right)\right]^{2}\epsilon_{i}^{2}\right)\right].
\]
\item[{\Large{}{}(ii)}] With Assumption \ref{assu:AsymLin}(ii),
\[
\sqrt{N}\left(\hat{\alpha}^{*}-\alpha^{*}\right)\overset{d}{\longrightarrow}\mathcal{N}\left(\mathbf{0},\Sigma^{*}\right),
\]
where $\Sigma^{*}:=\Sigma_{zx}^{-1}\Omega^{*}\Sigma_{xz}^{-1}$ and
\[
\Omega^{*}:=\mathbb{E}\left[\overline{z}_{i}\overline{z}_{i}^{^{\prime}}\left(u_{i}^{2}+\varphi^{*}\left(z_{i}\right)^{2}\epsilon_{i}^{2}\right)\right].
\]
\end{itemize}
\end{thm}
We note that, if the latent inputs were observed and the first-step
nonparametric regression were not required, the asymptotic variance
of standard IV estimator of $\alpha$ would be given by $\Sigma_{zx}^{-1}\text{Var}\left(\overline{z}_{i}\left(u_{i}+\epsilon_{i}\right)\right)\Sigma_{xz}^{-1}$.
Hence, the presence of the additional term $\varphi\left(z_{i}\right)$
in $\Omega$ captures the effect of the first-step nonparametric regression
on the asymptotic variance of $\hat{\alpha}$.

To obtain consistent variance estimators, define
\begin{align*}
\hat{\Omega} & :=\frac{1}{N}\sum_{i=1}^{N}\overline{z}_{i}\overline{z}_{i}^{^{\prime}}\left[y_{i}-\tilde{x}_{i}^{^{\prime}}\hat{\alpha}+\hat{\varphi}\left(w_{i}\right)\left(y_{i}-\tilde{y}_{i}\right)\right]^{2}
\end{align*}
where
\[
\tilde{y}_{i}:=\begin{cases}
\hat{\gamma}_{1}\left(x_{i1},z_{i}\right), & \text{ for }d_{i}=1,\\
\hat{\gamma}_{2}\left(x_{i2},z_{i}\right), & \text{ for }d_{i}=2,
\end{cases}
\]
and with
\begin{align*}
\hat{\varphi}\left(w_{i}\right) & :=-\left(\hat{\lambda}_{1}\frac{\hat{\alpha}_{2}}{\hat{\gamma}_{2}^{^{\prime}}}-\hat{\lambda}_{2}\frac{\hat{\alpha}_{1}}{\hat{\gamma}_{1}^{^{\prime}}}\right)\left(\mathbf{\mathbbm1}\left\{ d_{i}=1\right\} -\mathbf{\mathbbm1}\left\{ d_{i}=2\right\} \right)
\end{align*}
where $\hat{\lambda}_{1}$ is any consistent nonparametric estimator
of $\l_{1}$. Then the variance estimators can be obtained as
\[
\hat{\Sigma}:=S_{x\tilde{z}}^{-1}\hat{\Omega}S_{\tilde{z}x}^{-1}
\]
with $S_{z\tilde{x}}:=\frac{1}{N}\sum_{i=1}^{N}\overline{z}_{i}\tilde{x}_{i}^{^{\prime}}$.

Similarly, $\hat{\O}^{*}$ and $\hat{\Sigma}^{*}$ can be constructed
accordingly.
\begin{cor}
\label{prop:VarEst}In addition to Assumptions \ref{assu:OneLatent}-\ref{assu:AsymLin},
suppose that $\hat{\lambda}_{1}$ is any consistent nonparametric
estimator of $\l_{1}$. Then $\hat{\Sigma}\overset{p}{\longrightarrow}\Sigma$
and $\hat{\Sigma}^{*}\overset{p}{\longrightarrow}\Sigma^{*}.$
\end{cor}
If furthermore $\lambda_{1}\left(x_{i},z_{i}\right)\equiv\lambda_{1}\in\left(0,1\right)$
is assumed, then we may use the sample proportion $\hat{\l}_{1}:=\frac{1}{N}\sum_{i}\left\{ d_{i}=1\right\} $.

We also note that, when the first step takes the form of sieve lease
squares, the simple procedure in \citet*{ackerberg2012practical},
in which we may ``pretend'' that the first stage is a parametric
model, may be applied to obtain estimates of the asymptotic variance
matrix.

~

Next, we compare the asymptotic variances of $\hat{\a}^{*}$ and $\hat{\a}$,
and show that $\hat{\a}^{*}$ is in fact asymptotically more efficient.
\begin{thm}[$\hat{\a}^{*}$ is Asymptotically More Efficient than $\hat{\a}$]
\label{thm:StarBetter}$\O-\O^{*}$ is positive definite, which implies
that $\hat{\alpha}^{*}$ is asymptotically more efficient than $\hat{\alpha}$.
\end{thm}
The proof is in Appendix \ref{subsec:pf_StarBetter}. Here we discuss
the intuition of Theorem \ref{thm:StarBetter}. The error term for
the IV regression with the raw outcome $y_{i}$ as the left-hand-side
variable is $u_{i}+\epsilon_{i}$, which has a larger variance than
the corresponding error term $u_{i}$, if the conditionally expected
outcome $\overline{y}_{i}$ is used instead. Even though we do not
observe $\overline{y}_{i}$ and must use an estimator $\tilde{y}_{i}=\hat{\gamma}_{1}\left(x_{i1}\right)$
or $\tilde{y}_{i}=\hat{\gamma}_{2}\left(x_{i2}\right)$, the impact
of the first-stage estimation error (which can be loosely thought
as an average of $\epsilon_{i}$ across $i$) is smaller than the
impact of $\epsilon_{i}$ itself.

To see this more clearly, first consider the multiplier ``$1+\varphi\left(w_{i}\right)$''
in (i): the ``1'' comes from the one ``raw'' share of error $\epsilon_{i}$
embedded in each $y_{i}$ that we use as the outcome variable, while
``$\varphi\left(w_{i}\right)$'' essentially captures the share
of influence of the first-step estimation error $\hat{\gamma}-\gamma$
due to $\epsilon_{i}$. Together, we have
\[
1+\varphi=\left(1-\lambda_{1}\frac{\alpha_{2}}{\gamma_{2}^{^{\prime}}}+\lambda_{2}\frac{\alpha_{1}}{\gamma_{1}^{^{\prime}}}\right)\mathbf{\mathbbm1}\left\{ d_{i}=1\right\} +\left(\lambda_{1}\frac{\alpha_{2}}{\gamma_{2}^{^{\prime}}}+1-\lambda_{2}\frac{\alpha_{1}}{\gamma_{1}^{^{\prime}}}\right)\mathbf{\mathbbm1}\left\{ d_{i}=2\right\} ,
\]
while the corresponding multiplier $\varphi^{*}$ on $\epsilon_{i}$
in (ii) is essentially the same except that ``$1-\lambda_{1}\frac{\alpha_{2}}{\gamma_{2}^{^{\prime}}}$''
becomes ``$\lambda_{1}-\lambda_{1}\frac{\alpha_{2}}{\gamma_{2}^{^{\prime}}}$''
and ``$1-\lambda_{2}\frac{\alpha_{1}}{\gamma_{1}^{^{\prime}}}$''
becomes ``$\lambda_{2}-\lambda_{2}\frac{\alpha_{1}}{\gamma_{1}^{^{\prime}}}$''.
Since $\lambda_{1},\lambda_{2}<1$, the overall multiplier on $\epsilon_{i}$
becomes smaller in magnitude\footnote{Note that $\alpha_{1}/\gamma_{1}^{^{\prime}}\leq1$ and $\alpha_{2}/\gamma_{2}^{^{\prime}}\leq1$
by equation \eqref{eq:gamma1_deri}.}. Essentially, by using the estimated conditional expected output
$\tilde{y}_{i}$, the raw ``1'' share of $\epsilon_{i}$ in $y_{i}$
is moved into the first-stage estimation error of $\overline{y}_{i}$,
which is then ``averaged'' and reduced in magnitude to $\lambda_{1}$
or $\lambda_{2}$, thus leading to smaller overall variance.

Lastly, we emphasize that the efficiency comparison in \ref{thm:StarBetter}
does not directly relate to the theory of semiparametric efficiency
bounds, such as in \citet{ackerberg2014asymptotic}, which is about
asymptotic efficiency of semiparametric estimators under a given criterion
function. In fact, by \citet{ackerberg2014asymptotic}, both estimators
based on $y_{i}$ and $\tilde{y}_{i}$ attain their corresponding
semiparametric efficiency bounds with respect to their\ different
criterion functions $g$ and $g^{*}$. Theorem \ref{thm:StarBetter},
however, is a comparison across the two criterion functions $g$ and
$g^{*}$: it essentially states that the asymptotically efficient
estimator under $g^{*}$ is even more efficient than the efficient
estimator under $g$.

\subsubsection{Lower-Level Regularity Conditions for Kernel First Step}

Finally, we present a set of lower-level conditions that replace Assumptions
\ref{assu:SP_RegCon} and \ref{assu:AsymLin}, when we use the canonical
Nadaraya-Watson kernel estimator for the nonparametric regression
in Step 1. We emphasize that this subsection simply serves as an illustration
of Assumptions \ref{assu:SP_RegCon}-\ref{assu:AsymLin} and Theorem
\ref{thm:AsymDist}, as our method does not require the use of a specific
form of first-step nonparametric estimators. For sieve (series) first-step
estimators, similar results can be derived based on, for example,
\citet*{newey1994asymptotic}, \citet*{chen2007large} and \citet*{chen2015sieve}.
\begin{assumption}[Example of Lower-Level Conditions with Kernel First Step]
 \label{assu:KernelConds}Let $N_{k}:=\sum_{i=1}^{N}\mathbf{\mathbbm1}\left\{ d_{i}=k\right\} $
denote the number of firms for which $h_{ik}$ is observed, and let
$\hat{\gamma}_{k}$ be the Nadaraya-Watson kernel estimator of $\gamma_{k}$
defined by
\[
\hat{\gamma}_{k}\left(v\right):=\frac{\frac{1}{N_{k}b^{3}}\sum_{d_{i}=k}K\left(\frac{v-v_{ik}}{b^{3}}\right)y_{i}}{\frac{1}{N_{k}b^{3}}\sum_{d_{i}=k}K\left(\frac{v-v_{ik}}{b^{3}}\right)}
\]
where $v_{ik}:=\left(x_{ik},z_{i1},z_{i2}\right)$ for all $i$ such
that $d_{i}=k$. Suppose the following conditions:
\begin{itemize}
\item[(i)] $\lambda_{1}\left(x_{i},z_{i}\right)\in\left(\epsilon,1-\epsilon\right)$
for all $\left(x_{i},z_{i}\right)$ for some $\epsilon>0$.
\item[(ii)] $\left(x_{i},z_{i}\right)$ has compact support in $\mathbb{R}^{4}$
with joint density $f$ that is uniformly bounded both above and below
away from zero.
\item[(iii)] $\mathbb{E}\left[y_{i}^{4}\right]<\infty$ and $\mathbb{E}\left[\left.y_{i}^{4}\right|x_{i},z_{i}\right]f\left(x_{i},z_{i}\right)$
is bounded.
\item[(iv)] $\gamma_{k}$ has uniformly bounded derivatives up to order $p\geq4$.
\item[(v)] $K\left(u\right)$ has uniformly bounded derivatives up to order
$p$, $K\left(u\right)$ is zero outside a bounded set, $\int K\left(u\right)du=1$,
$\int u^{t}K\left(u\right)du=\mathbf{0}$ for $t=1,...,p-1$, and
$\int\left\Vert u\right\Vert ^{p}\left|K\left(u\right)\right|du<\infty$.
\item[(vi)] $b$ is chosen such that $\frac{\sqrt{\log N}}{\sqrt{Nb^{3}}}=o\left(N^{-\frac{1}{4}}\right)$
and $\sqrt{N}b^{p}\to0$.
\end{itemize}
\end{assumption}
Assumption \ref{assu:KernelConds}(i) essentially requires that the
proportion of observations with $x_{i1}$ observed and that with $x_{i2}$
observed are both strictly positive, or in other words, the numbers
of both types of observations tend to infinity at the same rate of
$N$. This guarantees that we can estimate both $\gamma_{1}$ based
on observations with $x_{i1}$ and $\gamma_{2}$ based on observations
with $x_{i2}$ well enough asymptotically. Assumption \ref{assu:KernelConds}(iv)
is the key smoothness condition that will help establish the Donsker
property (and a consequent stochastic equicontinuity condition) in
Assumption \ref{assu:SP_RegCon}(i). Assumption \ref{assu:KernelConds}(v)(vi)
are concerned with the choice of kernel function $K$ and bandwidth
parameter $b$: (v) requires that a ``high-order'' kernel function
(of order $p$) is used, while (vi) requires that the bandwidth is
set (with ``under-smoothing'' ) so that the kernel estimator $\text{\ensuremath{\hat{\gamma}_{k}}}$
converges at a rate faster than $N^{-1/4}$, as required in Assumption
\ref{assu:SP_RegCon}(ii). The requirement of $p\geq4$ in (iii) ensures
that (vi) is feasible. Together with the additional regularity conditions
in (ii)(ii), these conditions ensure that Assumptions \ref{assu:SP_RegCon}-\ref{assu:AsymLin}
are satisfied. See \citet*[Section 8.3]{newey1994large} for additional
details.
\begin{thm}[Asymptotic Distributions with Kernel First Step]
\label{thm:Kernel} Under Assumptions \ref{assu:OneLatent}-\ref{assu:StrongMono}
and \ref{assu:KernelConds}, the conclusions of Theorem \ref{thm:AsymDist}
hold.
\end{thm}

\subsection{{\normalsize{}\label{subsec:Discussion} }Generalizations}

\subsubsection*{Additional Instrumental Variables}

If additional instruments are available, it is straightforward to
incorporate them in the second-stage regression, which will take the
form of a two-stage least square estimator instead of an IV regression.
Our results will carry over with suitable changes in notation. For
example, the asymptotic variance formula for $\hat{\alpha}$ needs
to be adapted as
\begin{align*}
\Sigma & :=\left(\Sigma_{xz}\Sigma_{zz}^{-1}\Sigma_{zx}\right)^{-1}\Sigma_{xz}\Sigma_{zz}^{-1}\Omega\Sigma_{zz}^{-1}\Sigma_{zx}\left(\Sigma_{xz}\Sigma_{zz}^{-1}\Sigma_{zx}\right)^{-1}.
\end{align*}

\subsubsection*{Other Parametric Production Functions}

Consider a potentially nonlinear parametric production function of
the form
\[
y_{i}=F_{\alpha}\left(x_{i1},x_{i2}\right)+u_{i}+\epsilon_{i}
\]
After the identification of partially latent inputs via Theorem \ref{thm:ID},
the second stage boils down to the estimation of $\alpha$ based on
the moment condition $\mathbb{E}\left[z_{i}\left(y_{i}-F_{\alpha}\left(x_{i1},x_{i2}\right)\right)\right]=\mathbf{0}$,
which can be obtained via GMM estimation. Technically, since GMM estimators
are Z-estimators, the corresponding asymptotic theory in \citet*{newey1994large},
on which the proof of Theorem \ref{thm:Kernel} is mainly based, still
applies with proper changes in notation.

\subsubsection*{Nonparametric Production Functions}

More generally, with any nonparametric production function that is
additively separable in $u_{i}$ and $\epsilon_{i}$ of the form
\[
y_{i}=F\left(x_{i1},x_{i2}\right)+u_{i}+\epsilon_{i},
\]
where $F$ is an unknown function that satisfies Assumption \ref{assu:MonoF},
the only thing that changes is the second-stage nonparametric estimation
of $F$ with the imputed inputs $\tilde{x}_{i}$ (or more precisely,
with one component known and one component imputed) based on the moment
condition $\mathbb{E}\left[z_{i}\left(y_{i}-F\left(x_{i1},x_{i2}\right)\right)\right]=\mathbf{0}.$
The asymptotic theory for this case can be similarly obtained based
on theory on nonparametric two-step estimation (e.g. \citealp*{ai2007estimation},
and \citealp*{hahn2018nonparametric}).

In the more general specification \eqref{eq:Gen_model}:
\[
y_{i}=F\left(x_{i1},x_{i2},u_{i}\right)+\epsilon_{i}
\]
where there is no more additive separability in $u_{i}$, one way
to obtain identification and implement IV estimation is by adapting
\citet*{chernozhukov2007instrumental} to our current context. Essentially,
we would need to impose strict monotonicity of $F$ in $u_{i}$, impose
independence of $u_{i}$ from $z_{i}$, normalize the distribution
of $u_{i}$ to be uniform, and then exploit a quantile-based residual
condition as described in \citet*{chernozhukov2007instrumental}.

\section{{\normalsize{}\label{sec:MonteCarlo}}A Monte Carlo Experiment}

Here we report the findings of some Monte Carlo experiments. Table
\ref{ParamSpec} reports the parameter specifications of the Cobb-Douglas
production function that we use in our experiments. We assume that
inputs are optimally chosen by a profit maximizing firm as discussed
in detail in Appendix \ref{subsec:CD_Optimal}. Specification 1 is
the baseline specification. These parameters were chosen so that the
simulated data are broadly consistent with the descriptive statistics
of our application that we discuss in detail in Section \ref{sec:FirstAppl}.
Specification 2 has a larger variance in productivity shocks ($u_{i}$).
Specification 3 has a smaller variance in wage distributions ($z_{i}$).
Finally, specification 4 has a larger variance in measurement errors
($e_{i}$).

Specifically, we consider $L$ different markets, with each market
containing $I$ firms, so that the total number of firms is $N=L\times I$.
Firms in the same market $l$ all pay the same local wages, which
we use as the instrumental variables. Local wages are drawn from a
joint log-normal distribution with mean $\mu_{z}$ and variance $\sigma_{z}$
where wages for the two inputs are positively correlated. The firm-level
idiosyncratic productivity shocks and the measurement errors are independently
drawn from normal distributions with zero means and standard deviations $\sigma_{u}$
and $\sigma_{e}$, respectively. The productivity shocks and the measurement
errors are independently and identically distributed both across firms
and markets. We consider different configurations of $L$ and $I$:
specifically, $L=$ 50, 100, 500 and $I=$ 1, 50, 100.

For each experiment, we compute the difference between the true parameter
value and the sample average of the estimates using $M=1000$ replications,
which is a measure of the bias of our estimator. We also estimate
the root mean squared error (RMSE) using the sample standard deviation
of our estimates.

\begin{table}[htbp]
\caption{Monte Carlo Parameter Specification}
\label{ParamSpec}\centering %
\begin{tabular}{cccccccc}
\toprule
 & \multicolumn{4}{c}{Constant Across Specification} & \multicolumn{3}{c}{Variable Across Specification}\tabularnewline
\cmidrule(lr){2-5} \cmidrule(lr){6-8}  & $\alpha_{0}$  & $\alpha_{1}$  & $\alpha_{2}$  & $\mu_{z}$  & $\sigma_{z}$  & $\sigma_{u}$  & $\sigma_{\epsilon}$ \tabularnewline
\midrule
Spec 1  & 4  & 0.35  & 0.25  & $\begin{pmatrix}2.4\\
2.1
\end{pmatrix}$  & $\begin{pmatrix}0.05 & 0.01\\
. & 0.02
\end{pmatrix}$  & 0.4  & 0.3 \tabularnewline
 &  &  &  &  &  &  & \tabularnewline
Spec 2  & 4  & 0.35  & 0.25  & $\begin{pmatrix}2.4\\
2.1
\end{pmatrix}$  & $\begin{pmatrix}0.05 & 0.01\\
. & 0.02
\end{pmatrix}$  & \textbf{{0.8} }  & 0.3 \tabularnewline
 &  &  &  &  &  &  & \tabularnewline
Spec 3  & 4  & 0.35  & 0.25  & $\begin{pmatrix}2.4\\
2.1
\end{pmatrix}$  & $\begin{pmatrix}\boldsymbol{0.02} & 0.01\\
. & 0.02
\end{pmatrix}$  & 0.4  & 0.3 \tabularnewline
 &  &  &  &  &  &  & \tabularnewline
Spec 4  & 4  & 0.35  & 0.25  & $\begin{pmatrix}2.4\\
2.1
\end{pmatrix}$  & $\begin{pmatrix}0.05 & 0.01\\
. & 0.02
\end{pmatrix}$  & 0.4  & \textbf{{0.5} }\tabularnewline
\bottomrule
\end{tabular}
\end{table}

Note that our data generating process mechanically implies $x_{i1}$
and $x_{i2}$ have a linear relationship with $y_{i}$. We estimate
$\gamma_{1}\left(\cd,z_{i}\right)$ and $\gamma_{2}\left(\cd,z_{i}\right)$
using second degree polynomials. Not surprisingly, we find that the
estimated coefficients on quadratic terms are almost 0. The interpolated
functions $\gamma_{1}^{-1}$ and $\gamma_{2}^{-1}$ are also almost
linear.

\begin{table}[htbp]
\caption{Monte Carlo: Different Markets}
\label{MonteCarloResult}\centering %
\begin{tabular}{cccccccc}
\toprule
 & Number of  & Number of  &  & \multicolumn{2}{c}{TSLS} & \multicolumn{2}{c}{Matched TSLS}\tabularnewline
\cmidrule(lr){5-6} \cmidrule(lr){7-8} Param  & Markets (L)  & Firms (I)  & Spec  & Bias  & RMSE  & Bias  & RMSE \tabularnewline
\midrule
$\alpha_{0}$  & 50  & 50  & 1  & 0.001  & 0.000  & 0.001  & 0.000 \tabularnewline
$\alpha_{0}$  & 100  & 100  & 1  & 0.000  & 0.000  & -0.000  & 0.000 \tabularnewline
$\alpha_{0}$  & 50  & 50  & 2  & 0.001  & 0.000  & 0.001  & 0.000 \tabularnewline
$\alpha_{0}$  & 100  & 100  & 2  & 0.000  & 0.000  & -0.000  & 0.000 \tabularnewline
$\alpha_{0}$  & 50  & 50  & 3  & 0.001  & 0.000  & 0.001  & 0.000 \tabularnewline
$\alpha_{0}$  & 100  & 100  & 3  & 0.000  & 0.000  & -0.000  & 0.000 \tabularnewline
$\alpha_{0}$  & 50  & 50  & 4  & 0.001  & 0.000  & 0.001  & 0.000 \tabularnewline
$\alpha_{0}$  & 100  & 100  & 4  & 0.000  & 0.000  & 0.000  & 0.000 \tabularnewline
$\alpha_{0}$  & 500  & 1  & 1  & -0.000  & 0.001  & -0.002  & 0.001 \tabularnewline
$\alpha_{0}$  & 500  & 1  & 2  & -0.001  & 0.002  & -0.002  & 0.002 \tabularnewline
$\alpha_{0}$  & 500  & 1  & 3  & -0.000  & 0.001  & -0.001  & 0.001 \tabularnewline
$\alpha_{0}$  & 500  & 1  & 4  & -0.001  & 0.001  & -0.002  & 0.001 \tabularnewline
\midrule
$\alpha_{1}$  & 50  & 50  & 1  & -0.003  & 0.002  & -0.004  & 0.003 \tabularnewline
$\alpha_{1}$  & 100  & 100  & 1  & 0.000  & 0.001  & 0.001  & 0.001 \tabularnewline
$\alpha_{1}$  & 50  & 50  & 2  & -0.004  & 0.007  & -0.007  & 0.009 \tabularnewline
$\alpha_{1}$  & 100  & 100  & 2  & 0.001  & 0.002  & 0.001  & 0.002 \tabularnewline
$\alpha_{1}$  & 50  & 50  & 3  & -0.005  & 0.006  & -0.007  & 0.007 \tabularnewline
$\alpha_{1}$  & 100  & 100  & 3  & 0.000  & 0.001  & 0.001  & 0.002 \tabularnewline
$\alpha_{1}$  & 50  & 50  & 4  & -0.003  & 0.004  & -0.004  & 0.005 \tabularnewline
$\alpha_{1}$  & 100  & 100  & 4  & 0.000  & 0.001  & 0.001  & 0.001 \tabularnewline
$\alpha_{1}$  & 500  & 1  & 1  & 0.008  & 0.011  & 0.010  & 0.013 \tabularnewline
$\alpha_{1}$  & 500  & 1  & 2  & 0.020  & 0.039  & 0.024  & 0.045 \tabularnewline
$\alpha_{1}$  & 500  & 1  & 3  & 0.007  & 0.027  & 0.010  & 0.033 \tabularnewline
$\alpha_{1}$  & 500  & 1  & 4  & 0.009  & 0.018  & 0.015  & 0.024 \tabularnewline
\midrule
$\alpha_{2}$  & 50  & 50  & 1  & 0.003  & 0.003  & 0.005  & 0.004 \tabularnewline
$\alpha_{2}$  & 100  & 100  & 1  & -0.000  & 0.001  & -0.001  & 0.001 \tabularnewline
$\alpha_{2}$  & 50  & 50  & 2  & 0.003  & 0.010  & 0.006  & 0.012 \tabularnewline
$\alpha_{2}$  & 100  & 100  & 2  & -0.002  & 0.002  & -0.002  & 0.003 \tabularnewline
$\alpha_{2}$  & 50  & 50  & 3  & 0.005  & 0.006  & 0.007  & 0.007 \tabularnewline
$\alpha_{2}$  & 100  & 100  & 3  & -0.000  & 0.001  & -0.001  & 0.002 \tabularnewline
$\alpha_{2}$  & 50  & 50  & 4  & 0.004  & 0.005  & 0.005  & 0.007 \tabularnewline
$\alpha_{2}$  & 100  & 100  & 4  & -0.000  & 0.001  & -0.001  & 0.002 \tabularnewline
$\alpha_{2}$  & 500  & 1  & 1  & -0.013  & 0.017  & -0.015  & 0.019 \tabularnewline
$\alpha_{2}$  & 500  & 1  & 2  & -0.039  & 0.062  & -0.043  & 0.074 \tabularnewline
$\alpha_{2}$  & 500  & 1  & 3  & -0.011  & 0.029  & -0.014  & 0.034 \tabularnewline
$\alpha_{2}$  & 500  & 1  & 4  & -0.014  & 0.026  & -0.020  & 0.036 \tabularnewline
\bottomrule
\end{tabular}
\end{table}

Table \ref{MonteCarloResult} summarizes the performance of two different
estimators: the two stage least squares estimator (TSLS) when all
inputs are observed as well as our first version of TSLS when inputs
are imputed and the observed output is used as the dependent variable.\footnote{Our second version of TSLS, which uses the expected output as the dependent variable, performs almost identically to the first version in this Monte Carlo simulation.}
We refer to this version of the TSLS estimator as the ``matched'' TSLS estimator.
As we would expect given our asymptotic results, the matched TSLS
performs almost as well as the standard TSLS estimator under these
ideal sampling conditions. This finding holds for all four different
specifications and several choices for the number of firms within
a market and the number of local markets.

Next, we investigate how our estimator performs when we have a relatively
small number of observations in each market. Considering an extreme
case, we simulate data for $L=500$ and $I=1$. In this case, as we
only have a single firm in each market, we cannot impute the missing
input variable using within market information. Instead, we pool observations
across markets and estimate conditional expectations conditional on
$x_{1}$ (or $x_{2}$), $z_{1}$, and $z_{2}$.\footnote{Note that the missing inputs are imputed for each market separately
when $I\neq1$.} Table \ref{MonteCarloResult} also summarizes the bias and RMSE where
$L=500$ and $I=1$. We find that the matched TSLS estimator performs
almost as well as the standard TSLS estimator that assumes that both
inputs are observed. The only case where the matched TSLS estimator
exhibits relatively large bias and RMSE is when the variance of the
measurement errors is large (Spec 4). In Appendix \ref{sec:MCPartialWage},
we present a Monte Carlo experiment where we have partially latent
wages.

We have not conducted a Monte Carlo analysis to evaluate how
sensitive our estimator is to other misspecification errors. For
example, it might be interesting to evaluate the performance of our
estimator if input decisions (and hence output) are functions of both
demand and supply shocks.  We conjecture that the magnitude of the
bias that will result from ignoring demand shocks will likely depend
on the correlation structure of the demand and supply shocks and the
market structure of the industry.

\section{{\normalsize{}{}\label{sec:FirstAppl}} Team Production Functions}

Our main application focuses on the estimation of team production
functions. We introduce our data set and discuss our main empirical
findings. Finally, we discuss other data sets that have a similar
structure than the one we use and potential applications of our methods.

\subsection{Institutional Background and Data}

We study team production functions in pharmacies. This industry has
undergone a dramatic change over the past decades. An industry that
used to be primarily dominated by local independent pharmacies has
been transformed by the entry of large chains that operate in multiple
markets.

According to Goldin and Katz (2016) one important technological change in the pharmacy sector
``\emph{is the extensive use of information technology systems
and an increase in prescription drug insurance, which have both
enhanced the ability of pharmacists to hand off clients. Improvements in
information technology have enhanced the ability of pharmacists to leave a
coherent and comprehensive record of each client, increasing the substitutability
of pharmacists and reducing consumer preferences for particular
pharmacists. Because of the increase in insurance coverage, pharmacists
can access the prescriptions of clients through Pharmacy Benefit Managers
(PBM) even if the scripts were not filled at that pharmacy... [Another] change is the  standardization of pharmacy products and services. Medications have been increasingly produced by pharmaceutical
companies rather than being compounded in individual pharmacies and
hospitals. The greater standardization of medications has meant that the
idiosyncratic expertise and talents of a particular pharmacist have become
less important ...  [these] changes make pharmacists better substitutes for each other and enable
an almost costless handoff of clients.}" (p. 708-709)

An important question is the extent to which this transformation
has been driven by technological change that has benefited large chains
over smaller independently operated pharmacies. If this is in fact
the case, these technological changes may help to explain why this
profession has become so popular with females \citep{goldin2016most}.

The main data set that we use is the National Pharmacist Workforce
Survey of 2000 which is collected by Midwestern Pharmacy Research.
The data comes from a cross-sectional survey answered by randomly
selected individual pharmacists with active licenses. The data set
is composed of two types of information: information about pharmacists
and information about the pharmacy each pharmacist works at.

Information at the pharmacy level includes the type of pharmacy \textit{(Independent
}or\textit{\ Chain)}, the hours of operation per week, the number
of pharmacists employed, and the typical number of prescriptions dispensed
at the pharmacies per week. The store-level information is provided
by an individual pharmacist who works at the pharmacy, thus the quality
of the responses may depend on how knowledgeable the person is about
the pharmacy. However, considering that most of the pharmacists in
our sample are observed to be full-time pharmacists, the quality of
the firm-level data is likely to be high. The number of prescriptions
dispensed at the pharmacy is our measure of output. As a consequence,
we do not have to use revenue based output measures which could bias
our analysis as discussed, for example, in \citet*{epple2010new}.

\begin{table}[htb]
\caption{Summary Statistics at the Firm Level: Pharmacies}
\label{tab:ss_pharmacy}\centering \begin{threeparttable}[b] \centering
\begingroup 
\begin{tabular}{lccccccc}
\toprule
Firm  & Number  & Emp  & Operating  & Prescriptions  & Prescriptions  & Prop  & Number \tabularnewline
Type  & Pharmacists  & Size  & Hours  & per Week  & per Hour  & Urban  & of Obs \tabularnewline
\midrule
Indep  & $n<2$  & 3.15  & 51.96  & 778.00  & 14.94  & 0.63  & 50 \tabularnewline
 &  & (1.41)  & (7.08)  & (368.95)  & (6.54)  & (0.39)  & \tabularnewline
Indep  & $2\leq n<3$  & 3.94  & 56.99  & 914.40  & 16.09  & 0.71  & 58 \tabularnewline
 &  & (1.80)  & (10.04)  & (472.81)  & (8.43)  & (0.34)  & \tabularnewline
Indep  & $3\leq n$  & 4.71  & 64.24  & 1252.22  & 19.44  & 0.78  & 36 \tabularnewline
 &  & (1.44)  & (14.15)  & (610.61)  & (8.75)  & (0.32)  & \tabularnewline
\midrule
Chain  & $n<2$  & 1.88  & 53.50  & 666.88  & 12.90  & 0.81  & 8 \tabularnewline
 &  & (0.99)  & (8.02)  & (278.84)  & (6.58)  & (0.34)  & \tabularnewline
Chain  & $2\leq n<3$  & 3.25  & 80.50  & 1294.68  & 16.21  & 0.81  & 101 \tabularnewline
 &  & (1.36)  & (9.86)  & (595.08)  & (7.66)  & (0.29)  & \tabularnewline
Chain  & $3\leq n$  & 5.32  & 82.82  & 1765.67  & 21.43  & 0.89  & 79 \tabularnewline
 &  & (1.63)  & (13.67)  & (681.57)  & (7.87)  & (0.20)  & \tabularnewline
\bottomrule
\end{tabular}\endgroup \begin{tablenotes}

\item Independent pharmacies: fewer than 10 stores under the same
ownership.

\item Chain pharmacies: more than 10 stores under the same ownership.

\item Standard deviations in parentheses.

\item One part-time pharmacist is counted as 0.5 pharmacist in number
of pharmacists.

\item Employment size includes interns and technicians. \end{tablenotes}
\end{threeparttable}
\end{table}

Table \ref{tab:ss_pharmacy} summarizes the means of key variables
that are observed at the firm or pharmacy level. After eliminating
cases with missing input/output information, we observe 332 pharmacists.
Table \ref{tab:ss_pharmacy} suggests that there are some pronounced
differences between chains and independent pharmacies. Chains are
more likely to be located in larger urban areas than independent pharmacies.
They also operate longer hours per week. Interestingly, hourly production
measured by the number of prescriptions per hour is, on average, similar
to the independent pharmacies with similar employment size.\footnote{Most pharmacies in our sample have one manager pharmacist and one
employee pharmacist, but there are a few pharmacies with a larger
employment size.} We explore these issues in more detail below and test whether the
different types of pharmacies have access to the same technology.

\begin{table}[tbph]
\caption{Summary Statistics at the Worker Level: Pharmacists}
\label{tab:ss_individual} \centering \begin{threeparttable}[b]
\begingroup %
\begin{tabular}{llccccc}
\toprule
Firm  &  & Number of  & Actual  & Paid  & Hourly  & Number of \tabularnewline
Type  & Position  & Pharmacists  & Hours  & Hours  & Earnings  & Obs \tabularnewline
\midrule
Indep  & Employee  & $n<2$  & 40.94  & 39.28  & 28.87  & 9 \tabularnewline
 &  &  & (11.61)  & ( 9.60)  & (7.64)  & \tabularnewline
Indep  & Employee  & $2\leq n<3$  & 33.90  & 33.03  & 29.37  & 29 \tabularnewline
 &  &  & (12.01)  & (11.14)  & (4.09)  & \tabularnewline
Indep  & Employee  & $3\leq n$  & 31.61  & 30.95  & 30.24  & 28 \tabularnewline
 &  &  & (11.62)  & (10.96)  & (4.93)  & \tabularnewline
Indep  & Manager  & $n<2$  & 50.02  & 45.34  & 30.32  & 41 \tabularnewline
 &  &  & (9.05)  & (7.24)  & (12.45)  & \tabularnewline
Indep  & Manager  & $2\leq n<3$  & 49.45  & 44.19  & 28.70  & 29 \tabularnewline
 &  &  & (8.15)  & (7.99)  & (9.90)  & \tabularnewline
Indep  & Manager  & $3\leq n$  & 46.50  & 44.38  & 30.28  & 8 \tabularnewline
 &  &  & (4.11)  & (6.30)  & (6.57)  & \tabularnewline
\midrule
Chain  & Employee  & $n<2$  & 46.20  & 43.00  & 34.70  & 5 \tabularnewline
 &  &  & (2.77)  & (4.47)  & (2.19)  & \tabularnewline
Chain  & Employee  & $2\leq n<3$  & 41.82  & 39.84  & 34.13  & 66 \tabularnewline
 &  &  & (5.76)  & (4.38)  & (3.32)  & \tabularnewline
Chain  & Employee  & $3\leq n$  & 39.96  & 37.94  & 34.03  & 56 \tabularnewline
 &  &  & (8.63)  & (7.02)  & (3.12)  & \tabularnewline
Chain  & Manager  & $n<2$  & 45.33  & 42.00  & 36.75  & 3 \tabularnewline
 &  &  & (5.03)  & (2.65)  & (4.43)  & \tabularnewline
Chain  & Manager  & $2\leq n<3$  & 44.10  & 40.50  & 34.06  & 35 \tabularnewline
 &  &  & (7.02)  & (2.58)  & (4.90)  & \tabularnewline
Chain  & Manager  & $3\leq n$  & 43.61  & 41.43  & 35.04  & 23 \tabularnewline
 &  &  & (5.41)  & (3.41)  & (3.59)  & \tabularnewline
\bottomrule
\end{tabular}\endgroup \begin{tablenotes}

\item Independent pharmacies: fewer than 10 stores under the same
ownership.

\item Chain pharmacies: more than 10 stores under the same ownership.

\item Hourly earnings are computed based on the paid hours, not actual
hours.

\item Standard deviations in parentheses. \end{tablenotes} \end{threeparttable}
\end{table}

The survey also collects various information about pharmacists including
hours of work, demographics, and household characteristics. Most importantly
we observe the position at the pharmacy \textit{(Owner/Manager} or
\textit{\ Employee)}. We treat hours of the manager and hours of
the employees as the two input factors in the production function.

Information related to the individual pharmacists is summarized in
Table \ref{tab:ss_individual}. Employee pharmacists at independent
pharmacies work fewer hours than the employee pharmacists at chain
pharmacies, and hourly earnings are lower than those of the employees
at the chains. Pharmacists in managerial positions at independent
pharmacies work more hours than managers at chain pharmacies, but
they have lower hourly earnings on average.

\subsection{Empirical Results}

We observe only one pharmacy in each local labor market, which is
defined as the 5-digit zip code area. Hence, we use the version of
our estimator that averages across local markets as discussed in Section
\ref{sec:MonteCarlo}. We use imputed hourly wages for the manager
and the employer as the additional observed covariates ($z_{1},z_{2}$)
for the first-stage estimation.\footnote{In this application, we only observe the wage for the observed type.
Thus, wages are imputed using local demand shifters in 5-digit zip
code levels and pharmacists' characteristics. We have verified with
additional Monte Carlo simulation exercise reported in Appendix C
that our method performs as well as the standard TSLS estimator with
this variation.} In the second stage estimation, we use the observed wage for the
observed position, the imputed wages, and principal components of local demand shifters
as additional instruments.\footnote{The local demand shifters include total population size, median household
income, and proportion of households with retirement income.}

We implement two versions of our ``matched'' TSLS estimator: the
first estimator uses the observed outputs while the second one uses
expected outputs. Since the observed output is subject to a measurement
error, the semi-parametric estimator using expected outputs offers
the potential of some efficiency gains as discussed in Theorem \ref{thm:StarBetter}.
Table \ref{tab:est1} summarizes our findings. We report the estimated
parameters of the Cobb-Douglas production function as well as the
estimated standard errors. In addition, we report standard F-statistics
for the first stage of the TSLS estimator to test for weak instruments.
Overall, we find that our instruments are sufficiently strong in most
cases.\footnote{As a robustness check, we also explored a different matching algorithm
which estimates the expectation of output conditional on local demand
shifters rather than wages. The results are consistent although the
matching algorithm with local demand shifters gives slightly larger
point estimates with slightly less precision.}

Table \ref{tab:est1} shows that we estimate most of parameters of
the production function with good precision. Correcting for potential
measurement error by using the expected output as the dependent variable,
we achieve similar, maybe even slightly more plausible estimates.

\begin{table}[htb]
\caption{Estimation Result}
\label{tab:est1}\centering %
\begin{tabular}{ccccc}
\toprule
 & \multicolumn{2}{c}{Independent} & \multicolumn{2}{c}{Chain}\tabularnewline
 & Observed  & Expected  & Observed  & Expected \tabularnewline
 & Outputs  & Outputs  & Outputs  & Outputs \tabularnewline
\midrule
$\alpha_{0}$  & 5.447  & 5.857  & 2.504  & 3.634 \tabularnewline
 & (0.597)  & (0.331)  & (1.790)  & (1.060) \tabularnewline
$\alpha_{1}$  & 0.227  & 0.163  & 0.819  & 0.687 \tabularnewline
 & (0.122)  & (0.057)  & (0.454)  & (0.268) \tabularnewline
$\alpha_{2}$  & 0.090  & 0.047  & 0.409  & 0.250 \tabularnewline
 & (0.071)  & (0.051)  & (0.191)  & (0.105) \tabularnewline
\midrule
Nobs  & 144  & 144  & 188  & 188 \tabularnewline
First-stage F for $x_{1}$  & 9.320  & 9.320  & 11.774  & 11.774 \tabularnewline
First-stage F for $x_{2}$  & 13.648  & 13.648  & 3.630  & 3.630 \tabularnewline
\bottomrule
\end{tabular}
\end{table}

Our results provide several insights into understanding the difference
between independents and chains. First, our results indicate that
chains may have a different production function than independent pharmacies.
A formal joint hypothesis test reported in Table \ref{tab:test1}
rejects the null hypothesis that the coefficients of the production
function are the same.

Second, our findings also suggest that managers may be more effective
in chains than independents. A formal one-sided t-test reported in
Table \ref{tab:test1} rejects the null hypothesis that the two coefficients
that characterize the marginal product of inputs are the same. Our findings imply that larger chains are more efficient in many other dimensions, such as task assignments or logistics, so that managers at larger chains may be able to focus more on prescription-related tasks compared to managers at independent pharmacies.

\begin{table}[htb]
\caption{Hypothesis Tests}
\label{tab:test1}\centering %
\begin{tabular}{lccc}
\toprule
 & Production Function  & Marginal Product  & Residual Variance \tabularnewline
 & (Joint)  & of Manager $\alpha_{1}$  & $V(u)$ \tabularnewline
\midrule
Independent  &  & 0.163  & 0.010 \tabularnewline
Chain  &  & 0.687  & 0.006 \tabularnewline
Difference or Ratio  &  & -0.524  & 1.532 \tabularnewline
\midrule
Test Statistics  & 122.841  & -1.913  & 1.532 \tabularnewline
Test  & \textit{Wald}  & \textit{t}  & \textit{F} \tabularnewline
p-value  & (0.000)  & (0.028)  & (0.003) \tabularnewline
\bottomrule
\end{tabular}
\end{table}

Finally, we find that chains have a significantly lower residual variance
than independents. A formal F test reported in Table \ref{tab:test1}
rejects the null hypothesis that the residual variance of independents
is greater than or equal to the residual variance of chains. Note
that all the tests are based on the estimation results with the expected
outputs as the dependent variable.

We also test whether the observed labor inputs are indeed the optimal
choice of firms. If the inputs are optimally chosen, the coefficients
can be directly estimated from equation (\ref{eq:CD_RF}) in Appendix
\ref{subsec:CD_Optimal}. Under the assumption of Cobb-Douglas production,
we can test the optimality by jointly testing the null hypothesis
of equality of both coefficients. Table \ref{tab:optimality_test}
shows the results. A formal Wald test rejects the null hypothesis
of optimality. Thus, the direct inversion of the optimality conditions
cannot be applied to estimate the parameters of the production function,
whereas our new estimator is feasible. We note, however, that the
test of the optimality of inputs here is built upon the premise that
the linear model is correctly specified and all the assumptions underlying
our method are valid.

\begin{table}[htb]
\caption{Test for Optimality of Inputs}
\label{tab:optimality_test} \centering %
\begin{tabular}{ccccc}
\toprule
 & \multicolumn{2}{c}{Independent} & \multicolumn{2}{c}{Chain}\tabularnewline
 & $x_{1}$ Observed  & $x_{2}$ Observed  & $x_{1}$ Observed  & $x_{2}$ Observed \tabularnewline
\midrule
Wald Statistic  & 5.495  & 36.914  & 15.312  & 26.172 \tabularnewline
p-value  & (0.064)  & (0.000)  & (0.000)  & (0.000) \tabularnewline
\bottomrule
\end{tabular}
\end{table}

Although most pharmacies in our sample have one manager and one pharmacist,
there are a few pharmacies with more than one employee pharmacist.
For this subset of pharmacies, we compute the total hours worked by
employee pharmacists by multiplying the reported hours worked from
an employee by the number of employees. Then, the second imputation
step is applied based on the total hours worked by all employees.
In this process, we implicitly assume the labor hours from two different
employees are perfect substitutes. As a robustness check, we also
estimate a version of production function which has an elasticity
of substitution between the hours worked by different employees equal
to one. Table \ref{tab:est1_robustcheck} summarizes this version
of the estimation result. The estimated parameters show that employees
have slightly lower marginal products at both independents and chains compared
to our baseline estimation, but in general our estimation result is
robust to how we treat employee inputs from pharmacies with more than
one employee.

\begin{table}[htbp]
\caption{Using $N_{2}*log(x_{2})$ instead of $log(N_{2}*H_{2})$}
\label{tab:est1_robustcheck}\centering %
\begin{tabular}{ccccc}
\toprule
 & \multicolumn{2}{c}{Independent} & \multicolumn{2}{c}{Chain}\tabularnewline
 & Observed  & Expected  & Observed  & Expected \tabularnewline
 & Outputs  & Outputs  & Outputs  & Outputs \tabularnewline
\midrule
$\alpha_{0}$  & 5.493  & 5.888  & 3.409  & 4.201 \tabularnewline
 & (0.527)  & (0.270)  & (1.656)  & (0.972) \tabularnewline
$\alpha_{1}$  & 0.258  & 0.178  & 0.878  & 0.719 \tabularnewline
 & (0.121)  & (0.057)  & (0.446)  & (0.261) \tabularnewline
$\alpha_{2}$  & 0.033  & 0.017  & 0.092  & 0.056 \tabularnewline
 & (0.021)  & (0.014)  & (0.039)  & (0.022) \tabularnewline
\midrule
Nobs  & 144  & 144  & 188  & 188 \tabularnewline
First-stage F for $x_{1}$  & 10.066  & 10.066  & 10.199  & 10.199 \tabularnewline
First-stage F for $x_{2}$  & 12.360  & 12.360  & 3.210  & 3.210 \tabularnewline
\bottomrule
\end{tabular}
\end{table}

We thus conclude that chains have different production functions than
independent pharmacies which may partially explain the change in the
observed market structure of that industry.

One main concern with this application is that the variation of
output (the number of prescriptions) in the pharmacy industry is also
driven by demand shocks. We offer four observations.

First, our
framework works as long as there is only one structural shock. Our
model is agnostic about whether $u_i$ is a technology or a demand
shock.  However, the choice of instruments depends on the
interpretation of the structural shock. In particular, wages are less
plausible instruments if the structural shock is primarily a demand
shock.

Second, our approach can also be applied in a model with two
structural shocks as long as the second shock (the demand shock) is
unexpected and iid across firms. The main results of the paper are
still valid since input decisions are by assumption not affected by
these unexpected demand shocks, and we use expected and not realized
output in the matching procedure. Hence, we implicitly assume that
demand shocks can be treated as we treat measurement error, i.e. as
additively separable iid shocks.

Third, it is reasonable to consider a more general model with two
structural shocks and measurement error.  This generalized model can
then account for the possibility that input decisions are functions of
both demand and supply shocks. Moreover, these shocks do not
necessarily enter into the production function in a simple additively
separable way. As we discuss in detail in the conclusions of this
paper, this models falls outside the scope of this paper. More
research is needed to develop plausible identification strategies for
this generalized model.

Finally, there are additional issues that arise when estimating
production functions in retail compared to manufacturing. For example,
capacity constraints, rationing based on wait time, and service
quality may all be important. Unfortunately, we do not observe these
variables in our data set. Hence, we cannot evaluate how important
these factors are. More research
is needed to fully address this important research question.

\subsection{Discussion of Other Applications}

As we discussed above, the partially latent data structure that we
study in this paper arises quite naturally in matched employer-employee
data sets which contain information collected from individuals as
well as information collected from businesses or establishments. Hence,
they contain useful information about the firm such as output and
revenues as well as the employees of the firm. However, the sampling
design often implies that only a small subset of the employees of
a firm are included in the sample. If the survey does not sample all
employees in a firm then some important labor inputs are likely to
be latent from the perspective of the econometrician.\footnote{For an early survey of employer-employee data sets see \citet*{Abowd-Kramaz-99}.}

Consider the Workers Establishment Characteristics Database (WECD)
which matches long form respondents of the 1990 U.S. Census of Population
to data on their employers from the Longitudinal Research Database
(LRT). Note that the LRT not only has detailed output measures, but
also includes measures of the capital stock, which is important in
manufacturing. Since the long-form was only given to a random sample
of approximately five percent of the population, we can typically
only match five percent of the workers for large plants and a much
lower percentage for some smaller plants. For example, \citet*{HNT-99}
focused on large plants with more than 20 employees and rely on the
aggregate data in the LRD to estimate the production function. Hence,
they do not differentiate between different types of labor inputs,
such blue and white collar workers. The methods developed in our paper
allows researchers to use the data from the WECD to estimate the production
function of small plants, even if we do not observe white and blue
collar workers for each small plant.\footnote{The New Worker-Employer Characteristic Dataset is an extension of
the WECD that also contains firms. Another relevant data set is the
Longitudinal Employer-Household Dynamics (LEHD) which is drawn from
state unemployment insurance administrative files, and hence, contains
a larger subset of workers. It suffers from missing data problems,
especially for firms with high turnover in the labor force. Hence,
we often only observe a subset of workers for these firms. This is
particularly problematic for small firms in the service sector.}

Another well-known employer-employee data set that applied this sampling
design is the Workplace and Employee Survey (WES), a longitudinal
survey of workplaces and their employees administered annually by
Statistics Canada between 1999 and 2006. Every year, a representative
sample of approximately 6,000 employers was surveyed. A maximum of
twenty-four employees were randomly interviewed from each sampled
workplace in each odd year and re-interviewed the following year.
Thus, we only observe a small subsample of the employees at each firm
in the WES.\footnote{See, for example, \citet*{Dostie-Javdani-20}, for more details.}

Other applications of our techniques can be based on surveys conducted
by professional associations. Consider, for example, the \textit{After
the JD} , which is a nationally representative longitudinal data set
constructed by the American Bar Foundation and the National Association
for Law Placement. The data set follows students who graduated in
2000. The respondents were surveyed three times: once each in 2003,
2007, and 2012. The survey thus contains detailed information about
associates and partners. The commonly available version of the After
the JD does not contain the identity of the law firm. However, this
information is available to internal researchers at the American Bar
Association. Hence, one should be able to match the employees in the
AJD to individual law firms and study the productivity differences
among law firms using our technique.

Finally, there are numerous applications outside of industrial organization.
In Appendix D show that the partially latent covariate problem can also arise
in the study of intergenerational transmission of human capital, wealth
or attitudes. It is rather common that we do not sample all relevant
household members. Here, we consider, the Child Development Supplement
of the PSID to study the transmission of human capital from parents
to children. If a child grows up in traditional family, it is likely
that we observe inputs from both mother and father. However, traditional
family arrangements have become less common during the past decades
and thus the focus has shifted on evaluating the impact of growing
up in less traditional environments. If a child grows up in a non-traditional
family one of the two parents' inputs are often missing. We can also
apply our framework to study achievement functions. We find that there
are some significant differences between married and divorced parents.
In particular, divorced fathers have no significant impact on child
quality.

\section{{\normalsize{}{}\label{sec:Conclusion}}Concluding Remarks}

We have developed a new method for identifying econometric models
with partially latent covariates. We have shown that a broad class
of econometric models that play a large role in industrial organization
and labor economics can be non-parametrically identified if the partially
latent covariates are monotonic functions of a common shock. Examples
that fall into this class of models are production and skill formation
functions. The partially latent data structure arises quite naturally
in these settings if we employ an ``input-based sampling'' strategy,
i.e. if the sampling unit is one of multiple labor input factors.
It is plausible that the sampling unit will only have incomplete information
about the other labor inputs that affect output. Our proofs of identification
are constructive and imply a sequential, two-step semi-parametric
estimation strategy. We have discussed the key problems encountered
in estimation, characterized rate of convergence, and the asymptotic
distribution of our estimators. Our application focuses on estimating
team production functions. Using a national survey of pharmacists,
we have found some convincing evidence that chains have different
technologies than independently operated pharmacies. In particular,
managers appear to be to have higher marginal products in chains.

Finally, our research provides ample scope for future research in econometric
methodology. We have restricted ourselves to applications in which
our method of identification can be combined with standard IV techniques
to estimate the functions of interest. Much of the recent panel data
literature has focused on dynamic inputs in the presence of adjustment
costs, and more research is needed to extend the idea in this paper
to a fully fledged dynamic panel data framework. We have also restricted
ourselves to systems of inputs with a single common shock. Another
potentially interesting research question is how our methods can be
extended to more complicated econometric structures with multiple
shocks.
\begin{acknowledgement*}
We would like to thank Xu Cheng, Aureo de Paula, Ulrich Doraszelski,
Amit Gandhi, Claudia Goldin, Aviv Nevo, Dan Silverman, Petra Todd,
and seminar participants at numerous universities for comments and
suggestions. Postlewaite and Sieg acknowledge support from the National Science Foundation.
\end{acknowledgement*}

\nocite{npws,psidcds}
\bibliographystyle{ecta}
\bibliography{LatentInputs}

\newpage{}

\appendix

\section{\label{sec:CD}Additional Derivations}

\subsection{{\normalsize{}\label{subsec:CD_Optimal}}Optimal Input Choice in
the Cobb-Douglas Case}

Suppose that firm $i$ chooses inputs optimally by solving the following
(expected) profit-maximization problem:
\begin{equation}
\max_{X_{i1},X_{i2}}e^{\alpha_{0}+u_{i}}X_{i1}^{\alpha_{1}}X_{i2}^{\alpha_{2}}-Z_{i1}X_{i1}-Z_{i2}X_{i2},\label{eq:CD_max}
\end{equation}
where $X_{i1},X_{i2},Z_{i1},Z_{i2}$ denote exponents of $x_{i1},x_{i2},z_{i1},z_{i2}$.
By the first-order conditions,
\begin{align*}
X_{i1} & =e^{\frac{\alpha_{0}+u_{i}}{1-\alpha_{1}-\alpha_{2}}}\left(\frac{Z_{i1}}{\alpha_{1}}\right)^{\frac{1-\alpha_{2}}{\alpha_{1}+\alpha_{2}-1}}\left(\frac{Z_{i2}}{\alpha_{2}}\right)^{\frac{\alpha_{2}}{\alpha_{1}+\alpha_{2}-1}}\\
X_{i2} & =e^{\frac{\alpha_{0}+u_{i}}{1-\alpha_{1}-\alpha_{2}}}\left(\frac{Z_{i2}}{\alpha_{2}}\right)^{\frac{1-\alpha_{1}}{\alpha_{1}+\alpha_{2}-1}}\left(\frac{Z_{i1}}{\alpha_{1}}\right)^{\frac{\alpha_{1}}{\alpha_{1}+\alpha_{2}-1}}\\
\ol Y_{i} & =e^{\frac{\alpha_{0}+u_{i}}{1-\alpha_{1}-\alpha_{2}}}\left(\frac{Z_{i1}}{\alpha_{1}}\right)^{\frac{\alpha_{1}}{\alpha_{1}+\alpha_{2}-1}}\left(\frac{Z_{i2}}{\alpha_{2}}\right)^{\frac{\alpha_{2}}{\alpha_{1}+\alpha_{2}-1}}\\
 & =e^{\alpha_{0}+u_{i}}\left(\frac{\alpha_{2}Z_{i1}}{\alpha_{1}Z_{i2}}\right)^{\alpha_{2}}X_{i1}^{^{\alpha_{1}+\alpha_{2}}}=e^{\alpha_{0}+u_{i}}\left(\frac{\alpha_{1}Z_{i2}}{\alpha_{2}Z_{i1}}\right)^{\alpha_{1}}X_{i2}^{^{\alpha_{1}+\alpha_{2}}}
\end{align*}
In log forms
\begin{align*}
x_{i1} & =h_{1}\left(u_{i},z_{i}\right)=\frac{\alpha_{0}+\left(1-\alpha_{2}\right)\log\alpha_{1}+\alpha_{2}\log\alpha_{2}}{1-\alpha_{1}-\alpha_{2}}-\frac{1-\alpha_{2}}{1-\alpha_{1}-\alpha_{2}}z_{i1}-\frac{\alpha_{2}}{1-\alpha_{1}-\alpha_{2}}z_{i2}+\frac{1}{1-\alpha_{1}-\alpha_{2}}u_{i}\\
x_{i2} & =h_{2}\left(u_{i},z_{i}\right)=\frac{\alpha_{0}+\alpha_{1}\log\alpha_{1}+\left(1-\alpha_{1}\right)\log\alpha_{2}}{1-\alpha_{1}-\alpha_{2}}-\frac{\alpha_{1}}{1-\alpha_{1}-\alpha_{2}}z_{i1}-\frac{1-\alpha_{1}}{1-\alpha_{1}-\alpha_{2}}z_{i2}+\frac{1}{1-\alpha_{1}-\alpha_{2}}u_{i}\\
\ol y_{i} & =\overline{y}\left(u_{i},z_{i}\right)=\frac{\alpha_{0}+\alpha_{1}\log\alpha_{1}+\alpha_{2}\log\alpha_{2}}{1-\alpha_{1}-\alpha_{2}}-\frac{\alpha_{1}}{1-\alpha_{1}-\alpha_{2}}z_{i1}-\frac{\alpha_{2}}{1-\alpha_{1}-\alpha_{2}}z_{i2}+\frac{1}{1-\alpha_{1}-\alpha_{2}}u_{i}\\
 & =\alpha_{0}+\alpha_{2}\log\left(\alpha_{2}/\alpha_{1}\right)+\left(\alpha_{1}+\alpha_{2}\right)h_{1}\left(u_{i},z_{i}\right)+\alpha_{2}z_{i1}-\alpha_{2}z_{i2}+u_{i}\\
 & =\alpha_{0}+\alpha_{1}\log\left(\alpha_{1}/\alpha_{2}\right)+\left(\alpha_{1}+\alpha_{2}\right)h_{2}\left(u_{i},z_{i}\right)-\alpha_{1}z_{i1}+\alpha_{1}z_{i2}+u_{i}
\end{align*}
Taking inverses
\begin{align*}
u_{i} & =h_{1}^{-1}\left(x_{i1},z_{i}\right):=-\left[\alpha_{0}+\left(1-\alpha_{2}\right)\log\alpha_{1}+\alpha_{2}\log\alpha_{2}\right]+\left(1-\alpha_{1}-\alpha_{2}\right)x_{i1}+\left(1-\alpha_{2}\right)z_{i1}+\alpha_{2}z_{i2}\\
 & =h_{2}^{-1}\left(x_{i2},z_{i}\right):=-\left[\alpha_{0}+\alpha_{1}\log\alpha_{1}+\left(1-\alpha_{1}\right)\log\alpha_{2}\right]+\left(1-\alpha_{1}-\alpha_{2}\right)x_{i2}+\alpha_{1}z_{i1}+\left(1-\alpha_{1}\right)z_{i2}
\end{align*}

Hence,
\begin{align*}
\gamma_{1}\left(x_{i1},z_{i}\right) & =\overline{y}\left(h_{1}^{-1}\left(x_{i1},z_{i}\right),z_{i}\right)=-\log\alpha_{1}+x_{i1}+z_{i1},\\
\gamma_{2}\left(x_{i2},z_{i}\right) & =\overline{y}\left(h_{2}^{-1}\left(x_{i2},z_{i}\right),z_{i}\right)=-\log\alpha_{2}+x_{i2}+z_{i2},
\end{align*}

and
\begin{align}
y_{i}=\gamma_{1}\left(x_{i1},z_{i}\right)+\epsilon_{i}= & -\log\alpha_{1}+x_{i1}+z_{i1}+\epsilon_{i}\nonumber \\
=\gamma_{2}\left(x_{i2},z_{i}\right)+\epsilon_{i}= & -\log\alpha_{2}+x_{i2}+z_{i2}+\epsilon_{i}.\label{eq:CD_RF}
\end{align}
It is then evident that $\alpha_{1}$ or $\alpha_{2}$ can be estimated
directly from \eqref{eq:CD_RF} from the corresponding subsample where
$x_{i1}$ or $x_{i2}$ is observed. Furthermore, we may test input
optimality based on equation \eqref{eq:CD_RF}.

\subsection{\label{subsec:CD_StratComp}Nash Equilibrium under Strategic Complementarity}

Suppose that, given $u,z$ and the other partner's choice $X_{2}$,
partner 1 solves
\[
\max_{X_{1}}\pi_{1}\left(X,u;Z\right):=\l_{1}\left(F\left(X,u\right)-Z_{1}X_{1}-Z_{2}X_{2}\right)+Z_{1}X_{1}-\frac{1}{2}c_{1}X_{1}^{2},
\]
where $F\left(X,u\right):=e^{u+\a_{0}}X_{1}^{\a_{1}}X_{2}^{\a_{2}}$,
$\l_{1}\in\left(0,1\right)$ and $c_{1}>0$. Similarly, partner 2
solves
\[
\max_{X_{2}}\pi_{2}\left(X,u;Z\right):=\l_{2}\left(F\left(X,u\right)-Z_{1}X_{1}-Z_{2}X_{2}\right)+Z_{2}X_{2}-\frac{1}{2}c_{2}X_{2}^{2},
\]
with $\l_{2}\in\left(0,1\right)$ and $c_{2}>0$.

Since the game is supermodular by \eqref{eq:eg_strat_comp}, the set
of Nash equilibria admits a minimum and a maximum under the partial
order defined by bivariate monotonicity. Let $X^{*}\left(u,Z\right)$
the minimum NE and $X^{**}$ the maximum NE.

Suppose that $X^{*}\neq X^{**}$. Then WLOG suppose that $X^{*}\lneqq X^{**}$.
Moreover, both $X^{*}$ and $X^{**}$ must solve the FOCs:
\begin{align*}
\Dif_{X_{1}}F\left(X,u\right)+\frac{1-\l_{1}}{\l_{1}}Z_{1}-\frac{c_{1}}{\l_{1}}X_{1} & =0,\\
\Dif_{X_{2}}F\left(X,u\right)+\frac{1-\l_{2}}{\l_{2}}Z_{2}-\frac{c_{2}}{\l_{2}}X_{2} & =0,
\end{align*}
or in vector form
\begin{equation}
\Dif_{X}F\left(X,u\right)=\left(\begin{array}{c}
\frac{c_{1}}{\l_{1}}X_{1}\\
\frac{c_{2}}{\l_{2}}X_{2}
\end{array}\right)-\left(\begin{array}{c}
\frac{1-\l_{1}}{\l_{1}}Z_{1}\\
\frac{1-\l_{2}}{\l_{2}}Z_{2}
\end{array}\right).\label{eq:FOC_vec}
\end{equation}

Taking difference of \eqref{eq:FOC_vec} evaluated at $X^{*}$ and
$X^{**}$, we have
\[
\Dif_{X}F\left(X^{^{**}},u\right)-\Dif_{X}F\left(X^{*},u\right)=\left(\begin{array}{c}
\frac{c_{1}}{\l_{1}}\left(X_{1}^{**}-X_{1}^{*}\right)\\
\frac{c_{2}}{\l_{2}}\left(X_{2}^{**}-X_{2}^{*}\right)
\end{array}\right),
\]
and hence, by $X^{*}\neq X^{**}$,
\begin{align}
 & \left(X^{**}-X^{*}\right)^{'}\left(\Dif_{X}F\left(X^{^{**}},u\right)-\Dif_{X}F\left(X^{*},u\right)\right)\nonumber \\
= & \frac{c_{1}}{\l_{1}}\left(X_{1}^{**}-X_{1}^{*}\right)^{2}+\frac{c_{2}}{\l_{2}}\left(X_{2}^{**}-X_{2}^{*}\right)^{2}>0.\label{eq:pos}
\end{align}
In the meanwhile, we have
\[
\Dif_{X}F\left(X^{^{**}},u\right)-\Dif_{X}F\left(X^{*},u\right)=\Dif_{XX}F\left(\tilde{X},u\right)\left(X^{**}-X^{*}\right)
\]
for some $\tilde{X}$ between $X^{*}$ and $X^{**}$, and we know
$\Dif_{XX}F\left(\tilde{X},u\right)$ must be negative semi-definite
for Cobb-Douglas production functions under the assumption of $\a_{1}+\a_{2}\leq1$,
which implies
\begin{align*}
 & \left(X^{**}-X^{*}\right)^{'}\left(\Dif_{X}F\left(X^{^{**}},u\right)-\Dif_{X}F\left(X^{*},u\right)\right)\\
= & \left(X^{**}-X^{*}\right)^{'}\Dif_{XX}F\left(\tilde{X},u\right)\left(X^{**}-X^{*}\right)\leq0,
\end{align*}
a contradiction to \eqref{eq:pos}. Hence, we conclude that $X^{*}=X^{**}$
and thus the NE is unique.

Take any $\ol u>\ul u$ and write $\ol X:=X^{*}\left(\ol u,Z\right)$
and $\ul X:=X^{*}\left(\ul u,Z\right)$. By, for example, \citet*{milgrom1990rationalizability},
we know that $\ol X\geq\ul X$.

Suppose that $\ol X_{1}=\ul X_{1}$. Then
\begin{align*}
\Dif_{X_{1}}F\left(\ol X_{1},\ol X_{2},\ol u\right) & =\frac{c_{1}}{\l_{1}}\ol X_{1}-\frac{1-\l_{1}}{\l_{1}}Z_{1}\\
 & =\frac{c_{1}}{\l_{1}}\ul X_{1}-\frac{1-\l_{1}}{\l_{1}}Z_{1}=\Dif_{X_{1}}F\left(\ul X,\ul u\right)\\
 & =\Dif_{X_{1}}F\left(\ol X_{1},\ul X_{2},\ul u\right)\\
 & \leq\Dif_{X_{1}}F\left(\ol X_{1},\ol X_{2},\ul u\right)\quad\text{ by \ensuremath{\Dif_{X_{2}X_{1}}F>0}}\text{ in }\eqref{eq:eg_strat_comp}
\end{align*}
which contradicts with \eqref{eq:eg_inc_diff}.

Thus $\ol X_{1}>\ul X_{1}$ and similarly $\ol X_{2}>\ul X_{2}$.

Hence, both $X_{1}^{*}\left(u,Z\right)$ and $X_{2}^{*}\left(u,Z\right)$
must be strictly increasing in $u$, satisfying Assumption \eqref{assu:input_mono}.

\section{{\normalsize{}{}\label{sec:EstProof}}Proofs}

\subsection{Additional Notation and Lemmas}

\paragraph*{Notation}

For each $i$, we use $x_{ij}$ to denote the observed input and use
$x_{ik}$ to denote the latent input variable for firm $i$, i.e.
\begin{align*}
x_{ij} & =x_{i1},\ x_{ik}=x_{i2},\ \text{for }d_{i}=1,\\
x_{ij} & =x_{i2},\ x_{ik}=x_{i1},\ \text{for }d_{i}=2.
\end{align*}
We write
\begin{align*}
d_{i1} & :=\mathbf{\mathbbm1}\left\{ d_{i}=1\right\} ,\\
d_{i2} & :=\mathbf{\mathbbm1}\left\{ d_{i}=2\right\} ,
\end{align*}
so that $x_{ij}=d_{i1}x_{i1}+d_{i2}x_{i2}$ while $x_{ik}:=d_{i1}x_{i2}+d_{i2}x_{i1}$.
We write $\overline{x}_{i}:=\left(1,x_{i1},x_{i2}\right)^{^{\prime}}$
to denote the true regressor vector. (Recall $\tilde{x}_{i}$ denotes
the same regressor vector with imputed latent input $\hat{x}_{ik}$
in place of $x_{ik}$.)

Moreover, we suppress the instrumental variables $z_{i}$ in functions,
such as $\gamma_{1}\left(u_{i},z_{i}\right)$, unless it becomes necessary
to emphasize the dependence of such functions on $z_{i}$.
\begin{lem}
\label{lem:bound_inv}Under Assumption \ref{assu:StrongMono}, if
$\left\Vert \hat{\gamma}_{k}-\gamma_{k}\right\Vert _{\infty}=O_{p}\left(a_{n}\right)$,
then $\left\Vert \hat{\gamma}_{k}^{-1}-\gamma_{k}^{-1}\right\Vert _{\infty}=O_{p}\left(a_{n}\right)$
and $\left|\hat{x}_{ik}-x_{ik}\right|=O_{p}\left(a_{n}\right)$.
\end{lem}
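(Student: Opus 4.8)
The plan is to exploit the fact that, under Assumption \ref{assu:StrongMono}, inverting a strictly increasing function whose derivative is bounded away from zero is a Lipschitz operation, so sup-norm errors are transmitted to the inverse up to the factor $1/\underline{c}$. I would organize the argument into two parts matching the two conclusions, after first recording the Lipschitz property of the true inverse: since $\gamma_k$ is continuously differentiable with $\gamma_k'\ge\underline{c}>0$, it is a strictly increasing bijection onto its (compact) range and $\gamma_k^{-1}$ is differentiable with $(\gamma_k^{-1})'(v)=1/\gamma_k'(\gamma_k^{-1}(v))\le 1/\underline{c}$; hence $\gamma_k^{-1}$ is Lipschitz with constant $1/\underline{c}$. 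Throughout I suppress $z$ and fix $k$.

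For the inverse bound I would first note that $\hat\gamma_k$ is invertible with probability approaching one — which follows, e.g., from uniform convergence of the derivative of the kernel first step, or may be ensured by taking a monotone first-stage estimator — so that $\hat\gamma_k^{-1}$ is well defined on the relevant range. Then, for any $v$ in that range, set $c:=\gamma_k^{-1}(v)$ and $\hat c:=\hat\gamma_k^{-1}(v)$, so that $\gamma_k(c)=v=\hat\gamma_k(\hat c)$ and therefore
\[
\gamma_k(c)-\gamma_k(\hat c)=\hat\gamma_k(\hat c)-\gamma_k(\hat c).
\]
The right-hand side is bounded in absolute value by $\|\hat\gamma_k-\gamma_k\|_{\infty}$, while the mean value theorem applied to $\gamma_k$ bounds the left-hand side below by $\underline{c}\,|c-\hat c|$. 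Rearranging yields $|c-\hat c|\le\|\hat\gamma_k-\gamma_k\|_{\infty}/\underline{c}$ uniformly in $v$, i.e. $\|\hat\gamma_k^{-1}-\gamma_k^{-1}\|_{\infty}\le\underline{c}^{-1}\|\hat\gamma_k-\gamma_k\|_{\infty}=O_{p}(a_{n})$. Note this step uses only the lower bound on $\gamma_k'$, not on $\hat\gamma_k'$.

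For the imputation bound, for a firm with $d_i=1$ the true and imputed latent inputs are $x_{i2}=\gamma_2^{-1}(\gamma_1(x_{i1}))$ and $\hat x_{i2}=\hat\gamma_2^{-1}(\hat\gamma_1(x_{i1}))$ by \eqref{eq:ID_latent}. A triangle-inequality split,
\[
\left|\hat x_{i2}-x_{i2}\right|\le\left|\hat\gamma_2^{-1}(\hat\gamma_1(x_{i1}))-\gamma_2^{-1}(\hat\gamma_1(x_{i1}))\right|+\left|\gamma_2^{-1}(\hat\gamma_1(x_{i1}))-\gamma_2^{-1}(\gamma_1(x_{i1}))\right|,
\]
controls the first term by $\|\hat\gamma_2^{-1}-\gamma_2^{-1}\|_{\infty}=O_{p}(a_{n})$ from the first part, and the second term by the Lipschitz continuity of $\gamma_2^{-1}$, namely $\le\underline{c}^{-1}|\hat\gamma_1(x_{i1})-\gamma_1(x_{i1})|\le\underline{c}^{-1}\|\hat\gamma_1-\gamma_1\|_{\infty}=O_{p}(a_{n})$. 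Both bounds are uniform in $i$, and the case $d_i=2$ is symmetric, giving $|\hat x_{ik}-x_{ik}|=O_{p}(a_{n})$.

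The main obstacle is edge bookkeeping rather than the core inequality: I must guarantee (i) that $\hat\gamma_k$ is invertible with probability approaching one and (ii) that the composition $\hat\gamma_2^{-1}(\hat\gamma_1(\cdot))$ is well defined, i.e. that $\hat\gamma_1(x_{i1})$ falls in the range of $\hat\gamma_2$. Under the compact-support and bounded-derivative conditions maintained for $\Gamma$ in Assumption \ref{assu:SP_RegCon}, together with $\|\hat\gamma_k-\gamma_k\|_{\infty}=O_{p}(a_{n})=o_{p}(1)$, both hold with probability approaching one, since the ranges of $\hat\gamma_1,\hat\gamma_2$ converge to those of $\gamma_1,\gamma_2$, which overlap on the common support of $u_i$ precisely by the matching construction underlying \eqref{eq:ID_latent}. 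Everything else reduces to the elementary Lipschitz-inversion estimate above.
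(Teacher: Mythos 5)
Your proof is correct and follows essentially the same route as the paper's: the identity $\gamma_k(\gamma_k^{-1}(v))=v=\hat\gamma_k(\hat\gamma_k^{-1}(v))$ combined with the reverse-Lipschitz bound $\underline{c}\,|u_1-u_2|\le|\gamma_k(u_1)-\gamma_k(u_2)|$ for the inverse, then the identical triangle-inequality split of $\hat\gamma_2^{-1}(\hat\gamma_1(\cdot))-\gamma_2^{-1}(\gamma_1(\cdot))$ controlled by the inverse sup-norm bound and the $1/\underline{c}$-Lipschitz continuity of the true inverse. Your added bookkeeping on invertibility of $\hat\gamma_k$ and range compatibility with probability approaching one is a welcome refinement the paper leaves implicit, but it does not change the argument.
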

\begin{proof}
By Assumption \ref{assu:StrongMono} we have
\[
\ul c\left|u_{1}-u_{2}\right|\leq\left|\g_{k}\left(u_{1}\right)-\g_{k}\left(u_{2}\right)\right|
\]
For any $v\in Range\left(\g_{k}\right)$,
\begin{align*}
\left|\hat{\g}_{k}^{-1}\left(v\right)-\g_{k}^{-1}\left(v\right)\right| & \leq\frac{1}{\ul c}\left|\g_{k}\left(\hat{\g}_{k}^{-1}\left(v\right)\right)-\g_{k}\left(\g_{k}^{-1}\left(v\right)\right)\right|=\frac{1}{\ul c}\left|\g_{k}\left(\hat{\g}_{k}^{-1}\left(v\right)\right)-v\right|\\
 & =\frac{1}{\ul c}\left|\g_{k}\left(\hat{\g}_{k}^{-1}\left(v\right)\right)-\hat{\g}_{k}\left(\hat{\g}_{k}^{-1}\left(v\right)\right)\right|\leq\frac{1}{\ul c}\norm{\hat{\g}_{k}-\g_{k}}_{\infty}=O_{p}\left(a_{n}\right).
\end{align*}

Furthermore, observing that
\[
\ul c\left|\g_{k}^{-1}\left(v_{1}\right)-\g_{k}^{-1}\left(v_{2}\right)\right|\leq\left|\g_{k}\left(\g_{k}^{-1}\left(v_{1}\right)\right)-\g_{k}\left(\g_{k}^{-1}\left(v_{2}\right)\right)\right|=\left|v_{1}-v_{2}\right|
\]
we have by Assumption \ref{assu:StrongMono} and Lemma \ref{lem:bound_inv},
for $d_{i}=1$,
\begin{align}
\left|\hat{x}_{ik}-x_{ik}\right| & =\left|\hat{\g}_{j}^{-1}\left(\hat{\g}_{k}\left(x_{ik}\right)\right)-\g_{j}^{-1}\left(\g_{k}\left(x_{ik}\right)\right)\right|\nonumber \\
 & =\left|\hat{\g}_{j}^{-1}\left(\hat{\g}_{k}\left(x_{ik}\right)\right)-\g_{j}^{-1}\left(\hat{\g}_{k}\left(x_{ik}\right)\right)+\g_{j}^{-1}\left(\hat{\g}_{k}\left(x_{ik}\right)\right)-\g_{j}^{-1}\left(\g_{k}\left(x_{ik}\right)\right)\right|\nonumber \\
 & \leq\left|\hat{\g}_{j}^{-1}\left(\hat{\g}_{k}\left(x_{ik}\right)\right)-\g_{j}^{-1}\left(\hat{\g}_{k}\left(x_{ik}\right)\right)\right|+\left|\g_{j}^{-1}\left(\hat{\g}_{k}\left(x_{ik}\right)\right)-\g_{j}^{-1}\left(\g_{k}\left(x_{ik}\right)\right)\right|\nonumber \\
 & \leq\norm{\hat{\g}_{j}^{-1}-\g_{j}^{-1}}_{\infty}+\frac{1}{\ul c}\left|\hat{\g}_{k}\left(x_{ik}\right)-\g_{k}\left(x_{ik}\right)\right|\nonumber \\
 & \leq\norm{\hat{\g}_{j}^{-1}-\g_{j}^{-1}}_{\infty}+\frac{1}{\ul c}\norm{\hat{\g}_{k}-\g_{k}}_{\infty}\nonumber \\
 & =O_{p}\left(a_{n}\right).\label{eq:bound_h}
\end{align}
\end{proof}
\begin{lem}
\label{lem:FuncDeriv}Under Assumption \ref{assu:StrongMono}:
\begin{itemize}
\item[(i)] The pathwise derivative of $\gamma_{k}^{-1}$ w.r.t. $\gamma_{k}$
along $\tau_{k}\in\Gamma$ is given by
\begin{align*}
\nabla_{\gamma_{k}}\gamma_{k}^{-1}\left[\tau_{k}\right]:=\lim_{t\searrow0}\frac{\left(\gamma_{k}+t\tau_{k}\right)^{-1}\left(v\right)-\gamma_{k}^{-1}\left(v\right)}{t} & =-\frac{\tau_{k}\left(\gamma_{k}^{-1}\left(v\right)\right)}{\gamma_{k}^{^{\prime}}\left(\gamma_{k}^{-1}\left(v\right)\right)}.
\end{align*}
\item[(ii)] The pathwise derivative of $\gamma_{k}^{-1}\left(\gamma_{j}\left(\cdot\right)\right)$
w.r.t. $\gamma_{j}$ along $\tau_{j}\in\Gamma$ is given by
\begin{align*}
\nabla_{\gamma_{j}}\left(\gamma_{k}^{-1}\circ\gamma_{j}\right)\left[\tau_{j}\right]:= & \lim_{t\searrow0}\frac{\gamma_{k}^{-1}\left(\gamma_{j}\left(x\right)+t\tau_{j}\left(x\right)\right)-\gamma_{k}^{-1}\left(\gamma_{j}\left(x\right)\right)}{t}\\
= & \left(\gamma_{k}^{-1}\right)^{^{\prime}}\left(\gamma_{j}\left(x\right)\right)\tau_{j}\left(x\right)=\frac{1}{\gamma_{k}^{^{\prime}}\left(\gamma_{k}^{-1}\left(\gamma_{j}\left(x\right)\right)\right)}\tau_{j}\left(x\right).
\end{align*}
\item[(iii)] The second-order derivatives have bounded norms:
\begin{align*}
\nabla_{\gamma_{k}}^{2}\gamma_{k}^{-1}\left[\tau_{k}\right]\left[\tau_{k}\right] & \leq M\left\Vert \tau_{k}\right\Vert ^{2}\\
\nabla_{\gamma_{j}}^{2}\left(\gamma_{k}^{-1}\circ\gamma_{j}\right)\left[\tau_{j}\right]\left[\tau_{j}\right] & \leq M\left\Vert \tau_{k}\right\Vert ^{2}
\end{align*}
\end{itemize}
\end{lem}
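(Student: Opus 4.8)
The plan is to derive all three formulas from the inverse function theorem, leveraging the lower bound $\gamma_k' > \underline{c} > 0$ supplied by Assumption \ref{assu:StrongMono} to guarantee that every inverse in sight is differentiable and that no denominator degenerates. Parts (i) and (ii) are first-order computations (implicit differentiation and the ordinary chain rule, respectively), while part (iii) requires one further round of differentiation together with the uniform smoothness bounds built into the Donsker class $\Gamma$ from Assumption \ref{assu:SP_RegCon}. Because each expression is linear in the perturbation direction, the one-sided limits $t \searrow 0$ coincide with the full Gateaux derivatives.

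For part (i), I would fix $v$ and set $u_t := (\gamma_k + t\tau_k)^{-1}(v)$, so that $u_t$ is implicitly defined by $\gamma_k(u_t) + t\tau_k(u_t) = v$. The map $(t,u) \mapsto \gamma_k(u) + t\tau_k(u) - v$ has $u$-derivative $\gamma_k'(u) + t\tau_k'(u)$, which stays bounded away from zero for all small $t$ since $\gamma_k' > \underline{c}$ and $\tau_k'$ is uniformly bounded on $\Gamma$. The implicit function theorem then makes $t \mapsto u_t$ differentiable; differentiating the defining relation in $t$ and evaluating at $t=0$ (where $u_0 = \gamma_k^{-1}(v)$) gives $\gamma_k'(u_0)\,\dot u_0 + \tau_k(u_0) = 0$, which rearranges to $\dot u_0 = -\tau_k(\gamma_k^{-1}(v))/\gamma_k'(\gamma_k^{-1}(v))$, the stated formula.

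For part (ii), the outer function $\gamma_k^{-1}$ is held fixed and only its argument is perturbed, so no implicit differentiation is needed. Since $\gamma_k' > \underline{c}$, the inverse function theorem gives that $\gamma_k^{-1}$ is differentiable with $(\gamma_k^{-1})'(w) = 1/\gamma_k'(\gamma_k^{-1}(w))$. The stated limit is then an ordinary directional derivative of the composition, equal to $(\gamma_k^{-1})'(\gamma_j(x))\,\tau_j(x)$ by the chain rule; substituting $w = \gamma_j(x)$ into the inverse-derivative formula yields the second equality.

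For part (iii), I would differentiate the first-order expressions from (i) and (ii) once more in $t$. In each case the result is a rational expression whose numerator is quadratic in the perturbation and its first derivative (and linear in $\gamma_k''$) and whose denominator is a positive power of $\gamma_k'$. Using Assumption \ref{assu:StrongMono} to bound the denominators below by powers of $\underline{c}$, together with the uniform upper bounds on $\gamma_k'$, $\gamma_k''$ and on the elements of $\Gamma$ and their derivatives, every surviving term is controlled by a fixed multiple of $\norm{\tau_k}^2$ (resp.\ $\norm{\tau_j}^2$) in the norm on $\Gamma$, giving the claimed bound with a single constant $M$. The main obstacle is the bookkeeping in this last step: one must compute the second Gateaux derivative carefully, verify that each term is genuinely quadratic in the perturbation (so the bound is $\norm{\tau_k}^2$ and not merely $\norm{\tau_k}$), and confirm that the only denominators appearing are powers of $\gamma_k'$, so that the lower bound $\underline{c}$ is precisely what prevents them from blowing up.
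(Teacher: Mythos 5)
Your proposal is correct and follows essentially the same route as the paper: parts (i) and (ii) are the standard derivative formulas for the functional inversion map (which the paper simply cites from Lemmas 3.9.20 and 3.9.25 of van der Vaart and Wellner, and which your implicit-function-theorem and chain-rule derivations reproduce), while for part (iii) the paper likewise computes the second pathwise derivative explicitly and bounds it using $\gamma_k' \geq \underline{c} > 0$ from Assumption \ref{assu:StrongMono} together with the uniform bounds on $\gamma_k''$ and the first derivatives of elements of $\Gamma$ from Assumption \ref{assu:SP_RegCon}(i). The only cosmetic difference is that you sketch the second-derivative bookkeeping rather than writing out the resulting rational expression, but the structure you identify (numerator quadratic in the perturbation and its first derivative, linear in $\gamma_k''$; denominators positive powers of $\gamma_k'$ bounded below by $\underline{c}$) is exactly what appears in the paper's displayed formula.
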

\begin{proof}
(i) and (ii) follow immediately from the definition of pathwise derivatives.
See, e.g., Lemma 3.9.20 and 3.9.25 in \citet*{van1996weak} for reference.
For (iii),
\begin{align*}
\Dif_{\g_{k}}^{2}\g_{k}^{-1}\left[\tau_{k}\right]\left[\nu_{k}\right] & =\frac{\tau_{k}^{'}\left(\g_{k}^{-1}\right)}{\g_{k}^{'}\left(\g_{k}^{-1}\right)}\cd\frac{\nu_{k}\left(\g_{k}^{-1}\right)}{\g_{k}^{'}\left(\g_{k}^{-1}\right)}-\frac{\tau_{k}\left(\g_{k}^{-1}\right)}{\left[\g_{k}^{'}\left(\g_{k}^{-1}\right)\right]^{2}}\left[\g_{k}^{''}\left(\g_{k}^{-1}\right)+\frac{1}{\g_{k}^{'}\left(\g_{k}^{-1}\right)}\right]\nu_{k}\left(\g_{k}^{-1}\right)\\
 & \leq M\norm{\tau_{k}}\norm{\nu_{k}}
\end{align*}
since $\g_{k}^{'}\geq\ul c>0$ by Assumption \ref{assu:StrongMono}
and $\g^{''}$ and $\tau_{k}^{'}$ are uniformly bounded above by
Assumption \ref{assu:SP_RegCon}(i). Similarly for $\Dif_{\g_{j}}^{2}\left(\g_{k}^{-1}\circ\g_{j}\right)$.
\end{proof}
\begin{lem}
\label{lem:Influence} Writing $\gamma:=\left(\gamma_{1},\gamma_{2}\right)$,
the pathwise derivative of $\gamma_{k}^{-1}\circ\gamma_{j}$ w.r.t.
$\gamma$ along $\tau$ is given by
\begin{align*}
\nabla_{\gamma}\left(\gamma_{k}^{-1}\circ\gamma_{j}\right)\left[\tau\right]:= & \lim_{t\searrow0}\frac{\left(\gamma_{k}+t\tau_{k}\right)^{-1}\left(\gamma_{j}\left(x\right)+t\tau_{j}\left(x\right)\right)-\gamma_{k}^{-1}\left(\gamma_{j}\left(x\right)\right)}{t}\\
= & \frac{1}{\gamma_{k}^{^{\prime}}\left(\gamma_{k}^{-1}\left(\gamma_{j}\left(x\right)\right)\right)}\left[\tau_{j}\left(x\right)-\tau_{k}\left(\gamma_{k}^{-1}\left(\gamma_{j}\left(x\right)\right)\right)\right]
\end{align*}
\end{lem}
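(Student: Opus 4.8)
The plan is to reduce the joint perturbation to the two single-direction pathwise derivatives already established in Lemma \ref{lem:FuncDeriv}, by an add-and-subtract decomposition that separates the perturbation of the inverted map from the perturbation of its argument. Writing $v:=\gamma_{j}\left(x\right)$ and $v_{t}:=\gamma_{j}\left(x\right)+t\tau_{j}\left(x\right)$, I would decompose the difference quotient as
\[
\frac{\left(\gamma_{k}+t\tau_{k}\right)^{-1}\left(v_{t}\right)-\gamma_{k}^{-1}\left(v\right)}{t}=\frac{\left(\gamma_{k}+t\tau_{k}\right)^{-1}\left(v_{t}\right)-\gamma_{k}^{-1}\left(v_{t}\right)}{t}+\frac{\gamma_{k}^{-1}\left(v_{t}\right)-\gamma_{k}^{-1}\left(v\right)}{t},
\]
denoting the first summand by $A_{t}$ and the second by $B_{t}$. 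The term $B_{t}$ perturbs only the argument, so it is exactly the difference quotient appearing in Lemma \ref{lem:FuncDeriv}(ii); hence $B_{t}\to\left(\gamma_{k}^{-1}\right)^{^{\prime}}\left(v\right)\tau_{j}\left(x\right)=\tau_{j}\left(x\right)/\gamma_{k}^{^{\prime}}\left(\gamma_{k}^{-1}\left(v\right)\right)$ as $t\searrow0$, with no complications.

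The term $A_{t}$ perturbs only the inverted map, and at a \emph{fixed} argument $w$ it is the difference quotient of Lemma \ref{lem:FuncDeriv}(i), whose limit is $-\tau_{k}\left(\gamma_{k}^{-1}\left(w\right)\right)/\gamma_{k}^{^{\prime}}\left(\gamma_{k}^{-1}\left(w\right)\right)$. The one genuine subtlety, and the main obstacle, is that the argument $v_{t}$ here is itself moving with $t$: I cannot simply quote Lemma \ref{lem:FuncDeriv}(i) at a frozen point, but must show that the moving-argument difference quotient still converges to the fixed-argument limit evaluated at the limiting point $v$. The natural way to close this gap is to establish (a) that the convergence in Lemma \ref{lem:FuncDeriv}(i) holds uniformly in the argument over a neighborhood of $v$, and (b) that the limiting map $w\mapsto-\tau_{k}\left(\gamma_{k}^{-1}\left(w\right)\right)/\gamma_{k}^{^{\prime}}\left(\gamma_{k}^{-1}\left(w\right)\right)$ is continuous.

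Both ingredients follow from the standing assumptions. Assumption \ref{assu:StrongMono} gives $\gamma_{k}^{^{\prime}}\geq\underline{c}>0$, so $\gamma_{k}^{-1}$ is Lipschitz and bounded away from singularity, and Assumption \ref{assu:SP_RegCon}(i) gives that $\gamma_{k},\tau_{k}\in\Gamma$ have uniformly bounded first and second derivatives. Concretely, I would expand $\left(\gamma_{k}+t\tau_{k}\right)^{-1}\left(w\right)-\gamma_{k}^{-1}\left(w\right)$ to first order in $t$, controlling the remainder uniformly in $w$ by the second-order bound in Lemma \ref{lem:FuncDeriv}(iii); this yields $A_{t}=-\tau_{k}\left(\gamma_{k}^{-1}\left(v_{t}\right)\right)/\gamma_{k}^{^{\prime}}\left(\gamma_{k}^{-1}\left(v_{t}\right)\right)+O\left(t\right)$ with the $O\left(t\right)$ uniform in the argument. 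Since $v_{t}\to v$ and the displayed limiting map is continuous (again by the uniform lower bound on $\gamma_{k}^{^{\prime}}$ and boundedness of the relevant derivatives), I conclude $A_{t}\to-\tau_{k}\left(\gamma_{k}^{-1}\left(v\right)\right)/\gamma_{k}^{^{\prime}}\left(\gamma_{k}^{-1}\left(v\right)\right)$.

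Finally I would add the two limits, obtaining
\[
\nabla_{\gamma}\left(\gamma_{k}^{-1}\circ\gamma_{j}\right)\left[\tau\right]=\frac{1}{\gamma_{k}^{^{\prime}}\left(\gamma_{k}^{-1}\left(\gamma_{j}\left(x\right)\right)\right)}\left[\tau_{j}\left(x\right)-\tau_{k}\left(\gamma_{k}^{-1}\left(\gamma_{j}\left(x\right)\right)\right)\right],
\]
which is the asserted formula. The bulk of the work is the uniformity argument for $A_{t}$; the rest is bookkeeping with the two derivatives already in hand.
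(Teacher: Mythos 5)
Your proposal is correct and takes essentially the same route as the paper: the paper's proof uses exactly your add-and-subtract decomposition, splitting the difference quotient at $\gamma_{k}^{-1}\left(\gamma_{j}\left(x\right)+t\tau_{j}\left(x\right)\right)$ and identifying the two summands with the single-direction derivatives of Lemma \ref{lem:FuncDeriv}(i) and (ii). If anything you are more careful than the paper, which passes to the limit in the moving-argument term $A_{t}$ without comment, whereas your uniformity argument (via the second-order bound of Lemma \ref{lem:FuncDeriv}(iii) together with $\gamma_{k}^{\prime}\geq\underline{c}>0$ from Assumption \ref{assu:StrongMono}) supplies the justification the paper leaves implicit.
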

\begin{proof}
By Lemma \ref{lem:FuncDeriv},
\begin{align*}
 & \frac{1}{t}\left[\left(\g_{k}+t\tau_{k}\right)^{-1}\left(\g_{j}\left(x\right)+t\tau_{j}\left(x\right)\right)-\g_{k}^{-1}\left(\g_{j}\left(x\right)\right)\right]\\
=\  & \frac{1}{t}\left[\left(\g_{k}+t\tau_{k}\right)^{-1}\left(\g_{j}\left(x\right)+t\tau_{j}\left(x\right)\right)-\g_{k}^{-1}\left(\g_{j}\left(x\right)+t\tau_{j}\left(x\right)\right)\right]\\
 & +\frac{1}{t}\left[\g_{k}^{-1}\left(\g_{j}\left(x\right)+t\tau_{j}\left(x\right)\right)-\g_{k}^{-1}\left(\g_{j}\left(x\right)\right)\right]\\
\to\  & \Dif_{\g_{k}}\g_{k}^{-1}\left[\tau_{k}\right]\left(\g_{j}\left(x\right)\right)+\Dif_{\g_{j}}\left(\g_{k}^{-1}\circ\g_{j}\right)\left[\tau_{j}\right]\\
=\  & -\frac{\tau_{k}\left(\g_{k}^{-1}\left(\g_{j}\left(x\right)\right)\right)}{\g_{k}^{'}\left(\g_{k}^{-1}\left(\g_{j}\left(x\right)\right)\right)}+\frac{1}{\g_{k}^{'}\left(\g_{k}^{-1}\left(\g_{j}\left(x\right)\right)\right)}\tau_{j}\left(x\right)\\
=\  & \frac{1}{\g_{k}^{'}\left(\g_{k}^{-1}\left(\g_{j}\left(x\right)\right)\right)}\left(\tau_{j}\left(x\right)-\tau_{k}\left(\g_{k}^{-1}\left(\g_{j}\left(x\right)\right)\right)\right)
\end{align*}
\end{proof}

\subsection{Proof of Theorem \ref{thm:AsymDist}(i)}
\begin{proof}
We verify the conditions in Lemma 5.4 of \citet{newey1994asymptotic},
or equivalently, Theorems 8.11 of \citet*{newey1994large}.

Recall $w_{i}:=\left(y_{i},x_{i},z_{i},d_{i}\right)$, $\g:=\left(\g_{1},\g_{2}\right)$
and
\begin{align*}
g\left(w_{i},\hat{\a},\hat{\g}\right)= & \ol z_{i}\left(y_{i}-\hat{\a}_{0}-\left(x_{i1}\hat{\a}_{1}+\hat{\g}_{2}^{-1}\left(\hat{\g}_{1}\left(x_{i1}\right)\right)\hat{\a}_{2}\right)d_{i1}-\left(x_{i2}\hat{\a}_{2}+\hat{\g}_{1}^{-1}\left(\hat{\g}_{2}\left(x_{i2}\right)\right)\hat{\a}_{2}\right)d_{i2}\right)\\
= & \ol z_{i}\left(y_{i}-\hat{\a}_{0}-x_{ij}\hat{\a}_{j}-\hat{\g}_{k}^{-1}\left(\hat{\g}_{j}\left(x_{ij}\right)\right)\hat{\a}_{k}\right)\\
g\left(w_{i},\hat{\g}\right)= & \ol z_{i}\left(y_{i}-\a_{0}-\left(x_{i1}\a_{1}+\hat{\g}_{2}^{-1}\left(\hat{\g}_{1}\left(x_{i1}\right)\right)\a_{2}\right)d_{i1}-\left(x_{i2}\a_{2}+\hat{\g}_{1}^{-1}\left(\hat{\g}_{2}\left(x_{i2}\right)\right)\a_{2}\right)d_{i2}\right)\\
= & \ol z_{i}\left(y_{i}-\a_{0}-x_{ij}\a_{j}-\hat{\g}_{k}^{-1}\left(\hat{\g}_{j}\left(x_{ij}\right)\right)\a_{k}\right)\\
= & \ol z_{i}\left(u_{i}+\e_{i}+\left[x_{ik}-\hat{\g}_{k}^{-1}\left(\hat{\g}_{j}\left(x_{ij}\right)\right)\right]\a_{k}\right)
\end{align*}
Clearly, $\E\left[g\left(w_{i},\g\right)\right]=\E\left[\ol z_{i}\left(u_{i}+\e_{i}\right)\right]=0$
by Assumptions \ref{assu:IVz} and \ref{assu:eps_error}. Moreover,
$\frac{1}{N}\sum_{i=1}^{N}g\left(w_{i},\hat{\a},\hat{\g}\right)=0$
by the definition of $\hat{\a}$.

Now, define
\begin{align}
G\left(w_{i},\hat{\g}-\g\right) & :=\Dif_{\g}g\left(w_{i},\g\right)\left[\hat{\g}-\g\right]\nonumber \\
 & =-\a_{k}\ol z_{i}\Dif_{\g}\left(\g_{k}^{-1}\circ\g_{j}\right)\left[\hat{\g}-\g\right]\nonumber \\
 & =\frac{-\a_{k}\ol z_{i}}{\g_{k}^{'}\left(\g_{k}^{-1}\left(\g_{j}\left(x_{ij}\right)\right)\right)}\left[\left(\hat{\g}_{j}-\g_{j}\right)\left(x_{ij}\right)-\left(\hat{\g}_{k}-\g_{k}\right)\left(\g_{k}^{-1}\left(\g_{j}\left(x_{ij}\right)\right)\right)\right]\nonumber \\
 & =-\frac{\a_{k}\ol z_{i}}{\g_{k}^{'}\left(x_{ik}\right)}\left[\hat{\g}_{j}\left(x_{ij}\right)-\g_{j}\left(x_{ij}\right)-\hat{\g}_{k}\left(x_{ik}\right)+\g_{k}\left(x_{ik}\right)\right]\text{ since }\g_{k}^{-1}\left(\g_{j}\left(x_{ij}\right)\right)=x_{ik}\nonumber \\
 & =d_{i1}\ol z_{i}\left(-\frac{\a_{2}}{\g_{2}^{'}}\right)\left(1,-1\right)\left(\begin{array}{c}
\hat{\g}_{1}-\g_{1}\\
\hat{\g}_{2}-\g_{2}
\end{array}\right)+d_{i2}\ol z_{i}\left(-\frac{\a_{1}}{\g_{1}^{'}}\right)\left(-1,1\right)\left(\begin{array}{c}
\hat{\g}_{1}-\g_{1}\\
\hat{\g}_{2}-\g_{2}
\end{array}\right)\nonumber \\
 & =-\ol z_{i}\left(d_{i1}\frac{\a_{2}}{\g_{2}^{'}}-d_{i2}\frac{\a_{1}}{\g_{1}^{'}}\right)\left(1,-1\right)\left(\hat{\g}-\g\right)\label{eq:G_direct}
\end{align}
By Lemma \ref{lem:FuncDeriv}(iii) and Lemma \ref{lem:Influence},
we deduce
\[
\norm{g\left(w,\hat{\g}\right)-g\left(w,\g\right)-G\left(w,\hat{\g}-\g\right)}=O_{p}\left(\norm{\hat{\g}-\g}_{\infty}^{2}\right)=o_{p}\left(\frac{1}{\sqrt{N}}\right)
\]
given our assumption that $\norm{\hat{\g}-\g}_{\infty}=o_{p}\left(N^{-1/4}\right)$.

Next, the stochastic equicontinuity condition
\begin{equation}
\frac{1}{\sqrt{N}}\sum_{i=1}^{N}\left(G\left(w_{i},\hat{\g}-\g\right)-\int G\left(w_{i},\hat{\g}-\g\right)d\P\left(w_{i}\right)\right)=o_{p}\left(\frac{1}{\sqrt{N}}\right)\label{eq:stoc_ec}
\end{equation}
is guaranteed by Assumptions \ref{assu:StrongMono} and \ref{assu:SP_RegCon}.
Specifically, $\hat{\g}-\g$ belongs to a Donsker class of functions
by the smoothness assumption while $1/\g_{k}^{'}\left(x_{ik}\right)\leq1/\ul c$
guarantees that $G\left(z_{i},\cd\right)$ is square-integrable, so
that $G\left(z_{i},\cd\right)$ is also Donsker and thus \eqref{eq:stoc_ec}
holds.

Now, write $\zeta_{i}:=\left(x_{i},z_{i}\right)$ so that $w_{i}=\left(y_{i},\zeta_{i},d_{i}\right)$.
Then we have
\begin{align*}
 & \int G\left(w_{i},\hat{\g}-\g\right)\P w_{i}\\
= & \int-\ol z_{i}\left(d_{i1}\frac{\a_{2}}{\g_{2}^{'}}-d_{i2}\frac{\a_{1}}{\g_{1}^{'}}\right)\left(1,-1\right)\left(\hat{\g}-\g\right)d\P\left(\zeta_{i},d_{i}\right)\\
= & \int-\ol z_{i}\left(\left[\int d_{i1}d\P\left(\rest{d_{i}}\zeta_{i}\right)\right]\frac{\a_{2}}{\g_{2}^{'}}-\left[\int d_{i2}d\P\left(\rest{d_{i}}\zeta_{i}\right)\right]\frac{\a_{1}}{\g_{1}^{'}}\right)\left(1,-1\right)\left(\hat{\g}-\g\right)d\P\zeta_{i}\\
= & \int-\ol z_{i}\left(\l_{1}\left(\zeta_{i}\right)\frac{\a_{2}}{\g_{2}^{'}}-\l_{2}\left(\zeta_{i}\right)\frac{\a_{1}}{\g_{1}^{'}}\right)\left(1,-1\right)\left(\hat{\g}-\g\right)d\P\zeta_{i}
\end{align*}
By Proposition 4 of \citet*{newey1994asymptotic}, with
\[
\varphi\left(w_{i}\right):=-\left(\l_{1}\frac{\a_{2}\ol z_{i}}{\g_{2}^{'}}-\l_{2}\frac{\a_{1}\ol z_{i}}{\g_{1}^{'}}\right)\left(d_{i1}-d_{i2}\right)
\]
we have
\[
\ol z_{i}\left(\l_{1}\frac{\a_{2}}{\g_{2}^{'}}-\l_{2}\frac{\a_{1}}{\g_{1}^{'}}\right)\left(1,-1\right)\left(\begin{array}{c}
d_{i1}\left(y_{i}-\g_{1}\left(x_{i1}\right)\right)\\
d_{i2}\left(y_{i}-\g_{2}\left(x_{i2}\right)\right)
\end{array}\right)\equiv\varphi\left(w_{i}\right)\ol z_{i}\e_{i},
\]
and by Assumption \ref{assu:AsymLin}
\begin{align*}
\int G\left(w,\hat{\g}-\g\right)d\P\left(w\right) & =\frac{1}{N}\sum_{i=1}^{N}\varphi\left(w_{i}\right)\ol z_{i}\e_{i}+o_{p}\left(\frac{1}{\sqrt{N}}\right).
\end{align*}
Hence, Lemma 5.4 of \citet{newey1994asymptotic},
\begin{align*}
\frac{1}{\sqrt{N}}\sum_{i=1}^{N}g\left(w_{i},\hat{\g}\right) & =\frac{1}{\sqrt{N}}\sum_{i=1}^{N}\left[g\left(w_{i},\g\right)+\varphi\left(w_{i}\right)\ol z_{i}\e_{i}\right]+o_{p}\left(1\right)\dto\cN\left({\bf 0},\O\right),
\end{align*}
where
\begin{align*}
\O:= & \text{Var}\left[g\left(w_{i},\g\right)+\varphi\left(w_{i}\right)\ol z_{i}\e_{i}\right]\\
= & \E\left[\ol z_{i}\ol z_{i}^{'}\left(u_{i}+\left[1+\varphi\left(w_{i}\right)\right]\e_{i}\right)^{2}\right]=\E\left[\ol z_{i}\ol z_{i}^{'}\left(u_{i}^{2}+\left[1+\varphi\left(w_{i}\right)\right]^{2}\e_{i}^{2}\right)\right]
\end{align*}

Lastly, by Lemma \ref{lem:bound_inv}
\begin{align*}
\left|\frac{1}{n}\sum_{i=1}^{n}\ol z{}_{i}\left(\hat{x}_{i1}-x_{i1}\right)\right| & \leq\frac{1}{n}\sum_{i=1}^{n}\left|\ol z{}_{i}\right|\left|\hat{x}_{i1}-x_{i1}\right|\leq O_{p}\left(a_{n}\right)\cd\frac{1}{n}\sum_{i=1}^{n}\left|\ol z{}_{i}\right|=O_{p}\left(a_{n}\right)=o_{p}\left(1\right)
\end{align*}
and thus
\begin{align*}
\frac{1}{N}\sum_{i=1}^{N}\ol z_{i}\tilde{x}_{i}^{'} & =\E\left[\ol z_{i}\ol x_{i}^{'}\right]+\frac{1}{N}\sum_{i=1}^{N}\ol z_{i}\left(\tilde{x}_{i}-x_{i}\right)^{'}+\frac{1}{N}\sum_{i=1}^{N}\left(\ol z_{i}x_{i}^{'}-\E\left[\ol z_{i}x_{i}^{'}\right]\right)\\
 & =\E\left[\ol z_{i}\ol x_{i}^{'}\right]+O_{p}\left(a_{N}\right)+O_{p}\left(\frac{1}{\sqrt{N}}\right)\pto\Sigma_{zx}:=\E\left[\ol z_{i}\ol x_{i}^{'}\right].
\end{align*}
Hence,
\begin{align*}
\sqrt{N}\left(\hat{\a}-\a\right)=\left(\frac{1}{N}\sum_{i=1}^{N}\ol z_{i}\tilde{x}_{i}\right)^{-1}\frac{1}{\sqrt{N}}\sum_{i=1}^{N}g\left(w_{i},\hat{\g}\right) & \dto\cN\left({\bf 0},\Sigma_{zx}^{-1}\O\Sigma_{zx}^{'-1}\right).
\end{align*}
\end{proof}

\subsection{\label{subsec:Pf_asym_qbar}Proof of Theorem \ref{thm:AsymDist}(ii)}
\begin{proof}
We adapt the proof of Theorem \ref{thm:AsymDist}(i) above with
\begin{align*}
g^{*}\left(w,\hat{\a},\hat{\g}\right):= & \ol z_{i}\left(\hat{\g}_{j}\left(x_{ij}\right)-\hat{\a}_{0}-\hat{\a}_{j}x_{ij}-\hat{\a}_{k}\hat{\g}_{k}^{-1}\left(\hat{\g}_{j}\left(x_{ij}\right)\right)\right),\\
g^{*}\left(w,\hat{\g}\right):= & \ol z_{i}\left(\hat{\g}_{j}\left(x_{ij}\right)-\a_{0}-\a_{j}x_{ij}-\a_{k}\hat{\g}_{k}^{-1}\left(\hat{\g}_{j}\left(x_{ij}\right)\right)\right).
\end{align*}
with $\E\left[g^{*}\left(w_{i},\g\right)\right]=\E\left[\ol z_{i}\left(\g_{j}\left(x_{ij}\right)-\a_{0}-\a_{j}x_{ij}-\a_{k}\g_{k}^{-1}\left(\g_{j}\left(x_{ij}\right)\right)\right)\right]=\E\left[\ol z_{i}u_{i}\right]={\bf 0}$
and $\frac{1}{N}\sum_{i=1}^{N}g\left(z,\hat{\a}^{*},\hat{\g}\right)={\bf 0}$.

By the chain rule,
\begin{align*}
G^{*}\left(w_{i},\tau\right):= & \Dif_{\g}g^{*}\left(w_{i},\g\right)\left[\hat{\g}-\g\right]\\
= & \ol z_{i}\left(\left[\hat{\g}_{j}\left(x_{ij}\right)-\g_{j}\left(x_{ij}\right)\right]-\a_{k}\Dif_{\g}\left(\g_{k}^{-1}\circ\g_{j}\right)\left[\hat{\g}-\g\right]\right)\\
= & \ol z_{i}\left(1-\frac{\a_{k}}{\g_{k}^{'}\left(x_{ik}\right)}\right)\left[\hat{\g}_{j}\left(x_{ij}\right)-\g_{j}\left(x_{ij}\right)\right]-\ol z_{i}\frac{\a_{k}}{\g_{k}^{'}\left(x_{ik}\right)}\left[\hat{\g}_{k}\left(x_{ik}\right)-\g_{k}\left(x_{ik}\right)\right]\\
= & \ol z_{i}\left[d_{i1}\left(1-\frac{\a_{2}}{\g_{2}^{'}},-\frac{\a_{2}}{\g_{2}^{'}}\right)+d_{i2}\left(-\frac{\a_{1}}{\g_{1}^{'}},1-\frac{\a_{1}}{\g_{1}^{'}}\right)\right]\left(\hat{\g}-\g\right)
\end{align*}
and
\begin{align*}
\int G\left(w_{i},\hat{\g}-\g\right)\P w_{i}= & \int\ol z_{i}\left(\l_{1}\left(1-\frac{\a_{2}}{\g_{2}^{'}}\right)+\l_{2}\frac{\a_{1}}{\g_{1}^{'}},\ \l_{1}\frac{\a_{2}}{\g_{2}^{'}}+\l_{2}\left(1-\frac{\a_{1}}{\g_{1}^{'}}\right)\right)\left(\hat{\g}-\g\right)d\P\zeta_{i}
\end{align*}
By Proposition 4 of \citet*{newey1994asymptotic}, with
\[
\varphi^{*}\left(w_{i}\right):=-\left(\l_{1}\left(1-\frac{\a_{2}}{\g_{2}^{'}}\right)+\l_{2}\frac{\a_{1}}{\g_{1}^{'}}\right)d_{i1}+\left(\l_{1}\frac{\a_{2}}{\g_{2}^{'}}+\l_{2}\left(1-\frac{\a_{1}}{\g_{1}^{'}}\right)\right)d_{i2}
\]
we have
\[
\ol z_{i}\left(\l_{1}\left(1-\frac{\a_{2}}{\g_{2}^{'}}\right)+\l_{2}\frac{\a_{1}}{\g_{1}^{'}},\ \l_{1}\frac{\a_{2}}{\g_{2}^{'}}+\l_{2}\left(1-\frac{\a_{1}}{\g_{1}^{'}}\right)\right)\left(\begin{array}{c}
d_{i1}\left(y_{i}-\g_{1}\left(x_{i1}\right)\right)\\
d_{i2}\left(y_{i}-\g_{2}\left(x_{i2}\right)\right)
\end{array}\right)\equiv\varphi^{*}\left(w_{i}\right)\ol z_{i}\e_{i},
\]
and by Assumption \ref{assu:AsymLin}
\begin{align*}
\int G\left(w,\hat{\g}-\g\right)d\P\left(w\right) & =\frac{1}{N}\sum_{i=1}^{N}\varphi^{*}\left(w_{i}\right)\ol z_{i}\e_{i}+o_{p}\left(\frac{1}{\sqrt{N}}\right).
\end{align*}
Hence, we have
\begin{align*}
\frac{1}{\sqrt{N}}\sum_{i=1}^{N}g^{*}\left(w_{i},\hat{\g}\right) & =\frac{1}{\sqrt{N}}\sum_{i=1}^{N}\left[g^{*}\left(w_{i},\g\right)+\varphi^{*}\left(w_{i}\right)\ol z_{i}\right]+o_{p}\left(1\right)\dto\cN\left({\bf 0},\O^{*}\right),
\end{align*}
where
\begin{align*}
\O & :=\text{Var}\left[g^{*}\left(w_{i},\g\right)+\d^{*}\left(z_{i}\right)\right]=\E\left[\ol z_{i}\ol z_{i}^{'}\left(u_{i}^{2}+\varphi^{*}\left(w_{i}\right)^{2}\e_{i}^{2}\right)\right],
\end{align*}
giving
\begin{align*}
\sqrt{N}\left(\hat{\a}-\a\right)=\left(\frac{1}{N}\sum_{i=1}^{N}\ol z_{i}\tilde{x}_{i}\right)^{-1}\frac{1}{\sqrt{N}}\sum_{i=1}^{N}g^{*}\left(w_{i},\hat{\g}\right) & \dto\cN\left({\bf 0},\Sigma_{zx}^{-1}\O^{*}\Sigma_{zx}^{'-1}\right).
\end{align*}
\end{proof}

\subsection{\label{subsec:pf_StarBetter}Proof of Theorem \ref{thm:StarBetter}}
\begin{proof}
By \eqref{eq:gamma1_detail}, we have
\[
\frac{\p}{\p c}\g_{j}\left(c;z\right)=\a_{j}+\a_{k}x_{k}^{'}\frac{1}{x_{j}^{'}}+\frac{1}{x_{j}^{'}}>\a_{j},
\]
and thus $0<\a_{j}/\g_{j}^{'}<1$, which implies
\[
\l_{1}\left(1-\frac{\a_{2}}{\g_{2}^{'}}\right)+\l_{2}\frac{\a_{1}}{\g_{1}^{'}}>0,\quad\l_{2}\left(1-\frac{\a_{1}}{\g_{1}^{'}}\right)+\l_{1}\frac{\a_{2}}{\g_{2}^{'}}>0.
\]
Hence,
\begin{align*}
\varphi^{*} & =\left(\l_{1}\left(1-\frac{\a_{2}}{\g_{2}^{'}}\right)+\l_{2}\frac{\a_{1}}{\g_{1}^{'}}\right)d_{i1}+\left(\l_{2}\left(1-\frac{\a_{1}}{\g_{1}^{'}}\right)+\l_{1}\frac{\a_{2}}{\g_{2}^{'}}\right)d_{i2}>0\\
1+\varphi & =1-\left(\frac{\a_{2}}{\g_{2}^{'}}\l_{1}-\frac{\a_{1}}{\g_{1}^{'}}\l_{2}\right)\left(d_{i1}-d_{i2}\right)\\
 & =\left(1-\l_{1}\frac{\a_{2}}{\g_{2}^{'}}+\l_{2}\frac{\a_{1}}{\g_{1}^{'}}\right)d_{i1}+\left(1-\l_{2}\frac{\a_{1}}{\g_{1}^{'}}+\l_{1}\frac{\a_{2}}{\g_{2}^{'}}\right)d_{i2}\\
 & =\varphi^{*}+\left(1-\l_{1}\right)d_{i1}+\left(1-\l_{2}\right)d_{i2}\\
 & >\varphi^{*}>0.
\end{align*}
Hence, $\left(1+\varphi\right)^{2}>\varphi^{*2}>0$ and
\begin{align*}
\O-\O^{*} & =\E\left[\ol z_{i}\ol z_{i}^{'}\left[\left(1-\varphi\left(x_{i},d_{i}\right)\right)^{2}-\varphi^{*}\left(x_{i},d_{i}\right)^{2}\right]\e_{i}^{2}\right]
\end{align*}
is positive definite. Therefore, $\Sigma-\Sigma^{*}$ is also positive
definite and $\hat{\a}^{*}$ is asymptotically more efficient than
$\hat{\a}$.
\end{proof}

\subsection{Proof of Theorem \ref{thm:Kernel}}
\begin{proof}
Assumption \ref{assu:KernelConds}(i) guarantees that $N_{1}\sim N_{2}\sim N$
so that
\[
\norm{\hat{\g}_{1}-\g_{1}}_{\infty}\sim\norm{\hat{\g}_{2}-\g_{2}}_{\infty}=O_{p}\left(a_{N}\right)
\]
where, by Assumption \ref{assu:KernelConds}(ii)-(v) and Theorem 8
of \citet*{hansen2008uniform},
\[
a_{N}=b^{p}+\frac{\sqrt{\log N}}{\sqrt{Nb^{3}}}.
\]
With $b$ chosen according to Assumption \ref{assu:KernelConds}(vi)
so that $\frac{\sqrt{\log N}}{\sqrt{Nb^{3}}}=o\left(N^{-\frac{1}{4}}\right)$
and $\sqrt{N}b^{p}\to0$, implying that
\[
a_{N}=o\left(N^{-\frac{1}{2}}\right)+o\left(N^{-\frac{1}{4}}\right)=o\left(N^{-\frac{1}{4}}\right),
\]
verifying Assumption \ref{assu:SP_RegCon}(ii). Assumption \ref{assu:AsymLin}
(and consequently Theorem \ref{thm:Kernel}) follows from Theorem
8.11 of \citet*{newey1994large}.
\end{proof}

\section{{\normalsize{}\label{sec:MCPartialWage}}Monte Carlo with Partially
Latent Wages}

In this section, we consider the case in which the wage for type $j$
is observed only when we observe the input for type $j$. This data
structure can typically be observed when we have individual-level
survey data where each individual reports his/her own inputs and wages.
Specifically, we have that:
\begin{align}
(z_{i1},z_{i2}) & =\begin{cases}
(z_{i1},missing)\qquad\text{if }x_{i1}\text{ is observed}\\
(missing,z_{i2})\qquad\text{if }x_{i2}\text{ is observed}
\end{cases}
\end{align}
Since we need to impute missing wages, we assume that wages differ
across, but not within local labor markets. Let $m(i)$ denote the
local market that firm $i$ is active in. If we observe enough firms
in a local market than we can treat both wages as observed. Below
we consider the case in which we only observe one firm per local labor
market, but wages can be expressed as functions of some demand shifters
$D_{m}\in\mathbb{R}^{2}$ for the local labor market $m$ and a random
error $\eta_{i}$ which is assumed to be independent from the demand
shifters. Note that this specification allows for correlation between
$z_{1m(i)}$ and $z_{2m(i)}$ through $D_{m}$. Specifically, we simulate
wages as follows:
\begin{align}
z_{i1}= & \kappa_{1}D_{m}+\eta_{i1}\\
z_{i2}= & \kappa_{2}D_{m}+\eta_{i2}
\end{align}

The demand shifters are drawn randomly from a bivariate normal distribution
with a mean of $\mu_{D}=(2.4,2.1)'$ and a variance of $\Sigma_{D}=(0.05,0;0,0.02)$.
And the parameter values are set to be $\kappa_{1}=(1.1,0.3)$ and
$\kappa_{2}=(0.1,0.9)$. All other parameter values vary across the
specifications reported in Table \ref{ParamSpec}.

To impute the missing wages, we regress the observed wages $(z_{i1},z_{i2})$
on the demand shifters ($D_{m}$). Using estimated parameters from
the regression, we then impute the missing local labor market wages.

\begin{table}[htbp]
\caption{Monte Carlo: Small Markets with Partially Latent Wages }
\label{ImputeWage}\centering %
\begin{tabular}{cccccccc}
\toprule
 & Number of  & Number of  &  & \multicolumn{2}{c}{Standard SLS} & \multicolumn{2}{c}{Matched TSLS}\tabularnewline
\cmidrule(lr){5-6} \cmidrule(lr){7-8} Param  & markets  & firms  & Spec  & Bias  & RMSE  & Bias  & RMSE \tabularnewline
\midrule
$\alpha_{0}$  & 500  & 1  & 1  & -0.002  & 0.002  & -0.002  & 0.002 \tabularnewline
$\alpha_{0}$  & 500  & 1  & 2  & -0.010  & 0.006  & -0.010  & 0.006 \tabularnewline
$\alpha_{0}$  & 500  & 1  & 3  & -0.002  & 0.002  & -0.002  & 0.002 \tabularnewline
$\alpha_{0}$  & 500  & 1  & 4  & -0.001  & 0.002  & -0.000  & 0.003 \tabularnewline
\midrule
$\alpha_{1}$  & 500  & 1  & 1  & 0.001  & 0.014  & 0.003  & 0.016 \tabularnewline
$\alpha_{1}$  & 500  & 1  & 2  & 0.007  & 0.049  & 0.010  & 0.056 \tabularnewline
$\alpha_{1}$  & 500  & 1  & 3  & 0.001  & 0.014  & 0.003  & 0.016 \tabularnewline
$\alpha_{1}$  & 500  & 1  & 4  & 0.001  & 0.022  & 0.009  & 0.033 \tabularnewline
\midrule
$\alpha_{2}$  & 500  & 1  & 1  & -0.007  & 0.019  & -0.007  & 0.021 \tabularnewline
$\alpha_{2}$  & 500  & 1  & 2  & -0.028  & 0.067  & -0.030  & 0.077 \tabularnewline
$\alpha_{2}$  & 500  & 1  & 3  & -0.007  & 0.019  & -0.007  & 0.021 \tabularnewline
$\alpha_{2}$  & 500  & 1  & 4  & -0.006  & 0.030  & -0.013  & 0.043 \tabularnewline
\bottomrule
\end{tabular}
\end{table}

Table \ref{ImputeWage} summarizes the performance of our new estimator
together with TSLS estimator. Even if we have a relatively large variance
of the measurement errors, such as in Specification 4, our new estimator
performs reasonably well.

\section{Applications outside of Industrial Organization}

In this appendix we discuss that the partially latent covariate problem can also arise
in the study of intergenerational transmission of human capital, wealth
or attitudes. It is rather common in many data sets that we do not sample all relevant
household members. We then illustrate these ideas and estimate the achievement
function of children with and without divorced parents.

\subsection{Discussion}

There are a number of surveys that have a sampling design that gives
rise to partially latent covariates. A prototypical survey is the
Panel Study of Income Dynamics (PSID). The survey starts with an initial
set of respondents that is representative of the U.S. population in
1968. The initial survey collects the answers to a large number of
questions about the respondent's background and financial circumstances,
including family information (married or single, number of children,
etc.) Importantly, all children born to a PSID respondent join the
PSID sample and are included in future waves. As a result, the PSID
data set has substantial information about individuals with the ``PSID
gene'', that is, individuals who are direct descendants of the initial
cohort. These individuals typically marry individuals who lack the
PSID gene, and consequently, much more is known about the individual
with the gene than about the individual without the gene. Hence, the
PSID data set has, by construction, a large number of couples for
which many variables are partially latent. For example, if we wanted
to know whether parental teenage employment affects offspring education
attainment, we have the teenage employment for one parent but not
the other (unless, of course, the matched couple each had the gene.)
The inclusion of the CDS sample amplifies the effect: for the youngest
children, we have information about their grandparents with the PSID
seed, but nothing about the other set of grandparents.

More generally, the partially latent covariate problem arises in studies
of intergenerational outcomes, such as the transmission of wealth
or attitudes. For example, researchers have been interested in inter
vivos gifts. It is common for parents, while still alive, to give
money to their children, often to help with a down payment on a house
or to reduce taxes the parents will pay. When a couple makes a gift
to their married child, however, they risk that the child divorces
and a portion of the gift will accrue to the child's spouse. The concern
is real since approximately 40\% of marriages in the U.S. end in divorce.
A natural question is how well parents can predict how long a child's
marriage will last at the time they contemplate making a gift. One
could address this question with a data set that includes inter vivos
gifts from parents to married children and, in addition, how long
the child's marriage survives. Such data sets exist, for example the
PSID, which documents these for family lines that stretch over a half
century. In this application, we know the inter vivos gifts to the
couple from the parents of the PSID gene child but not inter vivos
gifts to the couple from the spouse's parents. It is straightforward
to write down a non-cooperative model of intergenerational transfer,
where the transfers of both sets of parents are monotonically increasing
in the estimated probability that the marriage survives. Hence, the
model satisfies our monotonicity assumption.

Similar data structures arise in the National Longitudinal Survey
of Youth, American Community Survey, Health and Retirement Study, and
the American Time Use Survey. The bottom line is that for a large
set of questions relating to family economics and the intergenerational
transmission of wealth, human capital, and attitudes, the data one
might use is often survey data that often very naturally has partially
latent variables. The methods developed in this paper are, therefore,
central to improving our understanding intergenerational mobility
and intergenerational poverty.\footnote{The NLSY, HRS and ATS contain more information about the initial respondents
than their spouses. The ACS collects more data about the person filling
out the questionnaire than the spouse.} This is not to say that the techniques in our paper will necessarily
solve all such latent variable issues. However, we conjecture that
the methods developed in this paper should provide useful insights
into the study of a variety of questions related to intergenerational
linkages.

\subsection{An Application: Achievement Functions}

To illustrate these ideas, we consider, the Child Development Supplement
of the PSID to study the transmission of human capital from parents
to children. If a child grows up in a traditional family, it is likely
that we observe inputs from both mother and father. However, traditional
family arrangements have become less common during the past decades
and thus the focus has shifted on evaluating the impact of growing
up in less traditional environments. If a child grows up in a non-traditional
family one of the two parents' inputs is often missing.

Our data is based on the four available waves of the Child Development
Supplement (CDS). These are the cohorts interviewed in 1997, 2002,
2007, and 2014.\footnote{The CDS 1997 cohort consists of up to 12-year-old children and follows
them for 3 waves (1997, 2001, 2007). The CDS 2014 cohort consists
of children that were up to 17 years old in 2013.} For these children, we have detailed time usage information of their
parents on two days, each of which is randomly selected among weekdays
and weekends, respectively. Based on this time diary information we
can construct time inputs for mothers and fathers.\footnote{We exclude families with stepmother and stepfather from our sample.}
The CDS can be linked to the original PSID survey using the family
ID. Hence, we have detailed parental information such as education
level, household income, and the number of children.

The CDS collects multiple measures of child development including
both cognitive and non-cognitive skills. We focus on two important
cognitive tests. First, we study the passage comprehension test which
assesses reading comprehension and vocabulary among children aged
between 6 and 17. Second, we analyze the applied problems test which
assesses mathematics reasoning, achievement, and knowledge for children
aged between 6 and 17.\footnote{We also analyzed the letter word test which assesses symbolic learning
and reading identification skills. There are also two non-cognitive
measures. The externalizing behavioral problem index measures disruptive,
aggressive, or destructive behavior. The internalizing behavioral
problem index measures expressions of withdrawn, sad, fearful, or
anxious feelings.}

Here we assume that a child's achievement $y_{i}$ is a function of
the mother's and the father's time inputs, denoted by $x_{im}$ and
$x_{if}$. Again, we consider a log-linear Cobb-Douglas specification
given by
\begin{eqnarray}
y_{i} & = & \alpha_{i}\;+\;\alpha_{m}\;x_{im}\;+\;\alpha_{f}\;x_{if}\;+\;u_{i}\label{ed_prod}
\end{eqnarray}
where heterogeneity in the intercept is given by:
\begin{eqnarray}
\alpha_{i} & = & x_{i}^{\prime}\;\alpha_{0}\label{ed_het}
\end{eqnarray}
Hence, we assume that the baseline marginal product of inputs $\alpha_{i}$ varies
with family characteristics, such as family income. As before, we
can estimate the education production function using TSLS with wages
as instruments for inputs as well as our ``matched'' TSLS estimator
if some inputs are partially latent.

We begin by estimating an achievement function using the subsample
of children who live in married households. Hence, we observe the
mother's and the father's inputs in the data set. We observe 3,236
children with complete inputs and applied problem scores as well as
2,789 children with complete inputs and reading comprehension scores.
Table \ref{ed_summarystat} provides descriptive statistics of the
main variables in our sample.

\begin{table}[htbp]
\caption{Summary Statistics of CDS Sample}
\label{ed_summarystat}\centering{}%
\begin{tabular}{lcc}
\toprule
 & Married Sample  & Divorced Sample \tabularnewline
\midrule
Applied Problem Score (Standardized)  & 107.58  & 101.28 \tabularnewline
 & (16.63)  & (16.92) \tabularnewline
Passage Comprehension Score (Standardized)  & 105.89  & 99.48 \tabularnewline
 & (14.77)  & (14.49) \tabularnewline
\midrule
Mother's Time Input  & 20.77  & 15.18 \tabularnewline
 & (14.32)  & (14.06) \tabularnewline
Father's Time Input  & 13.87  & 4.34 \tabularnewline
 & (11.96)  & (13.81) \tabularnewline
\midrule
Total Number of Child In Family  & 2.17  & 2.1 \tabularnewline
 & (0.9)  & (0.9) \tabularnewline
Child's Age At Interview  & 9.68  & 11.37 \tabularnewline
 & (4.74)  & (4.44) \tabularnewline
Total Household Labor Income (in 2011 Dollar)  & 68941  & 24158 \tabularnewline
 & (55732)  & (28616) \tabularnewline
Mother's Age  & 37.05  & 37.3 \tabularnewline
 & (7.27)  & (6.85) \tabularnewline
Father's Age  & 39.1  & 38.81 \tabularnewline
 & (7.7)  & (8.8) \tabularnewline
Mother's Years of Education  & 13.51  & 12.92 \tabularnewline
 & (2.57)  & (1.97) \tabularnewline
Father's Years of Education  & 13.38  & 12.97 \tabularnewline
 & (3.21)  & (1.9) \tabularnewline
Living With Mother  & -  & 0.88 \tabularnewline
\bottomrule
 &  & \tabularnewline
\end{tabular}
\end{table}

We can estimate the model using the traditional TSLS estimator. We
compare these estimates with our matched TSLS which is based on a
sample in which we randomly omit one of the two inputs. This exercise
allows us to compare the performance of both estimators when there
is no latent input problem. We restrict our attention to married couples
with both spouses living together. We exclude families with more than
5 children. As instruments for time inputs we use education, employment
status, hourly wage, age of children. To preserve the representativeness
of our sample, we use the child-level survey weight for all analyses.
Household labor income is measured in 10,000 dollars. Table \ref{ed_prod_mar}
summarizes our findings.

\begin{table}[htbp]
\caption{Education Production Function: Married Sample}
\label{ed_prod_mar}\centering{} %
\begin{tabular}{lcccc}
\toprule
 & \multicolumn{2}{c}{Applied Problems} & \multicolumn{2}{c}{Passage Comprehension}\tabularnewline
 & TSLS  & matched TSLS  & TSLS  & matched TSLS \tabularnewline
\midrule
Mom Hour  & 0.016  & 0.027  & 0.100  & 0.098 \tabularnewline
 & (0.008)  & (0.002)  & (0.012)  & (0.033) \tabularnewline
Dad Hour  & 0.032  & 0.021  & 0.017  & 0.006 \tabularnewline
 & (0.007)  & (0.007)  & (0.009)  & (0.040) \tabularnewline
\midrule
Num Child = 2  & $-$0.011  & 0.034  & $-$0.051  & $-$0.097 \tabularnewline
 & (0.008)  & (0.020)  & (0.013)  & (0.150) \tabularnewline
Num Child = 3+  & 0.008  & 0.077  & $-$0.030  & $-$0.059 \tabularnewline
 & (0.009)  & (0.026)  & (0.014)  & (0.152) \tabularnewline
Household Labor Inc  & 0.008  & 0.006  & 0.010  & 0.009 \tabularnewline
 & (0.001)  & (0.002)  & (0.001)  & (0.017) \tabularnewline
Constant  & 4.510  & 4.484  & 4.321  & 4.380 \tabularnewline
 & (0.017)  & (0.026)  & (0.026)  & (0.223) \tabularnewline
\midrule
Nobs  & 3,236  & 3,236  & 2,789  & 2,789 \tabularnewline
First-stage F for $x_{m}$  & 61.997  & 127.295  & 41.812  & 58.530 \tabularnewline
First-stage F for $x_{f}$  & 62.636  & 117.966 &  & \tabularnewline
\bottomrule
\end{tabular}
\end{table}

Overall, our empirical findings are reasonable. We find that investments
in child quality decrease with the number of children in the family
and increase with household income, as expected. Both parental time
inputs are positive and typically statistically significant and economically
meaningful. Comparing the TSLS with our matched TSLS estimator, we
find that the results are remarkably similar, especially for the passage
comprehension test. The results for the applied problem test are also
encouraging although the differences in the estimates are slightly
larger. Qualitatively, we reach the same conclusions with both estimators.
We thus conclude that our matched TSLS performs well in this sample.

Next, we consider the subsample that consists of households that self-reported
to be either divorced or separated. We exclude single households for
obvious reasons. In all households in this sample one of the parents
is not living in the child's household. We typically do not observe
time inputs for these divorced parents. For the applied problem (passage
comprehension) score we observe 785 (723) children with the mother's
input. There are 103 (92) observations where we have the father's
input, which we use for imputation purposes. Again, we use reported hourly wages for the observed parent and imputed hourly wages for the parent who are not in the household. Also, we use a few other variables as instruments such as education levels, years of work experience, and the employment status interacted with the wages of both parents. We acknowledge that instruments such as wages, employment status or work experience may be problematic since they may also be correlated with unobserved variables affecting child care decisions. We impute these instruments from observed characteristics of the household for the spouse whose input is latent in the data.\footnote{Note that the standard TSLS is no longer feasible in this subsample because of the latent variable problem.}  Table \ref{ed_prod_div} summarizes
our findings.

\begin{table}
\caption{Education Production Function: Divorced Sample}
\label{ed_prod_div}\centering{}%
\begin{tabular}{lcc}
\toprule
 & \multicolumn{1}{c}{Applied Problems} & \multicolumn{1}{c}{Passage Comprehension}\tabularnewline
 & matched TSLS  & matched TSLS \tabularnewline
\midrule
Mom Hour  & 0.050  & 0.037 \tabularnewline
 & (0.028)  & (0.015) \tabularnewline
Dad Hour  & 0.010  & 0.001 \tabularnewline
 & (0.013)  & (0.003) \tabularnewline
\midrule
Num Child = 2  & 0.051  & 0.019 \tabularnewline
 & (0.055)  & (0.039) \tabularnewline
Num Child = 3+  & 0.002  & $-$0.015 \tabularnewline
 & (0.056)  & (0.066) \tabularnewline
Household Labor Inc  & $-$0.013  & $-$0.006 \tabularnewline
 & (0.016)  & (0.004) \tabularnewline
Constant  & 4.548  & 4.529 \tabularnewline
 & (0.078)  & (0.061) \tabularnewline
\midrule
Nobs  & 785  & 723 \tabularnewline
First-stage F for $x_{m}$  & 40.532  & 35.264 \tabularnewline
First-stage F for $x_{f}$ &  & \tabularnewline
\bottomrule
\end{tabular}
\end{table}

Table \ref{ed_prod_div} shows that the time inputs for mothers are
positive, statistically significant, and economically meaningful.
Moreover, the point estimates for the applied problem test are similar
to the ones we obtained for the married sample reported in Table \ref{ed_prod_mar}.
The main difference is that mother's time inputs are slightly less
productive for children from divorced families, and father's time
inputs are not statistically different from zero. In summary, our
estimator works well in this application and yields plausible and
accurate point estimates for most coefficients of interest. Most importantly,
we find that the inputs of divorced fathers into the skill formation
function of their children are negligible.

\end{document}